\documentclass[12pt]{amsart}
\usepackage{txfonts}      
\usepackage{amssymb}
\usepackage{eucal}
\usepackage{amsmath}
\usepackage{amscd}
\usepackage[dvips]{color}
\usepackage{multicol}
\usepackage[all]{xy}           
\usepackage{graphicx}
\usepackage{color}
\usepackage{colordvi}
\usepackage{xspace}
\usepackage{tikz}
\usepackage{makecell}
\usepackage{slashed}

\usepackage{ifpdf}
\ifpdf
\usepackage[colorlinks,final,backref=page,hyperindex]{hyperref}
\else
\usepackage[colorlinks,final,backref=page,hyperindex,hypertex]{hyperref}
\fi

\usepackage[active]{srcltx} 

\usepackage{bbding}
\usepackage[shortlabels]{enumitem}
\usepackage{tikz-cd}




\topmargin -.8cm \textheight 22.8cm \oddsidemargin 0cm \evensidemargin -0cm \textwidth 16.3cm

\makeatletter

\newtheorem{thm}{Theorem}[section]
\newtheorem{lem}[thm]{Lemma}
\newtheorem{cor}[thm]{Corollary}
\newtheorem{pro}[thm]{Proposition}
\theoremstyle{definition}   
\newtheorem{defi}[thm]{Definition}

\newtheorem{ex}[thm]{Example}
\newtheorem{rmk}[thm]{Remark}

\newcommand{\nc}{\newcommand}
\newcommand{\delete}[1]{}
\nc{\eval}[1]{\Big|_{#1}}

\nc{\mlabel}[1]{\label{#1}}  
\nc{\mcite}[1]{\cite{#1}}  
\nc{\mref}[1]{\ref{#1}}  
\nc{\meqref}[1]{\eqref{#1}}  
\nc{\mbibitem}[1]{\bibitem{#1}} 

\delete{
\nc{\mlabel}[1]{\label{#1}{\hfill \hspace{1cm}{\tt{{\ }\hfill(#1)}}}}
\nc{\mcite}[1]{\cite{#1}{{\tt{{\ }(#1)}}}}  
\nc{\mref}[1]{\ref{#1}{{\tt{{\ }(#1)}}}}  
\nc{\meqref}[1]{\eqref{#1}{{\tt{{\ }(#1)}}}}  
\nc{\mbibitem}[1]{\bibitem[\tt #1]{#1}} 
}

\setlength{\baselineskip}{1.8\baselineskip}

\newcommand {\emptycomment}[1]{}

\nc{\vep}{\varepsilon}

\newcommand{\one}{{\mathbf{1}}}

\nc{\oprn}{\theta}
\nc{\Oprn}{\Theta}

\nc{\tforall}{\ \ \text{for all }}

\nc{\calo}{\mathcal{O}}
\nc{\oop}{$\mathcal{O}$-operator\xspace}
\nc{\oops}{$\mathcal{O}$-operators\xspace}
\nc{\mrho}{{\bm{\varrho}}}
\nc{\emk}{\mathbf{K}}
\nc{\invlim}{\displaystyle{\lim_{\longleftarrow}}\,}
\nc{\ot}{\otimes}

\newcommand{\B }{\mathfrak{B}}
\newcommand{\lon }{\,\rightarrow\,}
\newcommand{\be }{\begin{equation}}
\newcommand{\ee }{\end{equation}}

\newcommand{\g}{\mathfrak g}

\newcommand{\G}{{\mathbb G}}
\newcommand{\A}{{\mathbb A}}
\newcommand{\BB}{{\mathbb B}}
\newcommand{\RR}{{\mathbb R}}

\newcommand{\KK}{{\mathbb K}}
\newcommand{\HH}{\mathbb H}

\newcommand{\huaB}{\mathcal{B}}

\newcommand{\huaF}{\mathcal{F}}

\newcommand{\huaU}{\mathcal{U}}

\newcommand{\huaP}{\mathcal{P}}
\nc{\calp}{\mathcal{P}}

\newcommand{\CWM}{C^{\infty}(M)}

\newcommand{\frka}{\mathfrak a}

\newcommand{\frkd}{\mathfrak d}

\newcommand{\frkX}{\mathfrak X}

\newcommand{\Courant}[1]{\left\llbracket  #1\right\rrbracket }

\newcommand{\dev}{\mathfrak{D}}

\newcommand{\Id}{{\rm{Id}}}

\newcommand{\br}[1]{   [ \cdot,    \cdot  ]   }

\newcommand{\dM}{\mathrm{d}}

\newcommand{\Der}{\mathrm{Der}}
\newcommand{\Lie}{\mathrm{Lie}}
\newcommand{\Diff}{\mathrm{Diff}}

\newcommand{\PG}{\mathsf{PGpd}}

\newcommand{\FF}{\mathsf{F}}
\newcommand{\GG}{\mathsf{G}}

\newcommand{\BS}{\mathsf{NBQ}}
\nc{\NBS}{\mathsf{NBS}}
\newcommand{\BG}{\mathsf{BGpd}}

\newcommand{\SLB}{\mathsf{SLB}}

\newcommand{\Aut}{\mathrm{Aut}}

\newcommand{\gl}{\mathfrak {gl}}

\newcommand{\Quiv}{\mathrm {Quiv}}

\newcommand{\inv}{{\mathrm{inv}}}

\newcommand{\T}{\mathbb{T}}

\nc{\CV}{\mathbf{C}}

\begin{document}

\title[Post-groupoids and the Yang-Baxter equation]{Post-groupoids and   quiver-theoretical solutions
of the Yang-Baxter equation}

\author{Yunhe Sheng}
\address{Department of Mathematics, Jilin University, Changchun 130012, Jilin, China}
\email{shengyh@jlu.edu.cn}

\author{Rong Tang}
\address{Department of Mathematics, Jilin University, Changchun 130012, Jilin, China}
\email{tangrong@jlu.edu.cn}

\author{Chenchang Zhu}
\address{Mathematics Institute, Georg-August-University Gottingen, Bunsenstrasse 3-5 37073, Gottingen, Germany}
\email{czhu@gwdg.de}


\begin{abstract}
The notion of post-groups was introduced by Bai, Guo and the first two authors recently, which are the global objects corresponding to post-Lie algebras, equivalent to skew-left braces, and can be used to construct set-theoretical solutions of the Yang-Baxter equation.
In this paper, first we introduce the notion of post-groupoids, which consists of a group bundle and some other structures satisfying some compatibility conditions. Post-groupoids reduce to post-groups if the underlying base is one point. An action of a group on a set gives rise to the natural example of post-groupoids. We show that a post-groupoid gives rise to a groupoid (called the Grossman-Larson groupoid), and an action on the original group bundle. Then we introduce the notion of relative Rota-Baxter operators on a groupoid with respect to an action on a group bundle. A relative Rota-Baxter operator naturally gives rise to a post-groupoid and a matched pair of groupoids. One important application of post-groupoids is that they give rise to quiver-theoretical solutions of the Yang-Baxter equation on the quiver underlying the Grossman-Larson groupoid. We also introduce the notion of a skew-left bracoid, which consists of a group bundle and a groupoid satisfying some compatibility conditions. A skew-left bracoid reduces to a skew-left brace if the underlying base is one point. We give the one-to-one correspondence between post-groupoids and skew-left bracoids. Finally, we show that post-Lie groupoids give rise to post-Lie algebroids via differentiation.

\end{abstract}

\renewcommand{\thefootnote}{}
\footnotetext{2020 Mathematics Subject Classification.
22E60, 
16T25, 
17B38, 
65L99, 
}

\keywords{post-groupoid, Yang-Baxter equation, brace, post-Lie algebra, Rota-Baxter operator}

\maketitle

\tableofcontents

\allowdisplaybreaks

\section{Introduction}\mlabel{sec:intr}

\subsection{Post-Lie algebras  (algebroids) and their applications in numerical integration and regularity structures}

The notion of a post-Lie algebra was introduced by Vallette from his study of Koszul duality of operads in \mcite{Val}. Then Bai, Guo and Ni found that post-Lie algebras are the underlying structures of Rota-Baxter operators of weight 1 on Lie algebras, and can be applied to the study of generalized Lax pairs \mcite{BGN}.
Munthe-Kaas and Lundervold found that post-Lie algebras also naturally appear in differential geometry and play important roles in the study of numerical integration on  manifolds \mcite{ML}. Recently, Al-Kaabi, Ebrahimi-Fard,  Manchon and  Munthe-Kaas have used the  post-Lie framework to understand
 the framed Lie algebras and they partially  solved a  conjecture of Gavrilov  \cite{AFMM}. See \mcite{BG,CEO, Dotsenko,EMM, MQS} for more studies on  classifications of post-Lie algebras on certain Lie algebras, post-Lie Magnus expansion, Poincar\'e-Birkhoff-Witt theorems, factorization theorems of post-Lie algebras and post-symmetric brace algebras.

The combinatorial Hopf algebras on decorated trees have played an important role in the Taylor-type expansions of solutions of singular stochastic partial differential equations \mcite{BHZ,Hairer}. In fact, those Hopf algebras are sophisticated and difficult to handle. To better understand them,  Bruned and Katsetsiadis have unified  the construction of the combinatorial Hopf algebras in the theory of regularity structures as  the universal enveloping algebras of  post-Lie algebras on decorated trees and multi-indices \cite{BK}. Moreover, Jacques and Zambotti studied the post-Lie algebras of derivations and gave explicit computations on the structure group of regularity structures \cite{JZ}.

As the geometric generalization of a post-Lie algebra, the notion of a post-Lie algebroid was introduced by Munthe-Kaas and Lundervold in their study of numerical integration \cite{ML}.  Fl$\slashed{\text{o}}$ystad,  Manchon and Munthe-Kaas found that pre-Lie Rinehart algebras (the algebraic version of pre-Lie algebroids) are the algebraic foundations of aromatic B-series \cite{FMM}. Bronasco and Laurent studied the universal enveloping algebroid of the pre-Lie Rinehart algebra of aromatic trees and the  Hopf algebra structures  of ergodic
stochastic differential equations \cite{BL}. 
Since post-Lie algebras are the algebraic structure behind the regularity structures on Euclidean spaces, we are convinced that post-Lie algebroids will appear in the regularity structures on manifolds and vector bundles.
\vspace{-2mm}
\subsection{Set-theoretical and quiver-theoretical solutions of the Yang-Baxter equation, and related structures}
The Yang-Baxter equation grew out of Yang's study of exact
solutions of many-body problems \mcite{Ya} and Baxter's work of
the eight-vertex model \mcite{Bax}.  The equation has important applications in
many fields in mathematics and physics.
 It is very difficult to solve the Yang-Baxter equation, so Drinfeld suggested to study set-theoretical solutions of the Yang-Baxter equation \mcite{Dr}. Then Etingof-Schedler-Soloviev \mcite{ESS} and  Lu-Yan-Zhu \mcite{LYZ} studied the structure groups of such solutions.
It is natural to extend Drinfeld's question and ask for solutions to the   Yang-Baxter equation
in an arbitrary monoidal category. Andruskiewitsch addressed this problem in
the special case of the category of quivers and introduced the concept of  quiver-theoretical solutions of the
Yang-Baxter equation in \mcite{Andruskiewitsch-1}. Moreover, he used   braided groupoids to classify non-degenerate quiver-theoretical solutions of the Yang-Baxter equation, which generalized the result of Soloviev \cite{Soloviev}. Since post-groupoids are equivalent to braided groupoids, it is natural to use post-groupoids and their representations to classify  non-degenerate braided quivers.

As a generalization of radical rings, braces were introduced by Rump~\mcite{Ru} to construct set-theoretical solutions of the Yang-Baxter equation. Further studies were carried out in~\cite{CJO,G,Sm18}.
Recently, braces were generalized to skew-left braces by Guarnieri and Vendramin in \mcite{GV} to construct
non-degenerate and not necessarily involutive  solutions of the Yang-Baxter equation. Moreover, the radical and weight of skew-left braces were applied to study the structure groups of solutions of the Yang-Baxter equation \mcite{JKVV}. The nilpotency   of skew-left braces were investigated in \mcite{CSV}. The  Lazard correspondence for pre-Lie rings and braces has studied by Smoktunowicz \cite{Sm22b}. Under some natural completeness condition, Trappeniers extended the Lazard correspondence to post-Lie rings and skew-left braces \cite{Trappeniers}.

As the global objects corresponding to post-Lie algebras, the notion of post-groups was introduced in \cite{PostG}. The differentiation of a post-Lie group is a post-Lie algebra.  The  Butcher group and the $\huaP$-group of an operad $\huaP$ have  natural post-group structures. Moreover, it was shown that post-groups are equivalent to skew-left braces, and naturally give rise to set-theoretical solutions of the Yang-Baxter equation. In \cite{AFM}, the authors construct free post-groups from left-regular magmas, and observe that one can almost obtain a post group structure on the space of group-valued functions from an action of a  group on a set except the bijective condition, which motivate them to introduce the notion of  a weak post-group by dropping the bijective condition. See \cite{Fard1,Fard2} for more   applications of post-groups in free probability and SISO Affine Feedback Control Systems.

\vspace{-3mm}
\subsection{Rota-Baxter operators on Lie algebras and groups}
The notion of Rota-Baxter operators on associative algebras was introduced by Baxter in \cite{Bax}.  It has been found important applications in the
Connes-Kreimer's algebraic approach to renormalization of quantum field theory~\cite{CK}. See the book \cite{Gub} for more details.
In the Lie algebra context, Rota-Baxter operators (of weight 1) are in one-to-one correspondence  with
solutions of the modified Yang-Baxter equation \cite{STS} and give rise to factorizations of Lie algebras. Moreover, a Rota-Baxter operator naturally gives rise to a post-Lie
algebra, and play important roles in mathematical
physics~\cite{BGN}.

The concepts of Rota-Baxter operators on groups were introduced in \mcite{GLS} in their studies of integration of Rota-Baxter operators on Lie algebras. The differentiation of a Rota-Baxter Lie group give rise to a Rota-Baxter Lie algebra.
Then Bardakov and Gubarev studied the relationship between Rota-Baxter operators on groups and  skew-left braces, and constructed  solutions of the Yang-Baxter equation by means of Rota-Baxter operators on groups \mcite{BG22}. In fact, Rota-Baxter operators on groups naturally give rise to post-groups, while the latter are equivalent to skew-left braces as mentioned at the end of last subsection \cite{PostG}. This explained in another way the construction in \mcite{BG22}.

We summarize these structures and their applications in the following diagram.
\begin{center}
 \hspace{3mm}\xymatrix@R=0.1pc{
 & & \text{quiver-solution of YBE} \\
 &  &\\
 & \text{skew-left brace=post-group} \ar[dd]^{\text{differentiation}}\ar[r]   & \text{set-solution of YBE} \ar[uu] \\
\text{RB operator} \ar[ur] \ar[dr]      & & \text{regularity structure on $\mathbb R^n$} \\
 & \text{post-Lie algebra} \ar[dd]^{\text{geometrization}} \ar[ur] \ar[dr] & \\
 &                                                                                      & \text{numerical integration} \\
 & \text{post-Lie algebroid} \ar[ur] \ar@{-->}[dr]                                    & \\
 &                                                                                      & \text{\em regularity structure on manifolds}
}
\end{center}

\subsection{Main results}

In this paper, we introduce the notion of post-groupoids, which consists of a group bundle and some other structures satisfying some compatibility conditions, and reduce to post-groups if the underlying base is one point. We show that a post-groupoid naturally gives rise to a groupoid (called the Grossman-Larson groupoid) and an action on the original group bundle.  An action of a group on a set naturally gives rise to a post-groupoid, whose Grossman-Larson groupoid is exactly the action groupoid. Moreover, we show that the section space of a post-groupoid is a weak post-group. This implies that the intrinsic structure of the observation in \cite{AFM} should be post-groupoids, as illustrated by the following diagram:
\begin{center}
\hspace{1mm}\xymatrix@R=0.1pc{
 \txt{group action \\ $\Phi:M\times G\to M$  }\ar@{..>}[r]^{\text{Proposition \ref{ex:post-g}}} \ar@/_{3pc}/[rr]!U^(.4){\text{\cite[Theorem 9]{AFM}}}& \txt{post-groupoid\\ $(M\times G\stackrel{\pi}{\longrightarrow}M,\cdot,\Phi,\rhd)$} \ar@{..>}[r]^{\text{Theorem \ref{thm:bisection}}}& \txt{weak post-group\\$(G^M, \cdot,\rhd)$} \\
 &  &
 }
 \end{center}
See Remark \ref{rmk:action-weak} for more explanation. In fact, we can enhance the results of \cite[Theorem 9]{AFM}, and obtain truly post-groups, by considering the space of ``bisections''  (Theorem \ref{thm:bisection-post-group}).

Note that  relative Rota-Baxter operators naturally induce post-structures both in the context of Lie algebras and groups. We introduce the notion of relative Rota-Baxter operators on a groupoid with respect to an action on a group bundle, and show that a relative Rota-Baxter operator naturally induces a post-groupoid. Conversely, a post-groupoid also gives rise to a relative Rota-Baxter operator on the corresponding Grossman-Larson groupoid. One important property of relative Rota-Baxter operators is that they give rise to   matched pairs of groupoids, which leads to the construction of quiver-theoretical solution of the Yang-Baxter equation on the quiver underlying the Grossman-Larson groupoid of a post-groupoid (Theorem \ref{pgybe}).

Motivated by the equivalence between post-groups and skew-left braces, we introduce the notion of skew-left bracoids, which consists of a group bundle and a groupoid satisfying some compatibility conditions and reduce to skew-left braces if the underlying base is one point. We show that associated to a post-groupoid, the original group bundle and the  Grossman-Larson groupoid form a skew-left bracoid naturally. Conversely, a skew-left bracoid also gives rise to a post-groupoid.

Finally, we consider post-Lie groupoids, that is, post-groupoids in the category of smooth manifolds. Using the fact that the  Grossman-Larson groupoid acts on the original group bundle, we show that one can obtain a post-Lie algebroid from a post-Lie groupoid via differentiation (Theorem \ref{thm:diff-post}).

We summarize the constructions and relations in the following diagram. The bold-faces terms and dotted arrows indicate new contribution we made in this article.

\xymatrix@R=0.1pc{
 & \text{\bf post-groupoid=skew-left bracoid} \ar@{..>}[r]  \ar@{..>}@/_{5pc}/[dddddd]!U^(.4){\text{differentiation}} & \text{quiver-solution of YBE} \\
 & & \\
 & \text{post-group=skew-left brace}\ar[dd]^{\text{differentiation}} \ar[r]+L \ar@{..>}[uu]_{\text{geometrization}}    & \text{set-solution of YBE} \ar[uu] \\
 \text{\bf RB operator} \ar[ur]+L \ar[dr]+L \ar@{..>}[uuur]+DL \ar@{..>}[dddr]    & & \text{regularity structure on $\mathbb R^n$} \\
 & \text{post-Lie algebra}  \ar[dd]^{\text{geometrization}} \ar[ur]+L \ar[dr] & \\
 &                                                                                      & \text{numerical integration} \\
 & \text{post-Lie algebroid} \ar[ur] \ar@{-->}[dr]                                    & \\
 &                                                                                      & \text{\em regularity structure on  manifolds}
}


\subsection{Future outlook}

To receive the  approximate solutions of the differential equations on the manifolds, Munthe-Kaas \cite{Munthe-Kaas} introduced the Lie-Butcher series  and studied  the Runge-Kutta method of the  differential equations on the manifolds. A closely related future direction  is to study the connection between integrators, as described by the Lie-Butcher series, and free post-Lie algebras whose universal enveloping algebras complete to universal Lie-Butcher series.  Let $G$ be a Lie group with the Lie algebra $\g=\Lie(G)$ and   $\rho:G\lon \Diff(M)$ be a transitive action of $G$ on a manifold $M$. Consider the differential equation on $M$: $
\frac{\dM u(t)}{\dM t}=F(u(t)),\,F\in\frkX(M).
$
 Using the  MKW post-Lie algebra $C^{\infty}(M,\g)$,
 one can rewrite the above differential equation by
$
\frac{\dM u(t)}{\dM t}=a_A\big(f\big(u(t)\big)\big),
$
for some $\,f\in C^{\infty}(M,\g)$. Note that the MKW post-Lie algebra structure on $C^{\infty}(M,\g)$ essentially comes form the MKW post-Lie algebroid $(A=M\times \g,[\cdot,\cdot]_\g,\triangleright_A,a_A)$ (as the section space), which is the differentiation of the MKW post-Lie groupoid $(M\times G\stackrel{\pi}{\longrightarrow}M,\cdot,\Phi,\rhd)$. Thus it is natural to explore differential equations on a general post-Lie groupoid $(\G\stackrel{\pi}{\longrightarrow}M,\cdot,\Phi,\rhd)$:\vspace{-2mm}
 \begin{eqnarray*}
\frac{\dM u(t)}{\dM t}=\Phi_*(f\big(u(t)\big)),
\end{eqnarray*}
here $\Phi_*$ is the anchor map of the post-Lie algebroid $(A\longrightarrow M,[\cdot,\cdot]_A,\Phi_*,\triangleright_A)$, which is the differentiation of the post-Lie groupoid $(\G\stackrel{\pi}{\longrightarrow}M,\cdot,\Phi,\rhd)$ and  $f\in\Gamma(A)$.
   The formal series expansion provided by the Lie-Butcher theory, which reflects the underlying post-Lie algebraic structure, could be investigated in the context of post-Lie algebroids. This extension may reveal new insights into the geometric and algebraic structures governing differential equations in these broader contexts. By exploring how integrators are linked to post-Lie algebroids and post-Lie groupoids via an analogous series expansion, one could potentially develop a unified framework for understanding and solving differential equations on more complex manifolds and spaces. 
   This could significantly advance the field and open up new avenues for applying post-Lie structures in mathematical physics, control theory, numerical analysis, and other more applied settings.


Built on the work \cite{HS}, where Hairer and Singh extended the theory of regularity structures for solving singular stochastic partial differential equations on Euclidean space $\RR^n$ or torus $\T^d$ to equations on vector bundles over a Riemannian manifold,   Singh  proposes the following question \cite{Singh}:\vspace{-1mm}
\begin{itemize}
  \item[$ \bullet$]  What is the  algebraic structure behind the regularity structures on manifolds and vector bundles?
\end{itemize}\vspace{-1.5mm}
Post-Lie algebras are the algebraic structure behind the regularity structures on $\RR^n$. Extending regularity structures on $\RR^n$ to manifolds and vector bundles involving suitable gluing. On the other hand, gluing post-Lie algebras together, one can obtain a post-Lie algebroid \cite{ML}. Thus it is very possible that the answer to Singh's question above is exactly a post-Lie algebroid or a post-Lie groupoid if we consider global symmetries. Therefore a key future direction is to understand how post-Lie algebroids and post-Lie groupoids manifest in the regular structures on manifolds and vector bundles, and to explicitly construct these structures.

Now a couple of problems which might attract attention in pure maths: in this article, we offer quiver-theoretical solutions to the Yang-Baxter equation using post-groupoids. We demonstrate that the category of post-groupoids is isomorphic to that of braided groupoids. A significant question that arises is how to utilize post-groupoids and their representation theory to classify non-degenerate solutions of the Yang-Baxter equation on quivers, parallel to what Andruskiewitsch did using braided groupoids \cite{Andruskiewitsch-1}. Furthermore, once the representation theory of post-groupoids is considered, a natural inquiry is the Morita equivalence of post-groupoids, as Morita equivalences lead to equivalences of categories of representations. This, in turn, opens up a wide range of further research themes involving other topological invariants, such as cohomology and homotopy groups, and how these invariants interact with the Morita equivalences of post-Lie groupoids.

Lastly, an intriguing challenge for geometers is to achieve a geometric integration of post-Lie algebras and post-Lie algebroids. While this article presents the differentiation from a post-Lie groupoid to a post-Lie algebroid, the reverse process integrating a post-Lie algebra into a post-Lie group remains elusive. Currently, the only known approach is a formal one, which employs a post-Lie version of the Baker-Campbell-Hausdorf formula   \cite{PostG}. However, finding a geometrically meaningful method for the integration of post-Lie algebras and post-Lie algebroids remains an open problem, posing an exciting direction for future research.


\section{Post-groupoids}\label{sec:2}

In this section, we introduce the notion of a post-groupoid. An action of a group on a set naturally gives rise to a post-groupoid. We show that on the section space and bisection space of a post-groupoid, there are weak post-group and post-group structures respectively. Moreover, a post-groupoid gives rise to a groupoid and an action on the original group bundle.

First we recall the notion of post-groups and their properties.

\begin{defi}\cite[Definition 2.1]{Wang}
A {\bf weak post-group} is a group $(G,\cdot)$ equipped with a multiplication $\rhd$ on $G$ such that
\begin{enumerate}
\item for each $a\in G$, the left multiplication $L^\rhd_a:G\to G, ~ L^\rhd_a b= a\rhd b$ for all $b\in G,$
is an endomorphism of the group $(G,\cdot)$, that is,
\begin{equation}
    \mlabel{Post-2}a\rhd (b\cdot c)=(a\rhd b)\cdot(a\rhd c)\tforall a, b, c\in G;
\end{equation}
\item the following condition holds,
    \begin{equation}
        \mlabel{Post-4}\big(a\cdot(a\rhd b)\big)\rhd c=a\rhd (b\rhd c)\tforall a, b, c\in G.
    \end{equation}
\end{enumerate}

The notion of a weak post-group is a generalization of the notion of a {\bf post-group} \cite[Definition 2.1]{PostG}, where the left multiplication $L^\rhd_a:G\to G$
is required to be an automorphism for all $a\in G$.
 \end{defi}

 \begin{thm}\mlabel{pro:subad}
Let $(G,\cdot,\rhd)$ be a post-group. Define $\star:G\times G\to G$     by
\begin{eqnarray}
\mlabel{eq:subad-com}a\star b&=&a\cdot(a\rhd b)\tforall a,b\in G.
\end{eqnarray}
\begin{enumerate}
    \item \mlabel{it:subad1}
Then $(G,\star)$ is a group with $e$ being the unit, and the inverse map $\dagger:G\to G$ given by
\begin{eqnarray}  \notag
 \mlabel{eq:subad-inv}a^\dagger&:=&(L^\rhd_a)^{-1}(a^{-1}) \tforall a\in G.
\end{eqnarray}
The   group $G_\rhd:=(G,\star)$ is called the {\bf  Grossman-Larson group} of the post-group $(G,\cdot,\rhd)$.
\item The left multiplication $L^\rhd:G\to \Aut(G)$ is an action of the group $(G,\star)$  on $(G,\cdot)$.
\mlabel{it:subad2}
 \end{enumerate}
\end{thm}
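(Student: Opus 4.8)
The plan is to verify the group axioms for $(G,\star)$ directly from the two post-group axioms \eqref{Post-2} and \eqref{Post-4}, and then to read off the second assertion almost immediately. I would begin by recording two preliminary identities valid in any post-group: $a\rhd e=e$ and $e\rhd a=a$ for all $a\in G$. The first holds because $L^\rhd_a$ is an automorphism of $(G,\cdot)$ and hence fixes the unit. For the second, set $a=b=e$ in \eqref{Post-4}; since $e\rhd e=e$ by the first identity, the equation collapses to $e\rhd c=e\rhd(e\rhd c)$, and injectivity of $L^\rhd_e$ (it is bijective) gives $e\rhd c=c$.

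Next I would prove associativity of $\star$. Starting from $(a\star b)\star c=(a\star b)\cdot\big((a\star b)\rhd c\big)$ and using \eqref{Post-4} in the form $(a\star b)\rhd c=a\rhd(b\rhd c)$ together with the definition \eqref{eq:subad-com} of $a\star b$, one rewrites the right-hand side as $a\cdot\big((a\rhd b)\cdot(a\rhd(b\rhd c))\big)$ after reassociating the $\cdot$-product; then applying \eqref{Post-2} to the inner bracket, read from right to left, folds this into $a\cdot\big(a\rhd(b\cdot(b\rhd c))\big)=a\cdot\big(a\rhd(b\star c)\big)=a\star(b\star c)$. The unit axiom is immediate: $e\star a=e\cdot(e\rhd a)=a$ and $a\star e=a\cdot(a\rhd e)=a$ by the preliminary identities. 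For inverses, set $a^\dagger:=(L^\rhd_a)^{-1}(a^{-1})$, which is well defined since $L^\rhd_a$ is bijective; then $a\rhd a^\dagger=a^{-1}$ by construction, so $a\star a^\dagger=a\cdot(a\rhd a^\dagger)=a\cdot a^{-1}=e$, i.e.\ $a^\dagger$ is a right $\star$-inverse of $a$. At this point $(G,\star)$ is an associative monoid with unit $e$ in which every element admits a right inverse, and such a monoid is automatically a group: if $xy=e$, choose $z$ with $yz=e$; then $z=(xy)z=x(yz)=x$, whence $yx=e$ as well. Since inverses in a group are unique and $a^\dagger$ is one of them, $\dagger$ is exactly the inversion map, which establishes the first assertion.

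For the second assertion, each $L^\rhd_a$ already lies in $\Aut(G,\cdot)$ by \eqref{Post-2}. We have $L^\rhd_e=\Id$ by the identity $e\rhd a=a$, and the relation $L^\rhd_{a\star b}=L^\rhd_a\circ L^\rhd_b$ is precisely \eqref{Post-4} evaluated on an arbitrary $c$. Hence $L^\rhd\colon(G,\star)\to\Aut(G,\cdot)$ is a group homomorphism, that is, an action of the Grossman-Larson group on $(G,\cdot)$. I do not expect a serious obstacle here; the only point requiring a little care is the treatment of invertibility — a head-on computation of $a^\dagger\star a=e$ is unpleasant, so it is cleaner to verify only the right inverse and invoke the elementary monoid lemma above — together with keeping careful track of the direction in which each of \eqref{Post-2} and \eqref{Post-4} is applied during the associativity computation.
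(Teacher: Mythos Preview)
Your proof is correct. The paper does not actually supply a proof of this theorem in the text---it is recalled from \cite{PostG}---but it does prove the groupoid analogue (Theorem~\ref{thm:subadj}), and the comparison with that argument is instructive. Associativity and the unit are handled essentially the same way in both; your preliminary identities $a\rhd e=e$ and $e\rhd a=a$ are precisely Lemma~\ref{lem:unit} specialized to a one-point base. The difference lies in the left inverse. The paper verifies the right-inverse identity $\gamma\star\inv_\rhd(\gamma)=\iota_{\pi(\gamma)}$ directly, as you do, but for the other side it computes $\gamma\rhd(\inv_\rhd(\gamma)\star\gamma)$ explicitly using Axioms~(ii), (iii) and Lemma~\ref{lem:unit}, obtains $\iota_{\pi(\gamma)}$, and then invokes injectivity of $L^\rhd_\gamma$ to conclude $\inv_\rhd(\gamma)\star\gamma=\iota_{\Phi(\gamma)}$. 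Your route via the elementary monoid lemma (an associative monoid with unit and right inverses is a group) sidesteps that computation entirely and is cleaner in the group case; the paper's hands-on verification is more self-contained and transfers verbatim to the groupoid setting, where one must keep track of sources and targets anyway.
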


Recall that a groupoid~\cite{Mkz:GTGA} is a small category such that every morphism is invertible.

\begin{defi}
A {\bf groupoid} is a pair $(\G,M)$, where $M$ is the set of objects and $\G$ is the set of morphisms, with the  structure maps
\begin{itemize}
\item two surjective maps $\alpha,\beta: \G\longrightarrow M$, called the source map and target map, respectively;
\item  the multiplication $\cdot:\G_\beta\times_\alpha\G\longrightarrow \G$, where $\G_\beta\times_\alpha\G=\{(\gamma_1,\gamma_2)\in \G\times \G| \beta(\gamma_1)=\alpha(\gamma_2)\}$, such that
    $
\alpha(\gamma_1\cdot \gamma_2)=\alpha(\gamma_1),~\beta(\gamma_1\cdot \gamma_2)=\beta(\gamma_2),\tforall (\gamma_1,\gamma_2)\in\G_\beta\times_\alpha\G;
$
\item  the inverse map $\inv:\G\longrightarrow \G$;
\item the inclusion map $\iota: M\longrightarrow \G$, $m\longmapsto \iota_m$ called the identity map;
\end{itemize}
satisfying the following properties:
\begin{enumerate}
\item \rm{(associativity)} $(\gamma_1\cdot \gamma_2)\cdot \gamma_3=\gamma_1\cdot (\gamma_2\cdot \gamma_3)$, whenever the multiplications are well-defined;
\item \rm{(unitality)} $\iota_{\alpha(\gamma)}\cdot \gamma=\gamma=\gamma\cdot \iota_{\beta(\gamma)}$;
\item  \rm{(invertibility)} $\gamma\cdot \inv(\gamma)=\iota_{\alpha(\gamma)}$, $\inv(\gamma)\cdot \gamma=\iota_{\beta(\gamma)}$.
\end{enumerate}
We also denote a groupoid by $\xymatrix{ (\G \ar@<0.5ex>[r]^{\alpha} \ar[r]_{\beta} & M},\cdot,\iota,\inv)$ or simply by $\G$. A Lie groupoid is a groupoid such that both $\G$ and $M$ are smooth manifolds, $\alpha$ and $\beta$ are surjective submersion, and all the other structure maps are smooth.
 \end{defi}

\begin{ex}\label{ex:action-g}
Let $\Phi:M\times G\to M$ be a right action of a group $G$   on a set $M$. Then we obtain a groupoid $M\times G\rightrightarrows M$, whose source, target maps and multiplication are given by
\[\alpha(m,g):=m,\quad \beta(m,g):=\Phi(m,g),\quad (m,g)\star(n,h):=(m,g\cdot h),\]
for $n=\Phi(m,g)$. This  groupoid is called the {\bf action groupoid}.
\end{ex}

\begin{defi}\label{defi:action}
Let $\xymatrix{ (\G \ar@<0.5ex>[r]^{\alpha} \ar[r]_{\beta} & M},\cdot,\iota,\inv)$ be a groupoid, and $\pi:\HH\to M$ be a map of sets. A {\bf left action} of $\G$ on $\pi:\HH\to M$ is a map $\rightharpoonup:\G_\beta\times_\pi \HH\to \HH$ such that
\begin{equation}\label{laction}
\pi( \gamma\rightharpoonup\delta )=\alpha(\gamma),\quad \gamma_1\rightharpoonup (\gamma_2\rightharpoonup\delta)= (\gamma_1\cdot\gamma_2)\rightharpoonup\delta,\quad \iota_{\pi( \delta)}\rightharpoonup\delta =\delta.
\end{equation}
  \emptycomment{\begin{itemize}
    \item[(i)] $\pi( \gamma\rightharpoonup\delta )=\alpha(\gamma);$
    \item[(ii)]  For all $(\gamma_1, \gamma_2)\in\G_{\beta}\times _{\alpha}\G$, and $\delta\in\HH$ satisfying $\beta(\gamma_2)=\pi(\delta)$, we have $$ \gamma_1\rightharpoonup (\gamma_2\rightharpoonup\delta)= (\gamma_1\cdot\gamma_2)\rightharpoonup\delta;$$
        \item[(iii)] For all $\delta\in\HH$, $ \iota_{\pi( \delta)}\rightharpoonup\delta =\delta$.
  \end{itemize}
  }
  A {\bf right action} of $\G$ on $\pi:\HH\to M$ is a map $\leftharpoonup:\HH_\pi\times_\alpha \G\to \HH$ such that
 \begin{equation}\label{raction}
  \pi(\delta\leftharpoonup \gamma)=\beta(\gamma),\quad \delta\leftharpoonup (\gamma_1\cdot\gamma_2)=(\delta\leftharpoonup\gamma_1)\leftharpoonup\gamma_2,\quad \delta\leftharpoonup\iota_{\pi( \delta)} =\delta.
  \end{equation}
  \end{defi}

See \cite[Definition 2.5.1]{Mkz:GTGA} for the general notion of actions of groupoids on groupoids. Here we give the notion of actions of   groupoids on   group bundles, which suits for our applications. First we clarify terminologies. A bundle of groups is referred to be a groupoid in which the sauce map and the target map are the same. A group bundle is a bundle of groups $\HH\stackrel{\pi}{\longrightarrow}M$ such that {$\HH_m:=\pi^{-1}(m)$} are isomorphic for all $m\in M$. Denote by $\iota^\HH:M\to \HH$   the unit section. Denote by $\inv_\HH$ the inverse map. Denote a   group bundle by $(\HH\stackrel{\pi}{\longrightarrow}M,\cdot_\HH,\iota^\HH,\inv_\HH)$, or simply by $(\HH\stackrel{\pi}{\longrightarrow}M,\cdot,\iota,\inv)$, $\HH\stackrel{\pi}{\longrightarrow}M$ or $\HH$ if there is no risk of  confusion.

A bundle of Lie groups is  a Lie groupoid in which the sauce map and the target map are the same.
A Lie group bundle is a bundle of Lie groups $\HH\stackrel{\pi}{\longrightarrow}M$ such that for every $m\in M$, there is an open set $U$ containing $m$, a   group $G$, and a homeomorphism $\psi:U\times G\to \pi^{-1}(U)$ such that $\psi_m:m\times G\to \HH_m=\pi^{-1}(m)$ is a   group isomorphism \cite[Definition 1.1.19]{Mkz:GTGA}.

\begin{defi}\label{defi:action-g}
 An action of a groupoid $\xymatrix{ (\G \ar@<0.5ex>[r]^{\alpha} \ar[r]_{\beta} & M},\cdot_\G,\iota^\G,\inv_\G)$ on a  group bundle $\HH\stackrel{\pi}{\longrightarrow}M$ is a   map $\rightharpoonup:\G_{\beta}\times_{\pi}\HH\to \HH$ satisfying  \eqref{laction} and such that
   the map $ \gamma\rightharpoonup\cdot:\HH_{\beta(\gamma)}\to \HH_{\alpha(\gamma)}$ is an isomorphism of  groups for all  $\gamma\in\G$.
\end{defi}

Recall that for a group bundle $\HH\stackrel{\pi}{\longrightarrow}M$, there is an associated groupoid $\Aut(\HH)$, whose space of objects is $M$, and morphisms from $m$ to $n$ is isomorphisms of groups from $\HH_n$ to $\HH_m$. Then an action of a groupoid $\xymatrix{ (\G \ar@<0.5ex>[r]^{\alpha} \ar[r]_{\beta} & M},\cdot_\G,\iota^\G,\inv_\G)$ on a group bundle $\HH\stackrel{\pi}{\longrightarrow}M$ is equivalent to a groupoid homomorphism from the groupoid $\xymatrix{ (\G \ar@<0.5ex>[r]^{\alpha} \ar[r]_{\beta} & M},\cdot_\G,\iota^\G,\inv_\G)$ to the groupoid $\Aut(\HH)$.

Now we are ready to introduce the notion of a post-groupoid, which is the main object studied in this paper.
\begin{defi}\label{defi:post-groupoid}
  A {\bf post-(Lie) groupoid} consists of the following data:
  \begin{itemize}
    \item a (Lie) group bundle $(\G\stackrel{\pi}{\longrightarrow}M,\cdot,\iota,\inv)$, where $\cdot$ is the multiplication;
    \item a surjective map (submersion) $\Phi:\G\to M$ satisfying $\Phi(\iota_m)=m,$ for all $m\in M$;
    \item a (smooth) map $\rhd: \G_\Phi\times_\pi\G\to \G$ satisfying $\pi(\gamma\rhd \delta)=\pi(\gamma)$ and the left multiplication $$L^\rhd_\gamma:\G_{\Phi(\gamma)}\to \G_{\pi(\gamma)}, \quad L^\rhd_\gamma \delta= \gamma\rhd \delta \tforall \delta\in \G_{\Phi(\gamma)},$$
is invertible for any $\gamma\in\G$, where $\G_\Phi\times_\pi \G=\{(\gamma,\delta)\in \G\times \G ~\mbox{such that}~ \Phi(\gamma)=\pi(\delta)\},$
  \end{itemize}
  such that for all $(\gamma_1,\gamma_2,\gamma_3)\in \G_\Phi\times_\pi \G_\Phi\times_\pi \G$, and $\gamma_2'\in \G$ satisfying $\pi(\gamma_2)=\pi(\gamma_2')$, the following axioms hold:
    \begin{itemize}
    \item[\rm(i)]  $\Phi\Big(\gamma_1\cdot(\gamma_1\rhd\gamma_2)\Big)=\Phi(\gamma_2) $,
     \item[\rm(ii)]  $\gamma_1\rhd(\gamma_2\cdot \gamma_2')=(\gamma_1\rhd\gamma_2)\cdot(\gamma_1\rhd\gamma_2')$,
      \item[\rm(iii)] $\gamma_1\rhd(\gamma_2\rhd \gamma_3)=\Big(\gamma_1\cdot(\gamma_1\rhd\gamma_2)\Big) \rhd\gamma_3$,
  \end{itemize}
  where $\G_\Phi\times_\pi \G_\Phi\times_\pi \G=\{(\gamma_1,\gamma_2,\gamma_3)\in \G\times \G\times \G ~\mbox{such that}~ \Phi(\gamma_1)=\pi(\gamma_2),~ \Phi(\gamma_2)=\pi(\gamma_3)\}.$
  We will denote a post-(Lie) groupoid by $(\G\stackrel{\pi}{\longrightarrow}M,\cdot,\Phi,\rhd)$.
\end{defi}

\begin{defi}\label{defi:post-groupoid-mor}
   Let  $(\G\stackrel{\pi_\G}{\longrightarrow}M,\cdot_\G,\Phi_\G,\rhd_\G)$ and  $(\HH\stackrel{\pi_\HH}{\longrightarrow}M,\cdot_\HH,\Phi_\HH,\rhd_\HH)$ be post-groupoids. A (base-preserving) {\bf post-groupoid homomorphism} is a map $\Psi:\G\to\HH$ such that
    \begin{itemize}
    \item[\rm(i)]  $\pi_\HH\circ \Psi=\pi_\G,\quad  \Phi_\HH\circ \Psi=\Phi_\G$,

     \item[\rm(ii)]  $\Psi(\gamma \cdot_\G \gamma')=\Psi(\gamma)\cdot_\HH\Psi(\gamma'),\quad \mbox{for all}~ \gamma, \gamma'\in \G~\mbox{satisfying }~\pi_\G(\gamma)=\pi_\G(\gamma')$,

      \item[\rm(iii)] $\Psi(\gamma \rhd_\G \delta)=\Psi(\gamma)\rhd_\HH\Psi(\delta),\quad \mbox{for all}~ (\gamma, \delta)\in \G_{\Phi_\G}\times_{\pi_\G}\G$.
  \end{itemize}
\end{defi}

Post-groupoids and their homomorphisms form a category, denoted by $\PG$.

\delete{Parallel to the fact that a groupoid has an isotropic group at each point, a post-groupoid has an isotropic post-group at each point. Let $(\G\stackrel{\pi}{\longrightarrow}M,\cdot,\Phi,\rhd)$ be a post-groupoid. For all $m\in M$, Define $\G^m_m$ by
$$
G^m_m=\{\gamma\in\G|~\pi(\gamma)=\Phi(\gamma)=m\}.
$$
}
\begin{lem}\label{lem:unit}
Let $(\G\stackrel{\pi}{\longrightarrow}M,\cdot,\Phi,\rhd)$ be a post-groupoid. Then for all $\gamma\in\G$, we have
\begin{eqnarray*}
\gamma\rhd\iota_{\Phi(\gamma)}=\iota_{\pi(\gamma)},\quad
\iota_{\pi(\gamma)}\rhd\gamma=\gamma.
\end{eqnarray*}

\end{lem}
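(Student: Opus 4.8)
The plan is to mimic, at the level of a single group bundle fiber, the classical fact that in a post-group the left translation $L^\rhd_\gamma$ is unital in both slots. The two identities are genuinely different in nature: the first, $\gamma\rhd\iota_{\Phi(\gamma)}=\iota_{\pi(\gamma)}$, says $L^\rhd_\gamma$ sends the unit to the unit, and should follow purely from axiom (ii) (multiplicativity of $L^\rhd_\gamma$) together with invertibility of $L^\rhd_\gamma$; the second, $\iota_{\pi(\gamma)}\rhd\gamma=\gamma$, says $\iota_{\pi(\gamma)}$ acts as an identity from the left, and this is where axiom (iii) and axiom (i) will be needed.

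For the first identity, I would fix $\gamma\in\G$ and set $n:=\Phi(\gamma)$, so that $\gamma\rhd\cdot$ maps $\G_n$ to $\G_{\pi(\gamma)}$. Since $\iota_n\in\G_n$ satisfies $\iota_n\cdot\iota_n=\iota_n$ in the fiber group $\G_n$, axiom (ii) (with $\gamma_2=\gamma_2'=\iota_n$, which is legitimate because $\pi(\iota_n)=n=\pi(\iota_n)$) gives $\gamma\rhd\iota_n=(\gamma\rhd\iota_n)\cdot(\gamma\rhd\iota_n)$ inside the fiber group $\G_{\pi(\gamma)}$. An element of a group equal to its own square is the unit, so $\gamma\rhd\iota_n=\iota_{\pi(\gamma)}$, i.e. $\gamma\rhd\iota_{\Phi(\gamma)}=\iota_{\pi(\gamma)}$. (Here I do not even need invertibility of $L^\rhd_\gamma$, only that the fiber is a group.)

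For the second identity, fix $\gamma\in\G$ and put $m:=\pi(\gamma)$. The idea is to apply axiom (iii) with a carefully chosen triple. Note $\Phi(\iota_m)=m=\pi(\gamma)$ by the defining property of $\Phi$, so $(\iota_m,\gamma,\delta)$ lies in $\G_\Phi\times_\pi\G_\Phi\times_\pi\G$ for any $\delta\in\G_{\Phi(\gamma)}$. Axiom (iii) then reads
\begin{equation*}
\iota_m\rhd(\gamma\rhd\delta)=\bigl(\iota_m\cdot(\iota_m\rhd\gamma)\bigr)\rhd\delta .
\end{equation*}
Now by the first identity already proven, $\iota_m\rhd\cdot$ sends units to units, and I claim $\iota_m\rhd\cdot$ is in fact the identity map on $\G_m$. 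To extract this, I would first use axiom (i) applied to the pair $(\iota_m,\gamma)$: it gives $\Phi\bigl(\iota_m\cdot(\iota_m\rhd\gamma)\bigr)=\Phi(\gamma)$; combined with $\pi(\iota_m\rhd\gamma)=\pi(\iota_m)=m$ and $\iota_m\cdot(\iota_m\rhd\gamma)\in\G_m$, the displayed equation becomes $\iota_m\rhd(\gamma\rhd\delta)=\bigl(\iota_m\cdot(\iota_m\rhd\gamma)\bigr)\rhd\delta$ with both sides living in $\G_m\rhd\G_{\Phi(\gamma)}$. Setting $\delta=\iota_{\Phi(\gamma)}$ and using the first identity twice collapses the right-hand side; more efficiently, I expect the cleanest route is: the left translations $L^\rhd_{\gamma}$ being invertible, the assignment $\gamma\mapsto L^\rhd_\gamma$ behaves like a (partial) left action, and plugging $\gamma_1=\gamma_2=\iota_m$, $\gamma_3$ arbitrary into (iii) yields $\iota_m\rhd(\iota_m\rhd\gamma_3)=(\iota_m\cdot(\iota_m\rhd\iota_m))\rhd\gamma_3=(\iota_m\cdot\iota_m)\rhd\gamma_3=\iota_m\rhd\gamma_3$, so $L^\rhd_{\iota_m}\circ L^\rhd_{\iota_m}=L^\rhd_{\iota_m}$; since $L^\rhd_{\iota_m}$ is invertible, it equals the identity, whence $\iota_{\pi(\gamma)}\rhd\gamma=\gamma$.

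The main obstacle is purely bookkeeping: making sure every pair or triple to which I apply axioms (i)--(iii) or (ii) actually lies in the relevant fibered product, which forces repeated appeals to $\Phi(\iota_m)=m$, to $\pi(\gamma\rhd\delta)=\pi(\gamma)$, and to axiom (i) to control where $\iota_m\cdot(\iota_m\rhd\gamma)$ lands. There is no conceptual difficulty once these domain conditions are tracked; the key insight is simply that invertibility of $L^\rhd$ upgrades the idempotency relations extracted from (ii) and (iii) into the desired unit equalities.
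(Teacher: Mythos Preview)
Your proposal is correct and, once you settle on the ``cleanest route'' for the second identity, it is exactly the paper's proof: the first identity is obtained from axiom~(ii) by noting that $\gamma\rhd\iota_{\Phi(\gamma)}$ is idempotent in the group $\G_{\pi(\gamma)}$, and the second from axiom~(iii) with $\gamma_1=\gamma_2=\iota_m$ (using the first identity to simplify $\iota_m\rhd\iota_m$), yielding $L^\rhd_{\iota_m}\circ L^\rhd_{\iota_m}=L^\rhd_{\iota_m}$ and hence $L^\rhd_{\iota_m}=\mathrm{id}$ by invertibility. Your initial attempt for the second identity via the triple $(\iota_m,\gamma,\iota_{\Phi(\gamma)})$ collapses to a tautology and can be dropped.
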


\begin{proof}
  By   Axiom (ii) in Definition \ref{defi:post-groupoid}, we have
  $
  \gamma\rhd\iota_{\Phi(\gamma)}=\gamma\rhd (\iota_{\Phi(\gamma)}\cdot\iota_{\Phi(\gamma)})=(\gamma\rhd \iota_{\Phi(\gamma)})\cdot(\gamma\rhd\iota_{\Phi(\gamma)}).
  $
  Since $\gamma\rhd\iota_{\Phi(\gamma)}\in\G_{\pi(\gamma)}$, it follows that $\gamma\rhd\iota_{\Phi(\gamma)}=\iota_{\pi(\gamma)}$.

By  Axiom (iii) in Definition \ref{defi:post-groupoid}, we have
  $
 \iota_{\pi(\gamma)}\rhd(\iota_{\pi(\gamma)}\rhd\gamma)= (\iota_{\pi(\gamma)}\cdot(\iota_{\pi(\gamma)}\rhd\iota_{\pi(\gamma)}))\rhd\gamma=\iota_{\pi(\gamma)}\rhd\gamma.
  $
  Since $L^\rhd_{\iota_{\pi(\gamma)}}$ is invertible, it follows that $\iota_{\pi(\gamma)}\rhd\gamma=\gamma.$
\end{proof}

Denote by $\Gamma(\G)$ the set of sections of the group bundle $\G$, i.e.
$$
\Gamma(G)=\{\sigma:M\to \G|~\pi\circ \sigma=\Id_M\}.
$$
Then the  multiplication $\cdot$ and the map $\rhd$ can naturally be extended on $\Gamma(\G)$, for which we use the same notations, via the following formulas:
\begin{eqnarray}
  \label{formula:weak1}(\sigma_1\cdot\sigma_2) (m)&=&\sigma_1(m)\cdot\sigma_2(m),\\
    \label{formula:weak2}(\sigma_1\rhd\sigma_2) (m)&=&\sigma_1(m)\rhd \sigma_2(\Phi(\sigma_1(m))).
\end{eqnarray}

We have the following result.

\begin{thm}\label{thm:bisection}
  Let $(\G\stackrel{\pi}{\longrightarrow}M,\cdot,\Phi,\rhd)$ be a post-groupoid. Then $(\Gamma(\G), \cdot,\rhd)$ is a weak post-group.
\end{thm}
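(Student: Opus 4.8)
The plan is to verify the two weak-post-group axioms \eqref{Post-2} and \eqref{Post-4} for $(\Gamma(\G),\cdot,\rhd)$ pointwise, reducing each identity of sections to the corresponding axiom of the post-groupoid evaluated at a suitable base point, after first checking that the two operations \eqref{formula:weak1} and \eqref{formula:weak2} are well-defined on $\Gamma(\G)$ and that $L^\rhd_{\sigma}$ is invertible for each $\sigma\in\Gamma(\G)$.

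First I would record the preliminary facts. Since $\cdot$ is the fibrewise multiplication of the group bundle, $(\Gamma(\G),\cdot)$ is a group with unit the unit section $m\mapsto\iota_m$ and inverse $\sigma^{-1}(m)=\inv(\sigma(m))$; this is immediate from the bundle-of-groups structure. For the operation $\rhd$, note that in \eqref{formula:weak2} the element $\sigma_1(m)\rhd\sigma_2(\Phi(\sigma_1(m)))$ makes sense because $\Phi(\sigma_1(m))=\pi(\sigma_2(\Phi(\sigma_1(m))))$, so $(\sigma_1(m),\sigma_2(\Phi(\sigma_1(m))))$ lies in $\G_\Phi\times_\pi\G$; moreover $\pi(\sigma_1\rhd\sigma_2)(m)=\pi(\sigma_1(m))=m$, so $\sigma_1\rhd\sigma_2\in\Gamma(\G)$. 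To see that $L^\rhd_{\sigma_1}$ is invertible on $\Gamma(\G)$, I would exhibit its inverse using the fibrewise inverses: for $\tau\in\Gamma(\G)$ define $\big((L^\rhd_{\sigma_1})^{-1}\tau\big)(n):=(L^\rhd_{\sigma_1(m)})^{-1}(\tau(m))$ where $m$ is chosen with $\Phi(\sigma_1(m))=n$ — but since $\Phi$ need not be injective on the image of $\sigma_1$ this needs care; the cleaner route is to first establish a pointwise reformulation. Actually, the argument is simplest if I observe that $\rhd$ on sections is determined fibrewise by a family of bijections and just verify bijectivity directly: given $\sigma_1$, the map $\sigma_2\mapsto\sigma_1\rhd\sigma_2$ is a bijection of $\Gamma(\G)$ because it equals, pointwise, the composite of precomposition with $\Phi\circ\sigma_1$ and the fibrewise bijection $L^\rhd_{\sigma_1(m)}$; I would spell out why precomposition with $\Phi\circ\sigma_1:M\to M$ is relevant here and that the inverse is built from $(L^\rhd_{\sigma_1(m)})^{-1}$ together with Lemma \ref{lem:unit} to handle units.

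Next, for axiom \eqref{Post-2} I would evaluate both sides at an arbitrary $m\in M$. The left side gives $\sigma_1(m)\rhd\big((\sigma_2\cdot\sigma_3)(\Phi(\sigma_1(m)))\big)=\sigma_1(m)\rhd\big(\sigma_2(\Phi(\sigma_1(m)))\cdot\sigma_3(\Phi(\sigma_1(m)))\big)$, which by Axiom (ii) of Definition \ref{defi:post-groupoid} equals $\big(\sigma_1(m)\rhd\sigma_2(\Phi(\sigma_1(m)))\big)\cdot\big(\sigma_1(m)\rhd\sigma_3(\Phi(\sigma_1(m)))\big)$; this is exactly $\big((\sigma_1\rhd\sigma_2)\cdot(\sigma_1\rhd\sigma_3)\big)(m)$. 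For axiom \eqref{Post-4} I would again evaluate at $m$: writing $n=\Phi(\sigma_1(m))$ and $p=\Phi\big(\sigma_1(m)\cdot(\sigma_1(m)\rhd\sigma_2(n))\big)$, Axiom (i) of Definition \ref{defi:post-groupoid} gives $p=\Phi(\sigma_2(n))$, so the left side $\big((\sigma_1\cdot(\sigma_1\rhd\sigma_2))\rhd\sigma_3\big)(m)=\big(\sigma_1(m)\cdot(\sigma_1(m)\rhd\sigma_2(n))\big)\rhd\sigma_3(p)$, and by Axiom (iii) this equals $\sigma_1(m)\rhd\big(\sigma_2(n)\rhd\sigma_3(p)\big)$. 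The right side $\big(\sigma_1\rhd(\sigma_2\rhd\sigma_3)\big)(m)=\sigma_1(m)\rhd\big((\sigma_2\rhd\sigma_3)(n)\big)=\sigma_1(m)\rhd\big(\sigma_2(n)\rhd\sigma_3(\Phi(\sigma_2(n)))\big)$, and since $\Phi(\sigma_2(n))=p$ the two sides agree.

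The main obstacle I anticipate is not these computations — which are straightforward bookkeeping with the base points fed into $\Phi$ — but the careful verification that $L^\rhd_\sigma$ is a bijection of $\Gamma(\G)$, because the bundle map $\Phi$ may fail to be injective, so "choosing $m$ with $\Phi(\sigma(m))=n$" is not canonical. I would resolve this by defining the candidate inverse purely fibrewise: $\big((L^\rhd_{\sigma})^{-1}\tau\big)$ should be the unique section whose value at each point is forced by the fibrewise invertibility hypothesis together with the identities just proved, and then check $L^\rhd_\sigma\circ(L^\rhd_\sigma)^{-1}=\Id=(L^\rhd_\sigma)^{-1}\circ L^\rhd_\sigma$ by another pointwise computation using Lemma \ref{lem:unit}. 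Once that is in place, the theorem follows.
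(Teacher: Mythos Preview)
Your pointwise verifications of \eqref{Post-2} and \eqref{Post-4} are correct and match the paper's proof essentially line for line. However, you have misread what is required: the theorem asserts that $(\Gamma(\G),\cdot,\rhd)$ is a \emph{weak} post-group, and in the definition of a weak post-group the left multiplication $L^\rhd_a$ is only required to be an \emph{endomorphism} of $(G,\cdot)$, not an automorphism. Thus the entire discussion of inverting $L^\rhd_\sigma$ on sections is unnecessary, and the ``main obstacle'' you anticipate simply does not arise.

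Your instinct that this step is problematic is in fact the point: $L^\rhd_\sigma$ is generally \emph{not} a bijection of $\Gamma(\G)$, precisely because $\Phi\circ\sigma:M\to M$ need not be bijective. This is why the paper only obtains a weak post-group here, and why Theorem~\ref{thm:bisection-post-group} must restrict to the subset $\Gamma_b(\G)$ of sections for which $\Phi\circ\sigma$ is a diffeomorphism in order to get a genuine post-group. Once you drop the invertibility discussion, your proof is complete and coincides with the paper's.
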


\begin{proof}
  It is obvious that $(\Gamma(\G),\cdot)$ is a group whose identity is the unit section $\iota:M\to\G$.


For all $\sigma_1,\sigma_2,\sigma_3\in \Gamma(\G)$, by Axiom (ii) in Definition \ref{defi:post-groupoid}, we have
\begin{eqnarray*}
  \sigma_1\rhd(\sigma_2\cdot \sigma_3)(m)&=&\sigma_1(m)\rhd (\sigma_2\cdot\sigma_3)(\Phi(\sigma_1(m)))\\
  &=&\sigma_1(m)\rhd (\sigma_2(\Phi(\sigma_1(m)))\cdot\sigma_3(\Phi(\sigma_1(m))))\\
  &=&(\sigma_1(m)\rhd  \sigma_2(\Phi(\sigma_1(m))))\cdot(\sigma_1(m)\rhd\sigma_3(\Phi(\sigma_1(m))))\\
  &=&(\sigma_1\rhd \sigma_2)(m)\cdot(\sigma_1\rhd \sigma_3)(m)\\
  &=&\Big((\sigma_1\rhd \sigma_2)\cdot(\sigma_1\rhd \sigma_3)\Big)(m),
\end{eqnarray*}
which implies that $\sigma_1\rhd(\sigma_2\cdot \sigma_3)=(\sigma_1\rhd \sigma_2)\cdot(\sigma_1\rhd \sigma_3).$

Similarly, by Axiom (i) and (iii) in Definition \ref{defi:post-groupoid}, we have
\begin{eqnarray*}
  \sigma_1\rhd(\sigma_2\rhd \sigma_3)(m)&=&\sigma_1(m)\rhd (\sigma_2\rhd\sigma_3)(\Phi(\sigma_1(m)))\\
  &=&\sigma_1(m)\rhd \Big(\sigma_2(\Phi(\sigma_1(m)))\rhd\sigma_3(\Phi(\sigma_2(\Phi(\sigma_1(m)))))\Big)\\
  &=&\Big(\sigma_1(m)\cdot(\sigma_1(m)\rhd  \sigma_2(\Phi(\sigma_1(m))))\Big)\rhd\sigma_3(\Phi(\sigma_2(\Phi(\sigma_1(m)))))\\
  &=&\Big(\sigma_1\cdot(\sigma_1\rhd \sigma_2)\Big)(m) \rhd \sigma_3(\Phi(\sigma_2(\Phi(\sigma_1(m)))))\\
  &=&\Big(\sigma_1\cdot(\sigma_1\rhd \sigma_2)\Big) \rhd \sigma_3 (m),
\end{eqnarray*}
which implies that $\sigma_1\rhd(\sigma_2\rhd \sigma_3)=\Big(\sigma_1\cdot(\sigma_1\rhd \sigma_2)\Big) \rhd \sigma_3.$

Therefore, $( \Gamma(\G), \cdot,\rhd)$ is a weak post-group.
\end{proof}

  Let $G$ be a group and $M$ a set. Then $\G=M\times G\stackrel{\pi}{\longrightarrow}M$ is  a trivial group bundle, where $\pi$ is the projection to $M$. Let $\Phi:M\times G\to M$ be a right action of $G$ on $M$. In this case, $$\G_\Phi\times_\pi\G=\{((m,g),(n,h))\in \G\times \G ~\mbox{such that}~ n=\Phi(m,g)\}.$$

\begin{pro}\label{ex:post-g}
 $\Phi:M\times G\to M$ be a right action of $G$ on $M$. Then $(M\times G\stackrel{\pi}{\longrightarrow}M,\cdot,\Phi,\rhd)$ is a post-groupoid, where   $\rhd:\G_\Phi\times_\pi\G\to \G$ is defined  by
 \begin{equation}\label{fomula:post-action}
  (m,g)\rhd(n,h)=(m,h).
\end{equation}
\end{pro}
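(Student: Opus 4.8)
The plan is to verify directly that the quintuple $(M\times G\stackrel{\pi}{\longrightarrow}M,\cdot,\Phi,\rhd)$ satisfies every clause of Definition \ref{defi:post-groupoid}, since the operations are so explicit that no clever idea is needed — only careful bookkeeping of which fiber each element lives in. First I would record the ingredients: $\pi(m,g)=m$ makes $M\times G$ a trivial group bundle with fiberwise multiplication $(m,g)\cdot(m,h)=(m,gh)$, unit section $\iota_m=(m,e)$ and inverse $(m,g)^{-1}=(m,g^{-1})$; the map $\Phi$ is the given right action, so $\Phi(\iota_m)=\Phi(m,e)=m$ because $e$ acts trivially; and $\rhd$ is defined by $(m,g)\rhd(n,h)=(m,h)$, which is only applied when $((m,g),(n,h))\in\G_\Phi\times_\pi\G$, i.e.\ when $n=\Phi(m,g)$.

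Next I would check the structural requirements on $\rhd$. One has $\pi\big((m,g)\rhd(n,h)\big)=\pi(m,h)=m=\pi(m,g)$, as required. For invertibility of the left multiplication $L^\rhd_{(m,g)}:\G_{\Phi(m,g)}\to\G_m$: its source fiber is $\{(\Phi(m,g),h)\mid h\in G\}$ and it sends $(\Phi(m,g),h)\mapsto(m,h)$; this is a bijection onto $\G_m=\{(m,h)\mid h\in G\}$ with inverse $(m,h)\mapsto(\Phi(m,g),h)$. (In the Lie case this map is visibly a diffeomorphism, so the same argument upgrades, though the proposition as stated is about the set-theoretic case.) Then I would turn to the three axioms. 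For (i): $(m,g)\rhd\gamma_2$ where $\gamma_2=(\Phi(m,g),h)$ equals $(m,h)$, so $(m,g)\cdot\big((m,g)\rhd\gamma_2\big)=(m,gh)$, and $\Phi(m,gh)=\Phi\big(\Phi(m,g),h\big)=\Phi(\gamma_2)$ by associativity of the right action — this is exactly where the action axiom $\Phi(\Phi(m,g),h)=\Phi(m,gh)$ gets used. For (ii): with $\gamma_2=(n,h)$, $\gamma_2'=(n,h')$, $n=\Phi(m,g)$, the left side is $(m,g)\rhd(n,hh')=(m,hh')$ and the right side is $(m,h)\cdot(m,h')=(m,hh')$; equality is immediate. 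For (iii): write $\gamma_1=(m,g)$, $\gamma_2=(\Phi(m,g),h)$, $\gamma_3=(\Phi(\Phi(m,g),h),k)$. The left side is $\gamma_1\rhd\big((\Phi(m,g),h)\rhd\gamma_3\big)=\gamma_1\rhd\big(\Phi(m,g),k\big)=(m,k)$. For the right side, $\gamma_1\cdot(\gamma_1\rhd\gamma_2)=(m,g)\cdot(m,h)=(m,gh)$, and by (i)/associativity of the action its $\Phi$-value is $\Phi(\gamma_2)=\Phi(\Phi(m,g),h)=\pi(\gamma_3)$, so the composition is defined and equals $(m,gh)\rhd\gamma_3=(m,k)$; the two sides agree.

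The only genuinely substantive point — and the ``main obstacle,'' such as it is — is keeping straight the domain conditions: every application of $\rhd$ must be on a pair lying in $\G_\Phi\times_\pi\G$, and the recurring identity that makes the composites well-defined and the base-point computations come out right is the associativity of the right action, $\Phi(\Phi(m,g),h)=\Phi(m,gh)$, together with $\Phi(m,e)=m$. So I would state these two facts once at the start and then the three axioms fall out by substitution. I would close by noting that $(\Gamma(\G),\cdot,\rhd)$ is then the weak post-group of Theorem \ref{thm:bisection}, and — as anticipated in the introduction — this recovers the group-action example from \cite[Theorem 9]{AFM}, with the Grossman–Larson groupoid being the action groupoid of Example \ref{ex:action-g}; but that identification is the content of later results and not part of the present proof.
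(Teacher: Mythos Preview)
Your proof is correct and follows essentially the same approach as the paper's own proof: a direct verification of each clause of Definition~\ref{defi:post-groupoid}, using the associativity of the right action $\Phi(\Phi(m,g),h)=\Phi(m,gh)$ as the one nontrivial ingredient (for Axiom~(i) and the well-definedness in Axiom~(iii)). The only cosmetic difference is that you exhibit an explicit inverse for $L^\rhd_{(m,g)}$, whereas the paper argues invertibility by noting that $(m,g)\rhd(n,h)=(m,e)$ iff $h=e$; your closing paragraph about the weak post-group and the action groupoid is extraneous to the present proof but not incorrect.
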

\begin{proof}
  It is obvious that $\pi((m,g)\rhd(n,h))=\pi(m,h)=m=\pi(m,g)$, and $(m,g)\rhd(n,h)=(m, e_m)$ if and only if $h=e_n$, which implies that $L^\rhd_{(m,g)}:\G_{\Phi(m,g)}\to\G_m$ is invertible.

  For $(m,g),(n,h)\in M\times G$ satisfying $n=\Phi(m,g)$, we have
  \begin{eqnarray*}
    (m,g)\cdot((m,g)\rhd(n,h))=(m,g)\cdot(m,h)=(m,g\cdot h),
  \end{eqnarray*}
  which implies that $$\Phi((m,g)\cdot((m,g)\rhd(n,h)))=\Phi(m,g\cdot h)=\Phi(\Phi(m,g),h)=\Phi(n,h).$$
   So Axiom (i)   in Definition \ref{defi:post-groupoid} holds.

  For all $(m,g),(n,h),(n,h')\in M\times G$ satisfying $n=\Phi(m,g)$, we have
  \begin{eqnarray*}
    (m,g)\rhd((n,h)\cdot(n,h'))&=&(m,g)\rhd(n,h\cdot h')=(m,h\cdot  h'),\\
    ((m,g)\rhd(n,h))\cdot((m,g)\rhd(n,h'))&=&(m,h)\cdot(m,h')=(m,h\cdot h'),
  \end{eqnarray*}
  which implies that  Axiom (ii)   in Definition \ref{defi:post-groupoid} holds.

  Finally, for $(m,g),(n,h),(p,k)\in M\times G$ satisfying $n=\Phi(m,g)$ and $p=\Phi(n,h)$, we have
    \begin{eqnarray*}
    (m,g)\rhd( (n,h)\rhd (p,k))&=&(m,g)\rhd(n,k)=(m,k),\\
    \Big((m,g)\cdot((m,g)\rhd(n,h))\Big)\rhd(p,k)&=&(m,g\cdot h)\rhd(p,k)=(m,k),
  \end{eqnarray*}
  which implies that Axiom (iii)   in Definition \ref{defi:post-groupoid} holds.

  Therefore, $(M\times G\stackrel{\pi}{\longrightarrow}M,\cdot,\Phi,\rhd)$ is a post-groupoid.
  \end{proof}

\begin{rmk}
  If $G$ is a Lie group and $M$ is a manifold, then we obtain a post-Lie groupoid $(M\times G\stackrel{\pi}{\longrightarrow}M,\cdot,\Phi,\rhd)$, which is called the {\bf MKW post-Lie groupoid} and play important roles in the study of numerical Lie-group integration on a homogeneous space \cite[ Section 2.2.1]{ML}.
\end{rmk}

\begin{rmk}\label{rmk:action-weak}
  Under the assumption of Proposition \ref{ex:post-g}, a weak post-group is constructed in \cite[Theorem 9]{AFM}. More precisely, consider  the space of $G$-valued functions on $M$, which is denoted by $G^M$, and endowed with the pointwise product
\begin{equation}\label{formula:fard1}
 ( f\cdot g)(m)=f(m)\cdot g(m),\quad \forall f,g\in G^M.
\end{equation} Introduce the map $\rhd: G^M \times G^M \to G^M$  by
\begin{equation}\label{formula:fard2}
(f \rhd g)(m) := g (\Phi(m,f (m))).
\end{equation}
Then $(G^M,\cdot,\rhd)$ is a weak post-group.

On the other hand, $G$-valued functions one-to-one correspond to sections of the group bundle $\G=M\times G\stackrel{\pi}{\longrightarrow}M$ via
\begin{equation}\label{eq:corr}
f\in G^M\longleftrightarrow \sigma_f:M\to M\times G, \quad \sigma_f(m):=(m, f(m)).
\end{equation}
So there is an induced weak post-group structure on the set of sections of the group bundle $\G=M\times G\stackrel{\pi}{\longrightarrow}M$ given by
\begin{eqnarray*}
  (\sigma_f\cdot\sigma_g)(m):=(m,f\cdot g(m)),\quad
  (\sigma_f\rhd\sigma_g)(m):=(m,f\rhd g(m))=(m,g (\Phi(m,f (m)))).
\end{eqnarray*}
 By \eqref{fomula:post-action}, we have
 $
 (m,g (\Phi(m,f (m))))=\sigma_f(m)\rhd \sigma_g(\Phi(\sigma_f(m))),
 $
 which implies that this weak post-group is exactly the one
   given by Theorem \ref{thm:bisection} for the post-groupoid $(M\times G\stackrel{\pi}{\longrightarrow}M,\cdot,\Phi,\rhd)$.
\end{rmk}

 In fact, there is indeed post-groups underlying group actions, or more generally, there is post-groups associated to post-groupoids. Consider the subset $\Gamma_b(\G)\subset \Gamma(\G)$, which consists of those sections $\sigma $ satisfying $\Phi\circ \sigma$ is a diffeomorphism on $M$.  As an enhancement of Theorem \ref{thm:bisection}, we have the following result.
\begin{thm}\label{thm:bisection-post-group}
  Let $(\G\stackrel{\pi}{\longrightarrow}M,\cdot,\Phi,\rhd)$ be a post-groupoid. Then $(\Gamma_b(\G), \cdot,\rhd)$ is a post-group, where $\cdot$ and $\rhd$ are defined by \eqref{formula:weak1} and \eqref{formula:weak2} respectively.  Consequently, $(\Gamma_b(\G),\star)$ is a group, where the group multiplication $\star$ is given by
   $   \sigma\star \tau =\sigma \cdot (\sigma\rhd \tau).    $ More precisely,
   $$
    (\sigma\star \tau) (m)=\sigma (m)\cdot( \sigma (m)\rhd \tau(\Phi(\sigma(m)))),\quad \forall \sigma,\tau\in \Gamma_b(\G), m\in M.
   $$
\end{thm}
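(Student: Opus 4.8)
The plan is to bootstrap from Theorem~\ref{thm:bisection}: since $(\Gamma(\G),\cdot,\rhd)$ is already a weak post-group, the identities \eqref{Post-2} and \eqref{Post-4} hold for \emph{all} sections, hence in particular for those in the subset $\Gamma_b(\G)$. Thus, to conclude that $(\Gamma_b(\G),\cdot,\rhd)$ is a post-group it suffices to verify: (a) $\Gamma_b(\G)$ is a subgroup of $(\Gamma(\G),\cdot)$; (b) $\rhd$ restricts to a map $\Gamma_b(\G)\times\Gamma_b(\G)\to\Gamma_b(\G)$; and (c) for each $\sigma\in\Gamma_b(\G)$ the endomorphism $L^\rhd_\sigma$ of $(\Gamma(\G),\cdot)$ restricts to a \emph{bijection} $\Gamma_b(\G)\to\Gamma_b(\G)$, hence to an automorphism. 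Note that $\iota\in\Gamma_b(\G)$ since $\Phi\circ\iota=\Id_M$ by Definition~\ref{defi:post-groupoid}, so in (a) only closure under $\cdot$ and under the inverse of $(\Gamma(\G),\cdot)$ remains.

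I would treat the bijectivity in (c) first, as it is comparatively routine. From \eqref{formula:weak2}, $(\sigma\rhd\tau)(m)=L^\rhd_{\sigma(m)}\bigl(\tau(\Phi(\sigma(m)))\bigr)$, and since every $L^\rhd_{\sigma(m)}$ is invertible (built into Definition~\ref{defi:post-groupoid}), the equality $\sigma\rhd\tau=\sigma\rhd\tau'$ forces $\tau$ and $\tau'$ to agree on $\Img(\Phi\circ\sigma)=M$; this gives injectivity. For surjectivity onto $\Gamma_b(\G)$, given $\rho\in\Gamma_b(\G)$ set $m:=(\Phi\circ\sigma)^{-1}(n)$ and define $\tau(n):=\bigl(L^\rhd_{\sigma(m)}\bigr)^{-1}(\rho(m))$; then $\tau$ is a section because $\bigl(L^\rhd_{\sigma(m)}\bigr)^{-1}$ takes values in $\G_{\Phi(\sigma(m))}=\G_n$, one checks $\sigma\rhd\tau=\rho$ directly from \eqref{formula:weak2}, and $\tau\in\Gamma_b(\G)$ by the computation in the next step.

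The heart of the argument — and the step I expect to be the main obstacle — is (a) and (b): controlling how $\Phi$ transforms under $\cdot$ and $\rhd$ on $\Gamma_b(\G)$, which is exactly where the hypothesis that $\Phi\circ\sigma$ be a genuine diffeomorphism (not merely surjective) must be used. The workhorse is Axiom~(i) of Definition~\ref{defi:post-groupoid}, applied pointwise. Evaluating it at $\gamma_1=\sigma(m)$ and $\gamma_2=\tau(\Phi(\sigma(m)))$ yields the composition law $\Phi\circ(\sigma\star\tau)=(\Phi\circ\tau)\circ(\Phi\circ\sigma)$ with $\sigma\star\tau:=\sigma\cdot(\sigma\rhd\tau)$, which in particular shows $\Gamma_b(\G)$ is stable under $\star$. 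To reach the statements about $\cdot$ and $\rhd$ themselves, I would write an arbitrary $\sigma_2\in\Gamma_b(\G)$ pointwise as $\sigma_1(m)\rhd\epsilon(m)$ with $\epsilon(m):=\bigl(L^\rhd_{\sigma_1(m)}\bigr)^{-1}(\sigma_2(m))\in\G_{\Phi(\sigma_1(m))}$, so that Axiom~(i) gives $\Phi\bigl(\sigma_1(m)\cdot\sigma_2(m)\bigr)=\Phi(\epsilon(m))$, and handle $\Phi\circ(\sigma_1\rhd\sigma_2)$ by the same device together with Lemma~\ref{lem:unit}; the goal is to present each of these as a composite of $\Phi\circ\sigma_1$, $\Phi\circ\sigma_2$ and their inverses, so that being a diffeomorphism propagates. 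Closure under the inverse in $(\Gamma(\G),\cdot)$ then follows formally from closure under $\cdot$ and $\Phi\circ\iota=\Id_M$.

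Finally, the ``consequently'' clause requires no further work: once $(\Gamma_b(\G),\cdot,\rhd)$ is known to be a post-group, the first part of Theorem~\ref{pro:subad} shows that $\sigma\star\tau=\sigma\cdot(\sigma\rhd\tau)$ is a group multiplication on $\Gamma_b(\G)$ with unit $\iota$, and substituting \eqref{formula:weak1} and \eqref{formula:weak2} into it gives exactly the stated pointwise formula $(\sigma\star\tau)(m)=\sigma(m)\cdot\bigl(\sigma(m)\rhd\tau(\Phi(\sigma(m)))\bigr)$.
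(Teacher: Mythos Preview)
Your decomposition into (a) $\Gamma_b(\G)$ closed under $\cdot$, (b) closed under $\rhd$, (c) $L^\rhd_\sigma$ bijective, is the correct checklist, and your argument for (c) is essentially the paper's: the paper shows $L^\rhd_\sigma$ has trivial kernel (hence is injective, being a group homomorphism by Theorem~\ref{thm:bisection}) and declares surjectivity ``similar''; you go further and write down the inverse explicitly. The paper dispatches (a) and (b) with a single ``Obviously''.

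Here is the gap: (a) and (b) are \emph{false} in general, so your plan to express $\Phi\circ(\sigma_1\cdot\sigma_2)$ and $\Phi\circ(\sigma_1\rhd\sigma_2)$ as composites of $\Phi\circ\sigma_i$ and their inverses cannot succeed. In the MKW post-groupoid of Proposition~\ref{ex:post-g} with $M=G=\mathbb{R}$ and $\Phi(m,g)=m+g$, take $\sigma(m)=(m,-m/2)$; then $\Phi\circ\sigma(m)=m/2$, so $\sigma\in\Gamma_b(\G)$, yet $(\sigma\cdot\sigma)(m)=(m,-m)$ has $\Phi\circ(\sigma\cdot\sigma)\equiv 0\notin\Diff(M)$. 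Likewise with $\sigma'(m)=(m,m)$ one computes $(\sigma'\rhd\sigma)(m)=(m,-m)$, so $\sigma'\rhd\sigma\notin\Gamma_b(\G)$. What you \emph{can} salvage is exactly the identity you derived, $\Phi\circ(\sigma\star\tau)=(\Phi\circ\tau)\circ(\Phi\circ\sigma)$, which shows $\Gamma_b(\G)$ is closed under~$\star$; together with the existence of $\star$-inverses this gives the ``consequently'' clause directly as the bisection group of the Grossman-Larson groupoid, bypassing the post-group assertion. The headline claim that $(\Gamma_b(\G),\cdot,\rhd)$ is a post-group in the strict sense of the paper's own definition, however, does not hold as stated --- neither the paper's ``obviously'' nor your more careful sketch can fill this.
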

\begin{proof}
  Obviously, we only need to show that in this case $L^\rhd_\sigma:\Gamma_b(\G)\to \Gamma_b(\G)$ is invertible for all $\sigma\in \Gamma_b(\G)$. In fact, if $\sigma\rhd \tau=\iota$, we have
  $$
  (\sigma\rhd \tau) (m)=\sigma(m)\rhd\tau(\Phi(\sigma(m)))=\iota_m,\quad \forall m\in M.
  $$
  Since $L^\rhd_{\sigma(m)}:\G_{\Phi(\sigma(m))}\to \G_m$ is invertible, it follows that $\tau(\Phi(\sigma(m)))=\iota_{\Phi(\sigma(m))}$ for all $m\in M$. Since $\Phi\circ\sigma$ is diffeomorphism, this implies that $\tau(m)=\iota_m$ for all $m\in M$. Therefore, $L^\rhd_\sigma:\Gamma_b(\G)\to \Gamma_b(\G)$ is injective. Similarly, it is also surjective. Thus,  $(\Gamma_b(\G), \cdot,\rhd)$ is a post-group.
\end{proof}
Consider the   post-groupoid given in Proposition \ref{ex:post-g}, we obtain the following post-group.
\begin{cor}\label{cor:subadj}
   Let $\Phi:M\times G\to M$ be a right action of $G$ on $M$. Define the set $G^M_\Phi$ by
   $$
   G^M_\Phi:=\{f\in G^M ~\mbox{such that}~\Phi\circ f ~\mbox{is a diffeomorphism on $M$}\}.
   $$
   Then $(G^M_\Phi,\cdot,\rhd)$ is a post-group, where $\cdot$ and $\rhd$ are given by \eqref{formula:fard1} and \eqref{formula:fard2}. Consequently, $(G^M_\Phi,\star)$ is a group, where the  multiplication $\star$ is given by
   $   f\star g =f \cdot (f\rhd g).    $ More precisely,
   $$
    (f\star g) (m)=f (m)\cdot  g(\Phi(m, f(m))),\quad \forall f,g\in G^M_\Phi, m\in M.
   $$
\end{cor}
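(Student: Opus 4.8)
The plan is to obtain Corollary \ref{cor:subadj} as a direct specialization of Theorem \ref{thm:bisection-post-group} applied to the post-groupoid $(M\times G\stackrel{\pi}{\longrightarrow}M,\cdot,\Phi,\rhd)$ constructed in Proposition \ref{ex:post-g}. First I would invoke the correspondence \eqref{eq:corr} between $G$-valued functions $f\in G^M$ and sections $\sigma_f\in\Gamma(M\times G)$, and check that it identifies the pointwise product \eqref{formula:fard1} with \eqref{formula:weak1} and the operation \eqref{formula:fard2} with \eqref{formula:weak2}; the latter was already verified in Remark \ref{rmk:action-weak}. So $f\mapsto\sigma_f$ is an isomorphism of the magma-with-product structures on $G^M$ and $\Gamma(M\times G)$.

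Next I would verify that this bijection restricts to a bijection $G^M_\Phi\xrightarrow{\ \sim\ }\Gamma_b(M\times G)$. This is immediate from the definitions: $\Phi\circ\sigma_f$ as a map $M\to M$ sends $m\mapsto\Phi(\sigma_f(m))=\Phi(m,f(m))$, which is precisely $\Phi\circ f$ in the notation of the corollary, so $\sigma_f\in\Gamma_b(M\times G)$ (i.e.\ $\Phi\circ\sigma_f$ is a diffeomorphism of $M$) if and only if $f\in G^M_\Phi$. Combining this with the previous step, $(G^M_\Phi,\cdot,\rhd)$ is isomorphic as a magma to $(\Gamma_b(M\times G),\cdot,\rhd)$, which is a post-group by Theorem \ref{thm:bisection-post-group}; hence $(G^M_\Phi,\cdot,\rhd)$ is a post-group.

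It then remains only to transport the induced group structure $\star$ through the isomorphism and unwind the formula. By Theorem \ref{pro:subad}\eqref{it:subad1} (applied to the post-group $G^M_\Phi$), $(G^M_\Phi,\star)$ is a group with $f\star g=f\cdot(f\rhd g)$, and evaluating at $m\in M$ using \eqref{formula:fard1} and \eqref{formula:fard2} gives $(f\star g)(m)=f(m)\cdot(f\rhd g)(m)=f(m)\cdot g(\Phi(m,f(m)))$, as claimed.

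There is essentially no obstacle here: the entire content has been established in Proposition \ref{ex:post-g} and Theorem \ref{thm:bisection-post-group}, and the only thing to be careful about is the bookkeeping in translating between the ``$G$-valued function'' picture and the ``section'' picture — in particular making sure that the condition ``$\Phi\circ f$ is a diffeomorphism'' matches ``$\Phi\circ\sigma$ is a diffeomorphism'' under \eqref{eq:corr}, which it does tautologically. If one prefers, one can skip the isomorphism argument and instead just re-run the (one-line) proof of Theorem \ref{thm:bisection-post-group} directly for $G^M_\Phi$, but routing through the already-proved theorem is cleaner.
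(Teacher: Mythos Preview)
Your proposal is correct and follows exactly the paper's approach: the paper's proof is the single sentence ``It follows from Theorem \ref{thm:bisection-post-group}, and the correspondence between $G$-valued functions and sections given by \eqref{eq:corr} directly,'' and your write-up is a careful unpacking of precisely that argument.
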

\begin{proof}
  It follows from Theorem \ref{thm:bisection-post-group}, and the correspondence between $G$-valued functions and sections given by \eqref{eq:corr} directly.
\end{proof}
In the sequel, we show that a post-groupoid gives rise to a new groupoid naturally, and the set $\Gamma_b(\G)$ can be understood as the set of bisections.
\begin{thm}\label{thm:subadj}
 Let $(\G\stackrel{\pi}{\longrightarrow}M,\cdot,\Phi,\rhd)$ be a post-groupoid.   Then $(\xymatrix{ \G \ar@<0.5ex>[r]^{\alpha=\pi} \ar[r]_{\beta=\Phi} & M},\star,\iota,\inv_\rhd)$ is a groupoid, called the {\bf Grossman-Larson groupoid}, where the   groupoid multiplication $\star:\G_\Phi\times_\pi\G\to\G$ is defined by
 \begin{equation}
   \gamma_1\star\gamma_2=\gamma_1\cdot(\gamma_1\rhd\gamma_2),
 \end{equation}
 and the new inverse map $\inv_\rhd$ is given by
 \begin{equation}
   \inv_\rhd(\gamma)=(L^\rhd_\gamma)^{-1}\inv(\gamma).
 \end{equation}

Moreover, $\rhd$ gives rise to an action of the Grossman-Larson groupoid  $(\xymatrix{ \G \ar@<0.5ex>[r]^{\alpha=\pi} \ar[r]_{\beta=\Phi} & M},\star,\iota,\inv_\rhd)$ on the group bundle $(\G\stackrel{\pi}{\longrightarrow}M,\cdot,\iota,\inv)$.
\end{thm}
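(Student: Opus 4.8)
The plan is to view Theorem~\ref{thm:subadj} as the ``horizontal categorification'' of Theorem~\ref{pro:subad}: the Grossman-Larson groupoid is the fibrewise Grossman-Larson-group construction, with the map $\Phi$ serving as target. I would begin by checking that all structure maps land where they should. Because $\pi(\gamma\rhd\delta)=\pi(\gamma)$, the elements $\gamma_1$ and $\gamma_1\rhd\gamma_2$ lie in the common fibre $\G_{\pi(\gamma_1)}$, so $\gamma_1\star\gamma_2=\gamma_1\cdot(\gamma_1\rhd\gamma_2)$ is well defined with $\alpha(\gamma_1\star\gamma_2)=\pi(\gamma_1)=\alpha(\gamma_1)$, while Axiom~(i) gives $\beta(\gamma_1\star\gamma_2)=\Phi\big(\gamma_1\cdot(\gamma_1\rhd\gamma_2)\big)=\Phi(\gamma_2)=\beta(\gamma_2)$; moreover the domain $\G_\Phi\times_\pi\G$ of $\star$ is precisely the set of pairs with $\beta(\gamma_1)=\alpha(\gamma_2)$. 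For the proposed inverse $\inv_\rhd(\gamma)=(L^\rhd_\gamma)^{-1}\big(\inv(\gamma)\big)$, note that $\inv(\gamma)\in\G_{\pi(\gamma)}$ and $(L^\rhd_\gamma)^{-1}\colon\G_{\pi(\gamma)}\to\G_{\Phi(\gamma)}$, so $\alpha(\inv_\rhd(\gamma))=\Phi(\gamma)=\beta(\gamma)$, as an inverse arrow must satisfy.

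Next I would establish associativity and unitality, which are the computations from the proof of Theorem~\ref{pro:subad} carried out fibrewise. For $(\gamma_1,\gamma_2,\gamma_3)\in\G_\Phi\times_\pi\G_\Phi\times_\pi\G$ I expand $(\gamma_1\star\gamma_2)\star\gamma_3=\gamma_1\cdot(\gamma_1\rhd\gamma_2)\cdot\big((\gamma_1\cdot(\gamma_1\rhd\gamma_2))\rhd\gamma_3\big)$, rewrite the last factor by Axiom~(iii) as $\gamma_1\rhd(\gamma_2\rhd\gamma_3)$, combine the last two factors by Axiom~(ii) into $\gamma_1\rhd\big(\gamma_2\cdot(\gamma_2\rhd\gamma_3)\big)=\gamma_1\rhd(\gamma_2\star\gamma_3)$, and arrive at $\gamma_1\star(\gamma_2\star\gamma_3)$; one only has to note that the composability conditions invoked ($\Phi(\gamma_1)=\pi(\gamma_2)$, $\Phi(\gamma_2)=\pi(\gamma_3)$) hold by hypothesis, so every product in the chain is defined. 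For unitality I use Lemma~\ref{lem:unit}: $\iota_{\alpha(\gamma)}\star\gamma=\iota_{\pi(\gamma)}\cdot(\iota_{\pi(\gamma)}\rhd\gamma)=\iota_{\pi(\gamma)}\cdot\gamma=\gamma$ and $\gamma\star\iota_{\beta(\gamma)}=\gamma\cdot(\gamma\rhd\iota_{\Phi(\gamma)})=\gamma\cdot\iota_{\pi(\gamma)}=\gamma$.

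The one step that is not a one-line check is invertibility, which I regard as the main (if mild) obstacle. The right-hand relation is immediate: $\gamma\star\inv_\rhd(\gamma)=\gamma\cdot L^\rhd_\gamma\big((L^\rhd_\gamma)^{-1}\inv(\gamma)\big)=\gamma\cdot\inv(\gamma)=\iota_{\pi(\gamma)}=\iota_{\alpha(\gamma)}$; applying $\beta$ to this identity forces $\beta(\inv_\rhd(\gamma))=\pi(\gamma)=\alpha(\gamma)$. Rather than compute $\inv_\rhd(\gamma)\star\gamma$ directly, I would invoke the standard fact that in an associative unital partial structure a one-sided inverse is automatically two-sided: since $\alpha(\inv_\rhd(\gamma))=\beta(\gamma)$, applying the relation just proved to $\inv_\rhd(\gamma)$ gives $\inv_\rhd(\gamma)\star\inv_\rhd(\inv_\rhd(\gamma))=\iota_{\beta(\gamma)}$, and then the short computation $\gamma=\gamma\star\iota_{\beta(\gamma)}=\big(\gamma\star\inv_\rhd(\gamma)\big)\star\inv_\rhd(\inv_\rhd(\gamma))=\iota_{\alpha(\gamma)}\star\inv_\rhd(\inv_\rhd(\gamma))=\inv_\rhd(\inv_\rhd(\gamma))$ yields $\inv_\rhd(\gamma)\star\gamma=\inv_\rhd(\gamma)\star\inv_\rhd(\inv_\rhd(\gamma))=\iota_{\beta(\gamma)}$. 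This completes the verification that $(\G\rightrightarrows M,\star,\iota,\inv_\rhd)$ is a groupoid; in the Lie case one adds that $\alpha=\pi$ and $\beta=\Phi$ are surjective submersions by assumption and that $\star$ and $\inv_\rhd$ are smooth, the smoothness of $(L^\rhd_\gamma)^{-1}$ being inherited from that of $L^\rhd$.

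For the ``moreover'' claim I would simply observe that $\rhd$ itself is the required action map. Its domain $\G_\Phi\times_\pi\G$ equals $\G_\beta\times_\pi\G$ for the Grossman-Larson groupoid, and the three axioms in \eqref{laction} become: $\pi(\gamma\rhd\delta)=\pi(\gamma)=\alpha(\gamma)$, built into Definition~\ref{defi:post-groupoid}; $\gamma_1\rhd(\gamma_2\rhd\delta)=\big(\gamma_1\cdot(\gamma_1\rhd\gamma_2)\big)\rhd\delta=(\gamma_1\star\gamma_2)\rhd\delta$, which is Axiom~(iii) rewritten through $\star$ and is valid whenever $(\gamma_1,\gamma_2,\delta)\in\G_\Phi\times_\pi\G_\Phi\times_\pi\G$; and $\iota_{\pi(\delta)}\rhd\delta=\delta$, which is Lemma~\ref{lem:unit}. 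Finally, for each $\gamma$ the map $\gamma\rhd\cdot=L^\rhd_\gamma\colon\G_{\Phi(\gamma)}\to\G_{\pi(\gamma)}$, i.e.\ $\G_{\beta(\gamma)}\to\G_{\alpha(\gamma)}$, is a group homomorphism by Axiom~(ii) and invertible by Definition~\ref{defi:post-groupoid}, hence an isomorphism of groups, which is the last requirement of Definition~\ref{defi:action-g}. Apart from the one-sided-inverse point, the whole argument is bookkeeping of source/target and fibre conditions, exactly parallel to Theorem~\ref{pro:subad}.
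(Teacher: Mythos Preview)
Your proposal is correct and follows the same overall architecture as the paper's proof: source/target compatibility of $\star$, associativity via Axioms~(ii)--(iii), unitality via Lemma~\ref{lem:unit}, and the right-inverse identity $\gamma\star\inv_\rhd(\gamma)=\iota_{\pi(\gamma)}$ by direct cancellation. The one place you diverge is the left-inverse step. The paper computes $\gamma\rhd\big(\inv_\rhd(\gamma)\star\gamma\big)$ head-on, expanding with Axioms~(ii) and~(iii) and Lemma~\ref{lem:unit} to obtain $\iota_{\pi(\gamma)}$, and then invokes invertibility of $L^\rhd_\gamma$ to conclude $\inv_\rhd(\gamma)\star\gamma=\iota_{\Phi(\gamma)}$. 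Your argument instead bootstraps from the right-inverse relation via the standard ``right inverses are two-sided in an associative unital structure'' trick, showing $\inv_\rhd(\inv_\rhd(\gamma))=\gamma$ first. Both are valid; your route is slightly slicker and avoids a multi-line expansion, at the cost of having to track composability through an extra layer of $\inv_\rhd$ (which you do correctly, noting $\beta(\inv_\rhd(\gamma))=\alpha(\gamma)$ from the right-inverse identity). The paper's route, by contrast, exhibits one more time how the post-groupoid axioms conspire, which is perhaps pedagogically useful but not essential. For the ``moreover'' clause the paper simply declares the verification straightforward, whereas you spell out each axiom; no discrepancy there.
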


\begin{proof}
Similar to the proof of \cite[Theorem 2.4]{PostG}, for all $(\gamma_1,\gamma_2,\gamma_3)\in\G_\Phi\times_\pi\G_\Phi\times_\pi\G$, we have
$$
(\gamma_1\star\gamma_2)\star\gamma_3=\gamma_1\star(\gamma_2\star\gamma_3),
$$
which implies that $\star$ is associative.

By Lemma \ref{lem:unit}, we have
\begin{eqnarray*}
 \gamma\star \iota_{\Phi(\gamma)}=\gamma\cdot(\gamma\rhd\iota_{\Phi(\gamma)})=\gamma\cdot\iota_{\pi(\gamma)}=\gamma,\quad
\iota_{\pi(\gamma)}\star\gamma=\iota_{\pi(\gamma)}\cdot(\iota_{\pi(\gamma)}\rhd\gamma)=\iota_{\pi(\gamma)}\cdot\gamma=\gamma,
\end{eqnarray*}
which implies that $\iota$ is the identity map.

Finally, we have
\begin{eqnarray*}
  \gamma\star  \inv_\rhd(\gamma)=\gamma\star(L^\rhd_\gamma)^{-1}\inv(\gamma)=\gamma\cdot(L^\rhd_\gamma(L^\rhd_\gamma)^{-1}\inv(\gamma))=\gamma\cdot\inv(\gamma)=\iota_{\pi(\gamma)}.
 \end{eqnarray*}
By Axiom (ii) and Axiom (iii)   in Definition \ref{defi:post-groupoid} and Lemma \ref{lem:unit}, we have
\begin{eqnarray*}
  \gamma\rhd( \inv_\rhd(\gamma)\star\gamma)&=& \gamma\rhd\Big((L^\rhd_\gamma)^{-1}\inv(\gamma)\cdot((L^\rhd_\gamma)^{-1}\inv(\gamma)\rhd\gamma)\Big)\\
  &=&\inv(\gamma)\cdot\Big(\gamma\rhd((L^\rhd_\gamma)^{-1}\inv(\gamma)\rhd\gamma)\Big)\\
  &=&\inv(\gamma)\cdot\Big((\gamma\cdot (\gamma\rhd(L^\rhd_\gamma)^{-1}\inv(\gamma)))\rhd\gamma\Big)\\
  &=&\inv(\gamma)\cdot\Big((\gamma\cdot \inv(\gamma))\rhd\gamma\Big)=\inv(\gamma)\cdot\Big(\iota_{\pi(\gamma)}\rhd\gamma\Big)=\inv(\gamma)\cdot\gamma=\iota_{\pi(\gamma)}.
\end{eqnarray*}
Since $L^\rhd_\gamma:\G_{\Phi(\gamma)}\to \G_{\pi(\gamma)}$ is invertible, it follows that $ \inv_\rhd(\gamma)\star\gamma=\iota_{\Phi(\gamma)}$. So $ \inv_\rhd(\gamma)$ is the inverse of $\gamma$ with respect to the multiplication $\star.$
Therefore,  $\xymatrix{ (\G \ar@<0.5ex>[r]^{\alpha=\pi} \ar[r]_{\beta=\Phi} & M},\star,\iota,\inv_\rhd)$ is a groupoid.
 It is straightforward to check that $\rhd$ gives rise to an action of the groupoid $\xymatrix{ (\G \ar@<0.5ex>[r]^{\alpha=\pi} \ar[r]_{\beta=\Phi} & M},\star,\iota,\inv_\rhd)$ on the group bundle $(\G\stackrel{\pi}{\longrightarrow}M,\cdot,\iota,\inv)$.
\end{proof}

\begin{rmk}
  The space $\Gamma_b(\G)$ considered in Theorem \ref{thm:bisection-post-group} is exactly the set of bisections of the Grossman-Larson groupoid $\xymatrix{ (\G \ar@<0.5ex>[r]^{\alpha=\pi} \ar[r]_{\beta=\Phi} & M},\star,\iota,\inv_\rhd)$. Furthermore, the group $(\Gamma_b(\G),\star)$ is exactly the group of bisections associated to the  groupoid $\xymatrix{ (\G \ar@<0.5ex>[r]^{\alpha=\pi} \ar[r]_{\beta=\Phi} & M},\star,\iota,\inv_\rhd)$.
\end{rmk}

\begin{pro}\label{pro:subadj-mor}
 Let $\Psi:\G\to\HH$ be  a homomorphism from a post-groupoid $(\G\stackrel{\pi_\G}{\longrightarrow}M,\cdot_\G,\Phi_\G,\rhd_\G)$ to  $(\HH\stackrel{\pi_\HH}{\longrightarrow}M,\cdot_\HH,\Phi_\HH,\rhd_\HH)$. Then   $\Psi:\G\to\HH$ is a (base-preserving) homomorphism between the corresponding Grossman-Larson groupoids.
\end{pro}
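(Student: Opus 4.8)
The plan is to unwind what it means for $\Psi$ to be a base-preserving morphism of the Grossman-Larson groupoids and to check each requirement directly against the three axioms of Definition \ref{defi:post-groupoid-mor}. Recall from Theorem \ref{thm:subadj} that the Grossman-Larson groupoid of $(\G\stackrel{\pi_\G}{\longrightarrow}M,\cdot_\G,\Phi_\G,\rhd_\G)$ has source $\alpha=\pi_\G$, target $\beta=\Phi_\G$, multiplication $\gamma_1\star_\G\gamma_2=\gamma_1\cdot_\G(\gamma_1\rhd_\G\gamma_2)$, identity section $\iota$, and inverse $\inv_\rhd$, and likewise for $\HH$; so there are exactly four things to verify: compatibility with $\alpha,\beta$, with $\star$, with $\iota$, and with $\inv_\rhd$.

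First I would dispatch the source/target compatibility: since $\alpha=\pi$ and $\beta=\Phi$ in both Grossman-Larson groupoids, this is literally Axiom (i) of Definition \ref{defi:post-groupoid-mor}, namely $\pi_\HH\circ\Psi=\pi_\G$ and $\Phi_\HH\circ\Psi=\Phi_\G$. A consequence to record at once is that $\Psi$ sends composable pairs to composable pairs: if $\Phi_\G(\gamma_1)=\pi_\G(\gamma_2)$, then $\Phi_\HH(\Psi(\gamma_1))=\Phi_\G(\gamma_1)=\pi_\G(\gamma_2)=\pi_\HH(\Psi(\gamma_2))$, so $(\Psi(\gamma_1),\Psi(\gamma_2))\in\HH_{\Phi_\HH}\times_{\pi_\HH}\HH$; this is what makes the multiplicativity computation well-posed.

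Next comes the heart of the matter, preservation of $\star$. For $(\gamma_1,\gamma_2)\in\G_{\Phi_\G}\times_{\pi_\G}\G$ I would compute, applying in turn the definition of $\star_\G$, Axiom (ii) of $\Psi$ (valid because $\pi_\G(\gamma_1\rhd_\G\gamma_2)=\pi_\G(\gamma_1)$ by Definition \ref{defi:post-groupoid}, so $\gamma_1$ and $\gamma_1\rhd_\G\gamma_2$ lie in the same fiber of $\pi_\G$), Axiom (iii) of $\Psi$, and the definition of $\star_\HH$:
\begin{eqnarray*}
\Psi(\gamma_1\star_\G\gamma_2)
&=&\Psi\big(\gamma_1\cdot_\G(\gamma_1\rhd_\G\gamma_2)\big)
=\Psi(\gamma_1)\cdot_\HH\Psi(\gamma_1\rhd_\G\gamma_2)\\
&=&\Psi(\gamma_1)\cdot_\HH\big(\Psi(\gamma_1)\rhd_\HH\Psi(\gamma_2)\big)
=\Psi(\gamma_1)\star_\HH\Psi(\gamma_2).
\end{eqnarray*}

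Finally, compatibility with identities and with $\inv_\rhd$ is formal. For identities, Axiom (i) shows $\Psi$ preserves fibers and Axiom (ii) shows it restricts on each fiber to a group homomorphism $\G_m\to\HH_m$, whence $\Psi(\iota_m)=\iota_m$; alternatively this follows from multiplicativity of $\star$ together with source/target compatibility by the usual groupoid argument. Preservation of $\inv_\rhd$ then follows by cancellation: applying $\Psi$ to $\inv_\rhd(\gamma)\star_\G\gamma=\iota_{\Phi_\G(\gamma)}$ gives $\Psi(\inv_\rhd(\gamma))\star_\HH\Psi(\gamma)=\iota_{\Phi_\HH(\Psi(\gamma))}=\inv_\rhd(\Psi(\gamma))\star_\HH\Psi(\gamma)$, whence $\Psi(\inv_\rhd(\gamma))=\inv_\rhd(\Psi(\gamma))$ since inverses in a groupoid are unique. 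I do not anticipate any genuine obstacle: the argument is a direct unwinding, and the only point requiring care is keeping track of the composability (fiber-product) conditions and noting that Axiom (i) propagates them through $\Psi$.
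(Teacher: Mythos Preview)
Your proof is correct and follows essentially the same approach as the paper: use Axiom (i) of Definition \ref{defi:post-groupoid-mor} for the source/target maps, and Axioms (ii) and (iii) for preservation of $\star$. Your version is simply more explicit, spelling out the $\star$-multiplicativity computation and the (automatic) preservation of identities and inverses that the paper leaves implicit.
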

\begin{proof}
 Condition (i) in Definition \ref{defi:post-groupoid-mor} implies that $\Psi$ preserves the sauce map and the target map in the corresponding Grossman-Larson groupoids.  Condition (ii) and Condition (iii) in Definition \ref{defi:post-groupoid-mor} implies that $\Psi$ preserves the multiplication $\star$. Thus $\Psi$ is also a groupoid homomorphism.
\end{proof}

\begin{ex}
  Consider the Grossman-Larson groupoid of the post-groupoid given in Proposition \ref{ex:post-g}. Then for $ (m,g),~  (n,h)\in M\times G$ satisfying $n=\Phi(m,g)$, we have
  $$
  (m,g)\star (n,h)=(m,g)\cdot((m,g)\rhd(n,h))=(m,g)\cdot(m,h)=(m, g\cdot h).
  $$
  So the Grossman-Larson groupoid of the post-groupoid  $(M\times G\stackrel{\pi}{\longrightarrow}M,\Phi,\rhd)$ given in Proposition \ref{ex:post-g} is exactly the action groupoid given in Example \ref{ex:action-g}.  The space $G^M_\Phi$ considered in Corollary \ref{cor:subadj} is exactly the set of bisections of the action groupoid.
\end{ex}

At the end of this section,  we introduce the notion of an action of a post-group on a set, and show that an action of a post-group on a set gives rise to a post-groupoid.

\begin{defi}
  A {\bf right action of a post-group} $(G,\cdot,\rhd)$ on a set $M$ is defined to be a right action $\Phi:M\times G\to M$ of the Grossman-Larson group $(G,\star)$ on the set $M$.
\end{defi}

Let  $\Phi:M\times G\to M$ be a right action of a post-group $(G,\cdot,\rhd)$  on a set $M$. Then $\G=M\times G\stackrel{\pi}{\longrightarrow}M$ is  a trivial group bundle, where $\pi$ is the projection to $M$. Furthermore, the post-product $\rhd$ naturally extends to a  map $\bar{\rhd}: \G_\Phi\times_\pi\G\to \G$, via
$$
(m,g)\bar{\rhd}(n,h)=(m,g\rhd h), \quad\forall (m,g),~(n,h)\in M\times G ~\mbox{satisfying}~n=\Phi(m,g).
$$
Obviously, there holds $\pi(m,g)=\pi((m,g)\bar{\rhd}(n,h))$.

\begin{pro}\label{pro:act-post-groupoid}
 Let  $\Phi:M\times G\to M$ be a right action of a post-group $(G,\cdot,\rhd)$  on a set $M$. Then  $(M\times G\stackrel{\pi}{\longrightarrow}M,\cdot,\Phi,\bar{\rhd})$ is a post-groupoid, whose Grossman-Larson groupoid is exactly the action  groupoid of the Grossman-Larson group $(G,\star)$ on $M$, i.e. we have the following commutative diagram:
 \begin{equation}
\begin{split}
\xymatrix{
	\text{actions of post-groups on } M \ar^{\text{\qquad action}}[rr]   \ar@{=}[d] && \text{action post-groupoids}  \ar^{\text{Grossman-Larson groupoid}}[d]\\
	\text{ actions of Grossman-Larson groups on } M \ar^{\text{\qquad action}}[rr] && \text{action groupoids.}
}
\end{split}
\end{equation}	
\end{pro}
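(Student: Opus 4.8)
The plan is to verify directly that $(M\times G\stackrel{\pi}{\longrightarrow}M,\cdot,\Phi,\bar{\rhd})$ satisfies all the requirements of Definition \ref{defi:post-groupoid}, and then to compute its Grossman-Larson groupoid via Theorem \ref{thm:subadj} and match it with the action groupoid of Example \ref{ex:action-g}. For the preliminary conditions: the unit section of the trivial group bundle $M\times G\to M$ is $\iota_m=(m,e)$, where $e$ is the common unit of $(G,\cdot)$ and $(G,\star)$, and since $\Phi$ is by definition a right action of the Grossman-Larson group $(G,\star)$ on $M$ we get $\Phi(\iota_m)=\Phi(m,e)=m$; the relation $\pi\big((m,g)\bar{\rhd}(n,h)\big)=m=\pi(m,g)$ is immediate from the defining formula; and for invertibility of $L^{\bar{\rhd}}_{(m,g)}$ I would use that $(G,\cdot,\rhd)$ is a genuine post-group, so $L^\rhd_g$ is an automorphism of $(G,\cdot)$, whence $(n,h)\mapsto(m,g\rhd h)$ is a bijection $\G_{\Phi(m,g)}\to\G_m$ with inverse $(m,k)\mapsto\big(\Phi(m,g),(L^\rhd_g)^{-1}k\big)$.

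For the three axioms, write $n=\Phi(m,g)$ and $p=\Phi(n,h)$ and recall $g\star h=g\cdot(g\rhd h)$ from \eqref{eq:subad-com}. For Axiom (i), $(m,g)\cdot\big((m,g)\bar{\rhd}(n,h)\big)=(m,g\star h)$, and the action identity for $(G,\star)$ gives $\Phi(m,g\star h)=\Phi(\Phi(m,g),h)=\Phi(n,h)$. Axiom (ii) is, in the fiber over $m$, exactly the endomorphism identity \eqref{Post-2}, namely $g\rhd(h\cdot h')=(g\rhd h)\cdot(g\rhd h')$. Axiom (iii), once the base-point matching $\Phi(m,g\star h)=p$ established in Axiom (i) is available, becomes $g\rhd(h\rhd k)=\big(g\cdot(g\rhd h)\big)\rhd k$ in the fiber over $m$, which is \eqref{Post-4}. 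Hence $(M\times G\stackrel{\pi}{\longrightarrow}M,\cdot,\Phi,\bar{\rhd})$ is a post-groupoid.

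Next, by Theorem \ref{thm:subadj} the Grossman-Larson groupoid of this post-groupoid has source $\alpha=\pi$, target $\beta=\Phi$, multiplication $(m,g)\star(n,h)=(m,g)\cdot\big((m,g)\bar{\rhd}(n,h)\big)=(m,g\star h)$, and inverse $\inv_{\bar{\rhd}}(m,g)=\big(\Phi(m,g),(L^\rhd_g)^{-1}(g^{-1})\big)=\big(\Phi(m,g),g^\dagger\big)$ with $g^\dagger$ the $\star$-inverse from Theorem \ref{pro:subad}; comparing with Example \ref{ex:action-g} this is precisely the action groupoid of $(G,\star)$ on $M$. For the commuting square I would simply unwind the two routes: along the top edge followed by the right edge, an action of the post-group $(G,\cdot,\rhd)$ on $M$ is sent to the post-groupoid above and then to the groupoid just computed; along the left edge followed by the bottom edge, the same datum---which \emph{is}, by definition, an action of $(G,\star)$ on $M$---is sent to the action groupoid of $(G,\star)$; the two outputs coincide.

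I do not expect a serious obstacle: the substance is just that the post-group axioms \eqref{Post-2} and \eqref{Post-4} transcribe fiber-by-fiber into post-groupoid Axioms (ii) and (iii), while Axiom (i) and the identification of the $\star$-product with the action groupoid multiplication are exactly the assertion that $\Phi$ is an action of $(G,\star)$. The one thing deserving care is the bookkeeping of base points---checking that each composite $(m,g)\bar{\rhd}(n,h)$, each fiberwise product, and each inverse is formed over a legitimate fibered product, so that every occurrence of $\Phi$ is applied to an argument on which it is defined.
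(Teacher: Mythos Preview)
Your proposal is correct and follows essentially the same approach as the paper: verify invertibility of $L^{\bar\rhd}_{(m,g)}$ via invertibility of $L^\rhd_g$, check Axiom (i) using $\Phi(m,g\star h)=\Phi(\Phi(m,g),h)$, and reduce Axioms (ii) and (iii) to \eqref{Post-2} and \eqref{Post-4}. The paper is terser, leaving the Grossman-Larson groupoid identification and the commutative diagram as ``straightforward to be verified,'' whereas you spell these out explicitly.
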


\begin{proof}
  First the requirement that $L^{\bar{\rhd}}_{(m,g)}:\G_{\Phi(m,g)}\to \G_{\pi(m,g)}$ is invertible follows from the fact that $L^{  {\rhd}}_{g}:G\to G$ is invertible.

  For all $ (m,g),~  (n,h)\in M\times G$ satisfying $n=\Phi(m,g)$, we have
 \begin{eqnarray*}
   \Phi\Big((m,g)\cdot((m,g)\bar{\rhd}(n,h))\Big)
  =\Phi\Big((m,g\cdot(g\rhd h))\Big)=\Phi(m,g\star h)=\Phi(\Phi(m,g),h)=\Phi(n,h),
 \end{eqnarray*}
 which implies that Axiom (i)   in Definition \ref{defi:post-groupoid} holds.

 Axiom (ii) and Axiom (iii)  in Definition \ref{defi:post-groupoid} also follows from \eqref{Post-2} and \eqref{Post-4} directly. Thus, $(M\times G\stackrel{\pi}{\longrightarrow}M,\cdot,\Phi,\bar{\rhd})$ is a post-groupoid.

The other conclusions are straightforward to be verified.
\end{proof}

 If  $(G,\cdot,\rhd)$ is a post-Lie group and $M$ is a smooth manifold, then we obtain a post-Lie groupoid $(M\times G\stackrel{\pi}{\longrightarrow}M,\cdot,\Phi,\bar{\rhd})$

\begin{rmk}
  An action of the Grossman-Larson group of a post-group gives rise to two post-groupoids via Proposition  \ref{pro:act-post-groupoid} and Proposition \ref{ex:post-g} respectively. Even though these two post-groupoids are different, but they have the same Grossman-Larson groupoid: the action  groupoid.
\end{rmk}

\section{Post-groupoids and relative Rota-Baxter operators}\label{sec:3}

In this section, we introduce the notion of relative Rota-Baxter operators on a groupoid with respect to an action, and establish the connection with post-groupoids. In particular, a relative Rota-Baxter operator naturally induces a post-groupoid. Moreover, a relative Rota-Baxter operator also gives rise to a matched pair of groupoids, which plays important roles in our later study of quiver-theoretical solutions of the Yang-Baxter equation.

\begin{defi}
   Let $ \rightharpoonup:\G_{\beta}\times_{\pi}\HH\to \HH$ be a left action of a  groupoid $\xymatrix{ (\G \ar@<0.5ex>[r]^{\alpha} \ar[r]_{\beta} & M},\cdot_\G,\iota^\G,\inv_\G)$ on a group bundle $(\HH\stackrel{\pi}{\longrightarrow}M,\cdot_\HH,\iota^{\HH},\inv_{\HH})$.  A smooth map $\B:\HH\to\G$ is called a {\bf relative Rota-Baxter operator} on $\G$ with respect to the action $\rightharpoonup$ if
   \begin{eqnarray}
  \label{eq:s-p}\alpha(\B(\delta))&=&\pi(\delta),\\
  \label{eq:unit1}\B(\iota^\HH_m)&=&\iota^\G_m,\\
  \label{eq:rRB}
    \B(\delta_1)\cdot_\G\B(\delta_2)&=&\B\Big(\delta_1\cdot_\HH (\B(\delta_1)\rightharpoonup\delta_2)\Big),
  \end{eqnarray}
  for all $\delta, \delta_1, \delta_2\in\HH$ satisfying $\beta(\B(\delta_1))=\pi(\delta_2)$ and $m\in M$.
\end{defi}

Recall from Theorem \ref{thm:subadj} that a post-groupoid naturally gives to the Grossman-Larson groupoid and an action. It also naturally gives rise to a relative Rota-Baxter operator.

\begin{thm}\label{thm:post-RB}
  Let  $(\G\stackrel{\pi}{\longrightarrow}M,\cdot,\Phi,\rhd)$  be a post-groupoid. Then the identity map $\Id:\G\to\G$ is a relative Rota-Baxter operator on the Grossman-Larson groupoid $(\xymatrix{ \G \ar@<0.5ex>[r]^{\alpha=\pi} \ar[r]_{\beta=\Phi} & M},\star,\iota,\inv_\rhd)$ with respect to the action $\rhd.$
\end{thm}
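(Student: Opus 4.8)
The plan is to verify directly the three defining axioms of a relative Rota-Baxter operator for the identity map $\Id:\G\to\G$, using the structure maps of the Grossman-Larson groupoid $(\xymatrix{ \G \ar@<0.5ex>[r]^{\alpha=\pi} \ar[r]_{\beta=\Phi} & M},\star,\iota,\inv_\rhd)$ from Theorem \ref{thm:subadj} and the action $\rhd$ of this groupoid on the group bundle $(\G\stackrel{\pi}{\longrightarrow}M,\cdot,\iota,\inv)$, which is also furnished by Theorem \ref{thm:subadj}. First I would note that the action $\rhd$ is a left action of the Grossman-Larson groupoid on $\pi:\G\to M$ in the sense of Definition \ref{defi:action-g}: this is precisely the last sentence of Theorem \ref{thm:subadj}, so nothing new needs to be proved there. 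The composability condition $\beta(\B(\delta_1))=\pi(\delta_2)$ in \eqref{eq:rRB} becomes, under $\B=\Id$ and $\beta=\Phi$, exactly $\Phi(\delta_1)=\pi(\delta_2)$, i.e. $(\delta_1,\delta_2)\in\G_\Phi\times_\pi\G$, so the domains match up correctly.

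The three axioms are then checked as follows. For \eqref{eq:s-p}: with $\B=\Id$ we need $\alpha(\Id(\delta))=\pi(\delta)$, i.e. $\pi(\delta)=\pi(\delta)$, which is immediate since $\alpha=\pi$ for the Grossman-Larson groupoid. For \eqref{eq:unit1}: we need $\Id(\iota_m)=\iota_m$, which is trivial — and one observes that the unit section of the Grossman-Larson groupoid is the same $\iota$ as that of the group bundle, again by Theorem \ref{thm:subadj}. For \eqref{eq:rRB}: we must show, for all $(\delta_1,\delta_2)\in\G_\Phi\times_\pi\G$,
\[
\Id(\delta_1)\star\Id(\delta_2)=\Id\Big(\delta_1\cdot(\Id(\delta_1)\rhd\delta_2)\Big),
\]
that is, $\delta_1\star\delta_2=\delta_1\cdot(\delta_1\rhd\delta_2)$. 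But this is literally the definition of the Grossman-Larson multiplication $\star$ given in Theorem \ref{thm:subadj}. Hence all three conditions hold tautologically once the correct identifications of source/target/unit are made.

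The main (and only) obstacle is purely bookkeeping: one must make sure that the structure used on the target side of $\B$ is the \emph{Grossman-Larson} groupoid structure $(\star,\iota,\inv_\rhd)$, while the action $\rhd$ is taken as an action of that groupoid on the \emph{original} group bundle $(\G,\cdot,\iota,\inv)$; the content of the statement is really that these two structures, which both live on the same underlying set $\G$, are compatible in the way \eqref{eq:s-p}--\eqref{eq:rRB} demand, and this compatibility is exactly what was established in Theorem \ref{thm:subadj}. So the proof amounts to citing Theorem \ref{thm:subadj} for the groupoid and action axioms, and then observing that \eqref{eq:rRB} is the definition of $\star$. I would write it in two or three short lines.
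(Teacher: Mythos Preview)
Your proposal is correct and follows exactly the paper's approach: the paper's own proof is a single sentence noting that \eqref{eq:rRB} for $\Id$ is precisely the definition of $\star$, which is your main observation as well. Your additional verification of \eqref{eq:s-p} and \eqref{eq:unit1} is harmless bookkeeping that the paper leaves implicit.
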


\begin{proof}
  The relative Rota-Baxter relation \eqref{eq:rRB} for $\Id$ follows directly form the definition of the Grossman-Larson groupoid multiplication $\star$.
\end{proof}

Let $\B:\HH\to\G$  be a relative Rota-Baxter operator on a groupoid $\G$ with respect to an action $\rightharpoonup$ on a group bundle $\HH$. On $\HH$, define $\Phi_\B:\HH\to M$ by
\begin{equation}\label{eq:RB-Phi}
  \Phi_\B(\delta)=\beta(\B(\delta)),\quad\forall \delta\in \HH.
\end{equation}
By  \eqref{eq:unit1}, we have
  $$
  \Phi_\B(\iota^\HH_m)=\beta(\B(\iota^\HH_m))=\beta(\iota^\G_m)=m.
  $$
Define $\rhd_\B:\HH_{\Phi_\B}\times_\pi \HH\to \HH$ by
\begin{equation}\label{eq:RB-rhd}
  \delta_1\rhd_\B \delta_2=\B(\delta_1)\rightharpoonup\delta_2.
\end{equation}
It is obvious that $L^{\rhd_\B}_\delta=\B(\delta)\rightharpoonup\cdot$ is invertible. Since $\alpha (\B(\delta))=\pi(\delta)$ for all $\delta\in \HH$, we have
 $$
\pi(\delta_1\rhd_\B \delta_2)=\pi(\B(\delta_1)\rightharpoonup\delta_2)=\alpha (\B(\delta_1))=\pi(\delta_1).
 $$

\begin{thm}\label{thm:RB-post}
Let $\rightharpoonup:\G_{\beta}\times_{\pi}\HH\to \HH$ be an action of a groupoid $\xymatrix{ (\G \ar@<0.5ex>[r]^{\alpha} \ar[r]_{\beta} & M},\cdot_\G,\iota^\G,\inv_\G)$ on a group bundle $(\HH\stackrel{\pi}{\longrightarrow}M,\cdot_\HH)$.   Let $\B:\HH\to\G$  be a relative Rota-Baxter operator. Then $(\HH\stackrel{\pi}{\longrightarrow}M,\cdot_\HH,\Phi_\B,\rhd_\B)$ is a post-groupoid, where $\Phi_\B$ and $\rhd_\B$ are given by \eqref{eq:RB-Phi} and \eqref{eq:RB-rhd} respectively.
\end{thm}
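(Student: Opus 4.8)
The plan is to verify the three axioms (i)--(iii) of Definition \ref{defi:post-groupoid}, since the remaining requirements have already been observed in the discussion preceding the theorem: $(\HH\stackrel{\pi}{\longrightarrow}M,\cdot_\HH)$ is a group bundle by hypothesis; $\Phi_\B(\iota^\HH_m)=m$ (so $\Phi_\B$ is surjective) and $\pi(\delta_1\rhd_\B\delta_2)=\pi(\delta_1)$ hold by \eqref{eq:unit1} and \eqref{eq:s-p}; and $L^{\rhd_\B}_\delta=\B(\delta)\rightharpoonup(\cdot)$ is invertible because $\rightharpoonup$ is an action of $\G$ on the group bundle $\HH$. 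Before the computations I would record the composability bookkeeping, which all reduces to \eqref{eq:s-p} and \eqref{eq:RB-Phi}: whenever $\Phi_\B(\delta_1)=\pi(\delta_2)$ we have $\beta(\B(\delta_1))=\Phi_\B(\delta_1)=\pi(\delta_2)=\alpha(\B(\delta_2))$, so $(\B(\delta_1),\B(\delta_2))$ is $\cdot_\G$-composable; and since $\pi(\delta_1\rhd_\B\delta_2)=\pi(\delta_1)$, the pair $(\delta_1,\delta_1\rhd_\B\delta_2)$ is $\cdot_\HH$-composable. Thus all the expressions appearing in Axioms (i)--(iii) are defined.

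For Axiom (i), I would unfold the definitions $\Phi_\B=\beta\circ\B$ and $\delta_1\rhd_\B\delta_2=\B(\delta_1)\rightharpoonup\delta_2$, apply the relative Rota--Baxter identity \eqref{eq:rRB}, and then use the groupoid identity $\beta(\gamma_1\cdot_\G\gamma_2)=\beta(\gamma_2)$:
\[
\Phi_\B\big(\delta_1\cdot_\HH(\delta_1\rhd_\B\delta_2)\big)=\beta\Big(\B\big(\delta_1\cdot_\HH(\B(\delta_1)\rightharpoonup\delta_2)\big)\Big)=\beta\big(\B(\delta_1)\cdot_\G\B(\delta_2)\big)=\beta(\B(\delta_2))=\Phi_\B(\delta_2).
\]

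Axiom (ii) is immediate: by Definition \ref{defi:action-g} the map $\B(\delta_1)\rightharpoonup(\cdot)$ is a homomorphism of groups from $\HH_{\beta(\B(\delta_1))}$ to $\HH_{\alpha(\B(\delta_1))}$, so for $\delta_2,\delta_2'$ with $\pi(\delta_2)=\pi(\delta_2')$ we get $\delta_1\rhd_\B(\delta_2\cdot_\HH\delta_2')=\B(\delta_1)\rightharpoonup(\delta_2\cdot_\HH\delta_2')=(\B(\delta_1)\rightharpoonup\delta_2)\cdot_\HH(\B(\delta_1)\rightharpoonup\delta_2')=(\delta_1\rhd_\B\delta_2)\cdot_\HH(\delta_1\rhd_\B\delta_2')$. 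For Axiom (iii), I would chain the action axiom $\gamma_1\rightharpoonup(\gamma_2\rightharpoonup\delta)=(\gamma_1\cdot_\G\gamma_2)\rightharpoonup\delta$ from \eqref{laction} with \eqref{eq:rRB}:
\begin{align*}
\delta_1\rhd_\B(\delta_2\rhd_\B\delta_3)
&=\B(\delta_1)\rightharpoonup\big(\B(\delta_2)\rightharpoonup\delta_3\big)=\big(\B(\delta_1)\cdot_\G\B(\delta_2)\big)\rightharpoonup\delta_3\\
&=\B\big(\delta_1\cdot_\HH(\delta_1\rhd_\B\delta_2)\big)\rightharpoonup\delta_3=\big(\delta_1\cdot_\HH(\delta_1\rhd_\B\delta_2)\big)\rhd_\B\delta_3,
\end{align*}
where the last application of $\rhd_\B$ is legitimate because Axiom (i), just established, gives $\Phi_\B\big(\delta_1\cdot_\HH(\delta_1\rhd_\B\delta_2)\big)=\Phi_\B(\delta_2)=\pi(\delta_3)$.

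I do not expect a genuine obstacle here: the whole content is that \eqref{eq:rRB} encodes Axioms (i) and (iii), while the group-bundle-action property of $\rightharpoonup$ encodes Axiom (ii). The only point demanding attention is confirming, at each step, that all the source/target/$\pi$-matching constraints hold so that the compositions and actions written down are defined, and these follow mechanically from \eqref{eq:s-p}, \eqref{eq:RB-Phi}, and the groupoid axioms.
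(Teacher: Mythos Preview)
Your proposal is correct and follows essentially the same approach as the paper's proof: you verify Axioms (i)--(iii) by unfolding the definitions and invoking \eqref{eq:rRB} for (i) and (iii) and the group-bundle-action property for (ii), exactly as the paper does. Your version is in fact slightly more careful about the composability bookkeeping (e.g., checking that $\Phi_\B(\delta_1\cdot_\HH(\delta_1\rhd_\B\delta_2))=\pi(\delta_3)$ before applying $\rhd_\B$ in the last step), which the paper leaves implicit.
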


\begin{proof}
  First by \eqref{eq:rRB}, for all $\gamma,\delta\in\HH$ satisfying $\beta(\B(\gamma))=\pi(\delta)$, we have
  \begin{eqnarray*}
    \Phi_\B(\gamma\cdot_\HH(\gamma\rhd_\B\delta))=\beta(\B(\gamma\cdot_\HH(\gamma\rhd_\B\delta)))=\beta(\B(\gamma)\cdot_\G\B(\delta))=\beta(\B(\delta))=\Phi_\B(\delta),
  \end{eqnarray*}
 which implies that Axiom (i)   in Definition \ref{defi:post-groupoid} holds.

 Axiom (ii) in Definition \ref{defi:action} implies that   Axiom (ii)   in Definition \ref{defi:post-groupoid} holds.

 Finally, for all $\gamma_1,\gamma_2,\gamma_3\in \HH$ satisfying $\beta(\B(\gamma_1))=\pi(\gamma_2)$ and $\beta(\B(\gamma_2))=\pi(\gamma_3)$, we have
 \begin{eqnarray*}
   \gamma_1\rhd_\B(\gamma_2\rhd_\B\gamma_3)&=&\B(\gamma_1)\rightharpoonup( \B(\gamma_2)\rightharpoonup\gamma_3)=(\B(\gamma_1)\cdot_\G\B(\gamma_2))\rightharpoonup \gamma_3\\
   &=& (\B(\gamma_1\cdot_\HH(\gamma_1\rhd_\B\gamma_2))\rightharpoonup\gamma_3=(\gamma_1\cdot_\HH(\gamma_1\rhd_\B\gamma_2))\rhd_\B\gamma_3.
 \end{eqnarray*}
  Therefore, $(\HH\stackrel{\pi}{\longrightarrow}M,\cdot_\HH,\Phi_\B,\rhd_\B)$ is a post-groupoid.
\end{proof}

The following result is a generalization of \cite[Proposition 4.28]{GLS}.

\begin{cor}\label{cor:descendent}
  Let $\B:\HH\to\G$  be a  relative Rota-Baxter operator on the groupoid $\G$ with respect to an action $\rightharpoonup$ on a group bundle $\HH$. Then  $\xymatrix{(\HH \ar@<0.5ex>[r]^{\alpha=\pi} \ar[r]_{\beta_\B} & M},\star_\B,\iota^\HH,\inv_\B)$ is a groupoid, called the {\bf descendent groupoid} of the relative Rota-Baxter operator $\B,$ where the target map $\beta_\B$, the groupoid multiplication $\star_\B$ and the inverse map $\inv_\B$ are given by
  \begin{eqnarray}
  \label{des1} \beta_\B&=&\beta\circ\B,\\
   \label{des2} \delta_1\star_\B\delta_2&=&\delta_1\cdot_\HH ( \B(\delta_1)\rightharpoonup \delta_2),\tforall \delta_1,\delta_2\in \HH ~\mbox{satisfying}~\B(\delta_1)=\pi(\delta_2),\\
  \label{des3}  \inv_\B(\delta)&=&(  \B(\delta)\rightharpoonup \cdot)^{-1}\inv_\HH(\delta).
  \end{eqnarray}

  Moreover, $\B$ is  a homomorphism from the descendent groupoid $\xymatrix{(\HH \ar@<0.5ex>[r]^{\alpha=\pi} \ar[r]_{\beta_\B} & M},\star_\B,\iota^\HH,\inv_\B)$ to the groupoid $\xymatrix{ (\G \ar@<0.5ex>[r]^{\alpha} \ar[r]_{\beta} & M},\cdot_\G,\iota^\G,\inv_\G)$.
\end{cor}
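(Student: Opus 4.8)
The statement is a groupoid-level analogue of \cite[Proposition 4.28]{GLS}, and the natural strategy is to deduce it from Theorem \ref{thm:RB-post} together with Theorem \ref{thm:subadj}, rather than re-verifying all groupoid axioms by hand. First I would invoke Theorem \ref{thm:RB-post}: the relative Rota-Baxter operator $\B:\HH\to\G$ equips $\HH$ with a post-groupoid structure $(\HH\stackrel{\pi}{\longrightarrow}M,\cdot_\HH,\Phi_\B,\rhd_\B)$, where $\Phi_\B=\beta\circ\B$ and $\delta_1\rhd_\B\delta_2=\B(\delta_1)\rightharpoonup\delta_2$. Then I would apply Theorem \ref{thm:subadj} to this post-groupoid: its Grossman-Larson groupoid has source $\alpha=\pi$, target $\beta=\Phi_\B$, and multiplication $\delta_1\star_\B\delta_2=\delta_1\cdot_\HH(\delta_1\rhd_\B\delta_2)=\delta_1\cdot_\HH(\B(\delta_1)\rightharpoonup\delta_2)$, with inverse $\inv_\B(\delta)=(L^{\rhd_\B}_\delta)^{-1}\inv_\HH(\delta)=(\B(\delta)\rightharpoonup\cdot)^{-1}\inv_\HH(\delta)$, and unit section $\iota^\HH$ (note $L^{\rhd_\B}_\delta=\B(\delta)\rightharpoonup\cdot$ is invertible by hypothesis, as remarked before Theorem \ref{thm:RB-post}). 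These are exactly formulas \eqref{des1}, \eqref{des2}, \eqref{des3}, so the groupoid structure claimed for $\HH$ is precisely the Grossman-Larson groupoid of the post-groupoid $(\HH\stackrel{\pi}{\longrightarrow}M,\cdot_\HH,\Phi_\B,\rhd_\B)$, and the groupoid axioms come for free from Theorem \ref{thm:subadj}.

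It remains to prove the last assertion: that $\B$ is a base-preserving groupoid homomorphism from $\xymatrix{(\HH \ar@<0.5ex>[r]^{\alpha=\pi} \ar[r]_{\beta_\B} & M},\star_\B,\iota^\HH,\inv_\B)$ to $\xymatrix{(\G \ar@<0.5ex>[r]^{\alpha} \ar[r]_{\beta} & M},\cdot_\G,\iota^\G,\inv_\G)$. I would check the three conditions in turn. Compatibility with source is \eqref{eq:s-p}, namely $\alpha(\B(\delta))=\pi(\delta)$; compatibility with target is the definition $\beta_\B=\beta\circ\B$ of \eqref{des1}; and compatibility with units is \eqref{eq:unit1}, $\B(\iota^\HH_m)=\iota^\G_m$. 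Finally, multiplicativity: for composable $\delta_1,\delta_2$ (so $\beta_\B(\delta_1)=\pi(\delta_2)$, i.e. $\beta(\B(\delta_1))=\pi(\delta_2)$, which is exactly the condition making the right-hand side of \eqref{eq:rRB} defined), the Rota-Baxter identity \eqref{eq:rRB} gives
\[
\B(\delta_1\star_\B\delta_2)=\B\Big(\delta_1\cdot_\HH(\B(\delta_1)\rightharpoonup\delta_2)\Big)=\B(\delta_1)\cdot_\G\B(\delta_2),
\]
so $\B$ preserves multiplication; preservation of inverses then follows automatically, since any map between groupoids that preserves source, target, units, and composition preserves inverses by uniqueness of inverses.

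\textbf{Main obstacle.} There is essentially no deep obstacle here: once the groupoid structure on $\HH$ is recognized as the Grossman-Larson groupoid of the post-groupoid produced by Theorem \ref{thm:RB-post}, everything reduces to matching formulas and reading off the homomorphism axioms from \eqref{eq:s-p}--\eqref{eq:rRB}. The one point requiring a little care is the bookkeeping of domains of definition — verifying that the composability condition $\beta_\B(\delta_1)=\pi(\delta_2)$ for $\star_\B$ coincides with the condition $\beta(\B(\delta_1))=\pi(\delta_2)$ under which \eqref{eq:rRB} is stated, and that the base spaces and anchors all agree so that the homomorphism is genuinely base-preserving. If one prefers a self-contained argument not citing Theorem \ref{thm:subadj}, the alternative is to verify associativity of $\star_\B$, the unit laws, and the inverse laws directly from \eqref{eq:rRB}, \eqref{eq:unit1}, and the action axioms \eqref{laction}; this is the route taken in \cite[Proposition 4.28]{GLS} in the group case, and the groupoid version is a routine if slightly lengthy adaptation.
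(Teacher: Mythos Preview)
Your proposal is correct and follows exactly the paper's approach: the paper's proof consists of the single line ``It follows from Theorem \ref{thm:RB-post} and Theorem \ref{thm:subadj} directly,'' and you have faithfully unpacked precisely this, identifying the descendent groupoid as the Grossman-Larson groupoid of the post-groupoid produced by Theorem \ref{thm:RB-post} and reading off the homomorphism property from \eqref{eq:s-p}--\eqref{eq:rRB}.
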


\begin{proof}
  It follows from Theorem \ref{thm:RB-post} and Theorem \ref{thm:subadj} directly.
\end{proof}

We will denote the descendent groupoid $\xymatrix{(\HH \ar@<0.5ex>[r]^{\alpha=\pi} \ar[r]_{\beta_\B} & M},\star_\B,\iota^\HH,\inv_\B)$ by $\HH_\B$.

The following lemma gives the relation between the action $\rightharpoonup$ and the descendent groupoid multiplication $\star_\B$, which plays important role in our study of matched pairs of groupoids.

\begin{lem}\label{lem:f1} For all $\gamma\in\G$ and $\delta_1,\delta_2\in\HH$ satisfying $\beta(\gamma)=\pi(\delta_1)$ and $\beta(\B(\delta_1))=\pi( \delta_2)$, we have
$$
 \gamma\rightharpoonup (\delta_1\star_\B \delta_2)=  (\gamma\rightharpoonup\delta_1)\cdot_\HH  ((\gamma\cdot_\G\B(\delta_1))\rightharpoonup \delta_2).
$$
\end{lem}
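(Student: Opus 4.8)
The plan is to expand the left-hand side using the definition \meqref{des2} of $\star_\B$, then push the action $\gamma\rightharpoonup\cdot$ through the product and through the iterated action, using exactly two facts: that $\gamma\rightharpoonup\cdot$ is a group isomorphism on fibers (Definition \ref{defi:action-g}), and the associativity identity $\gamma_1\rightharpoonup(\gamma_2\rightharpoonup\delta)=(\gamma_1\cdot_\G\gamma_2)\rightharpoonup\delta$ from \eqref{laction}.

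Concretely, I would first record the compatibility of base points so that every expression below is well defined: from $\beta(\gamma)=\pi(\delta_1)$ and $\alpha(\B(\delta_1))=\pi(\delta_1)$ (by \meqref{eq:s-p}) the product $\gamma\cdot_\G\B(\delta_1)$ exists, with $\beta(\gamma\cdot_\G\B(\delta_1))=\beta(\B(\delta_1))=\pi(\delta_2)$; moreover $\B(\delta_1)\rightharpoonup\delta_2$ lies in $\HH_{\alpha(\B(\delta_1))}=\HH_{\pi(\delta_1)}=\HH_{\beta(\gamma)}$, the same fiber as $\delta_1$, so $\delta_1\cdot_\HH(\B(\delta_1)\rightharpoonup\delta_2)$ makes sense and sits in $\HH_{\beta(\gamma)}$, on which $\gamma\rightharpoonup\cdot$ acts. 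Then the computation is the chain
\begin{eqnarray*}
\gamma\rightharpoonup(\delta_1\star_\B\delta_2)
&=&\gamma\rightharpoonup\big(\delta_1\cdot_\HH(\B(\delta_1)\rightharpoonup\delta_2)\big)\\
&=&(\gamma\rightharpoonup\delta_1)\cdot_\HH\big(\gamma\rightharpoonup(\B(\delta_1)\rightharpoonup\delta_2)\big)\\
&=&(\gamma\rightharpoonup\delta_1)\cdot_\HH\big((\gamma\cdot_\G\B(\delta_1))\rightharpoonup\delta_2\big),
\end{eqnarray*}
where the second equality uses that $\gamma\rightharpoonup\cdot$ is a group homomorphism and the third uses the action axiom in \eqref{laction}.

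I do not expect any genuine obstacle here: the statement is a formal consequence of the defining properties of a groupoid action on a group bundle together with the definition of $\star_\B$, and the only thing requiring care is the bookkeeping of source and target maps to verify that all composites and products are legal — which is handled in the first step above. Hence the lemma follows.
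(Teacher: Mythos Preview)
Your proof is correct and follows exactly the same approach as the paper: expand $\star_\B$ via \meqref{des2}, apply the group-homomorphism property of $\gamma\rightharpoonup\cdot$, then use the action associativity from \eqref{laction}. Your additional bookkeeping on source/target compatibility is a welcome explicit check that the paper leaves implicit.
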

\begin{proof}
By straightforward computations, we have
    \begin{eqnarray*}
    \gamma\rightharpoonup(\delta_1\star_\B \delta_2)&=&  \gamma\rightharpoonup (\delta_1\cdot_\HH ( \B(\delta_1)\rightharpoonup\delta_2))\\
    &=&  (\gamma\rightharpoonup \delta_1)\cdot_\HH  (\gamma\rightharpoonup   ( \B(\delta_1)\rightharpoonup\delta_2))\\
    &=&   (\gamma\rightharpoonup\delta_1)\cdot_\HH  ((\gamma\cdot_\G\B(\delta_1))\rightharpoonup \delta_2),
  \end{eqnarray*}
  which finishes the proof.
\end{proof}

\begin{defi}\cite{Mackenzie}
  A {\bf matched pair of groupoids} is a pair of groupoids $(\G,\KK)$ over the same base $M$ endowed with a left action $\rightharpoonup:\G_{\beta_\G}\times_{\alpha_\KK} \KK\to \KK$ of $\G$ on $\alpha_\KK:\KK\to M$ and a right action $\leftharpoonup:\G_{\beta_\G}\times_{\alpha_\KK} \KK\to \G$ of $\KK$ on $\beta_\G:\G\to M$ satisfying
  \begin{eqnarray}
    \beta_\KK(\gamma\rightharpoonup \delta)&=&\alpha_\G(\gamma\leftharpoonup \delta),\\
    \gamma\rightharpoonup (\delta_1\cdot_\KK\delta_2)&=&(\gamma\rightharpoonup\delta_1)\cdot_\KK( (\gamma\leftharpoonup\delta_1)\rightharpoonup \delta_2),\\
    (\gamma_1\cdot_\G\gamma_2)\leftharpoonup \delta&=& (\gamma_1\leftharpoonup (\gamma_2\rightharpoonup\delta))\cdot_\G(\gamma_2 \leftharpoonup \delta),
  \end{eqnarray}
  for all $\gamma,\gamma_1,\gamma_2\in\G$ and $\delta,\delta_1,\delta_2\in\KK$ such that the compositions are possible.
\end{defi}
We usually denote a matched pair of groupoids by $(\G,\KK,\rightharpoonup,\leftharpoonup)$.

Matched pairs of groupoids can be characterized by their doubles. More precisely, $\KK_{\beta_\KK}\times_{\alpha_\G}\G$ is a groupoid over $M$, where the sauce map $\alpha$ and the target map $\beta$ are given by
$
\alpha(\delta,\gamma)=\alpha_\KK(\delta),~\beta(\delta,\gamma)=\beta_\G(\gamma),
$
and the multiplication $\cdot$ is given by
$$
(\delta_1,\gamma_1)\cdot (\delta_2,\gamma_2)=(\delta_1\cdot_\KK(\gamma_1\rightharpoonup  \delta_2),(\gamma_1\leftharpoonup  \delta_2)\cdot_\G\gamma_2).
$$

Another important feature of relative Rota-Baxter operators is that they give rise to matched pairs of groupoids.
\begin{thm}\label{thm:RB-mp}
 Let $\B:\HH\to\G$  be a  relative Rota-Baxter operator on $\xymatrix{ (\G \ar@<0.5ex>[r]^{\alpha_\G} \ar[r]_{\beta_\G} & M},\cdot_\G,\iota^\G,\inv_\G)$    with respect to an action $\rightharpoonup$ on a   group bundle $(\HH\stackrel{\pi}{\longrightarrow}M,\cdot_\HH)$. Then  $(\G,\HH_\B,\rightharpoonup,\leftharpoonup)$   form a matched pair of groupoids, where the right action $\leftharpoonup:\G_{\beta_\G}\times_{\pi}\HH\to \G$ of $\HH_\B$ on $\beta_\G:\G\to M$ is given by
 \begin{equation}\label{ra}
 \gamma\leftharpoonup \delta =\inv_\G(\B(\gamma\rightharpoonup \delta))\cdot_\G \gamma \cdot_\G\B(\delta), \tforall \delta\in \HH,~\gamma\in \G ~\mbox{satisfying}~\pi(\delta)=\beta_\G(\gamma).
\end{equation}
\end{thm}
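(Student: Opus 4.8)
The plan is to verify, in order: (1) that the map $\leftharpoonup$ given by \eqref{ra} is well defined and has the expected source and target; (2) that it is a right action of the descendent groupoid $\HH_\B$ (Corollary \ref{cor:descendent}) on $\beta_\G\colon\G\to M$; and (3) the three compatibility axioms of a matched pair. The ingredients are the action axioms \eqref{laction} together with the automorphism property of $\rightharpoonup$, the identities $\alpha_\G(\B(\delta))=\pi(\delta)$ from \eqref{eq:s-p} and $\B(\iota^\HH_m)=\iota^\G_m$ from \eqref{eq:unit1}, the relative Rota-Baxter relation \eqref{eq:rRB}, Lemma \ref{lem:f1}, and — most importantly — the identity obtained by simply rearranging \eqref{ra},
\[
\B(\gamma\rightharpoonup\delta)\cdot_\G(\gamma\leftharpoonup\delta)=\gamma\cdot_\G\B(\delta)\qquad\text{whenever }\pi(\delta)=\beta_\G(\gamma),
\]
which I will call $(\ast)$ and which is the workhorse of the whole argument.

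First, well-definedness and source--target bookkeeping. Since $\rightharpoonup$ sends a pair with $\beta_\G(\gamma)=\pi(\delta)$ to $\gamma\rightharpoonup\delta$ with $\pi(\gamma\rightharpoonup\delta)=\alpha_\G(\gamma)$, and $\alpha_\G(\B(\cdot))=\pi(\cdot)$, one checks successively that $\inv_\G(\B(\gamma\rightharpoonup\delta))$ has target $\alpha_\G(\gamma)$, that $\inv_\G(\B(\gamma\rightharpoonup\delta))\cdot_\G\gamma$ is defined with target $\beta_\G(\gamma)=\pi(\delta)=\alpha_\G(\B(\delta))$, and hence that the triple product in \eqref{ra} is defined, with $\alpha_\G(\gamma\leftharpoonup\delta)=\beta_\G(\B(\gamma\rightharpoonup\delta))=\beta_\B(\gamma\rightharpoonup\delta)$ and $\beta_\G(\gamma\leftharpoonup\delta)=\beta_\G(\B(\delta))=\beta_\B(\delta)$. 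The first of these is precisely the first matched-pair axiom $\beta_\KK(\gamma\rightharpoonup\delta)=\alpha_\G(\gamma\leftharpoonup\delta)$ with $\KK=\HH_\B$, and the second matches the target condition required of a right action of $\HH_\B$. Unitality, $\gamma\leftharpoonup\iota^\HH_{\beta_\G(\gamma)}=\gamma$, follows since $\rightharpoonup$ preserves unit sections, so $\gamma\rightharpoonup\iota^\HH_{\beta_\G(\gamma)}=\iota^\HH_{\alpha_\G(\gamma)}$, together with \eqref{eq:unit1} and groupoid unitality.

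Next, the compatibility and action axioms. For the mixed axiom $\gamma\rightharpoonup(\delta_1\star_\B\delta_2)=(\gamma\rightharpoonup\delta_1)\star_\B\big((\gamma\leftharpoonup\delta_1)\rightharpoonup\delta_2\big)$, expand the right-hand side via $\delta\star_\B\delta'=\delta\cdot_\HH(\B(\delta)\rightharpoonup\delta')$, use \eqref{laction} to write $\B(\gamma\rightharpoonup\delta_1)\rightharpoonup\big((\gamma\leftharpoonup\delta_1)\rightharpoonup\delta_2\big)=\big(\B(\gamma\rightharpoonup\delta_1)\cdot_\G(\gamma\leftharpoonup\delta_1)\big)\rightharpoonup\delta_2$, apply $(\ast)$ to replace $\B(\gamma\rightharpoonup\delta_1)\cdot_\G(\gamma\leftharpoonup\delta_1)$ by $\gamma\cdot_\G\B(\delta_1)$, and what remains is exactly Lemma \ref{lem:f1}. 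For the axiom $(\gamma_1\cdot_\G\gamma_2)\leftharpoonup\delta=\big(\gamma_1\leftharpoonup(\gamma_2\rightharpoonup\delta)\big)\cdot_\G(\gamma_2\leftharpoonup\delta)$, expand both sides by \eqref{ra}; on the left use $(\gamma_1\cdot_\G\gamma_2)\rightharpoonup\delta=\gamma_1\rightharpoonup(\gamma_2\rightharpoonup\delta)$, and on the right the adjacent factors $\B(\gamma_2\rightharpoonup\delta)\cdot_\G\inv_\G(\B(\gamma_2\rightharpoonup\delta))$ collapse to a unit, so both sides reduce to $\inv_\G\big(\B(\gamma_1\rightharpoonup(\gamma_2\rightharpoonup\delta))\big)\cdot_\G\gamma_1\cdot_\G\gamma_2\cdot_\G\B(\delta)$. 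Finally, for multiplicativity of $\leftharpoonup$, namely $\gamma\leftharpoonup(\delta_1\star_\B\delta_2)=(\gamma\leftharpoonup\delta_1)\leftharpoonup\delta_2$: using $\B(\delta_1\star_\B\delta_2)=\B(\delta_1)\cdot_\G\B(\delta_2)$ from \eqref{eq:rRB}, the left-hand side is $\inv_\G\big(\B(\gamma\rightharpoonup(\delta_1\star_\B\delta_2))\big)\cdot_\G\gamma\cdot_\G\B(\delta_1)\cdot_\G\B(\delta_2)$, while iterating \eqref{ra} twice and using \eqref{eq:rRB} again gives the right-hand side as $\inv_\G\big(\B(\gamma\rightharpoonup\delta_1)\cdot_\G\B((\gamma\leftharpoonup\delta_1)\rightharpoonup\delta_2)\big)\cdot_\G\gamma\cdot_\G\B(\delta_1)\cdot_\G\B(\delta_2)$; it therefore suffices to prove $\B(\gamma\rightharpoonup\delta_1)\cdot_\G\B\big((\gamma\leftharpoonup\delta_1)\rightharpoonup\delta_2\big)=\B\big(\gamma\rightharpoonup(\delta_1\star_\B\delta_2)\big)$, and applying \eqref{eq:rRB} to the left-hand side, then $(\ast)$ and \eqref{laction} to rewrite $\B(\gamma\rightharpoonup\delta_1)\rightharpoonup\big((\gamma\leftharpoonup\delta_1)\rightharpoonup\delta_2\big)=(\gamma\cdot_\G\B(\delta_1))\rightharpoonup\delta_2$, reduces this once more to Lemma \ref{lem:f1}. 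This completes the verification that $(\G,\HH_\B,\rightharpoonup,\leftharpoonup)$ is a matched pair of groupoids.

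I expect the main obstacle to be organizational rather than conceptual: unlike the group-theoretic analogues in \cite{PostG} and \cite{GLS}, every product written above is only partially defined, so at each step one must confirm that sources and targets agree before invoking associativity, groupoid cancellation, or the action axioms. Routing the entire computation through the single identity $(\ast)$ and Lemma \ref{lem:f1} is what keeps the bookkeeping under control, and it is also the place where a composition-sensitive slip would be easiest to commit.
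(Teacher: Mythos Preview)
Your proposal is correct and follows essentially the same route as the paper's proof: both verify well-definedness and the source--target conditions, establish that $\leftharpoonup$ is a right action of $\HH_\B$, and check the three matched-pair compatibilities using the same ingredients (the action axioms \eqref{laction}, the Rota--Baxter relation \eqref{eq:rRB}, and Lemma \ref{lem:f1}). Your explicit isolation of the identity $(\ast)$ as the organizing device is a clean presentational choice, but the underlying computations coincide with those in the paper's equations \eqref{eq33}, \eqref{eq11}, \eqref{eq22}, and the right-action verification.
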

\begin{proof} First we show that $\leftharpoonup:\G_{\beta_\G}\times_{\pi}\HH\to \G$ defined above is a right action of $\HH_\B$ on $\beta_\G:\G\to M$. Obviously, we have $\gamma\leftharpoonup \iota_{\beta_\G(\gamma)}=\gamma$ and
$$
\beta_\G( \gamma\leftharpoonup \delta) =\beta_\G(\inv^\G(\B(\gamma\rightharpoonup \delta))\cdot_\G \gamma \cdot_\G\B(\delta)) =\beta_\G( \B(\delta))=\beta_\B(\delta).
 $$
 By Lemma \ref{lem:f1} and \eqref{eq:rRB}, we have
 \begin{eqnarray*}
  && (\gamma\leftharpoonup \delta_1)\leftharpoonup \delta_2\\ &=&\Big(\inv^\G(\B(\gamma\rightharpoonup \delta_1))\cdot_\G \gamma \cdot_\G\B(\delta_1)\Big)\leftharpoonup \delta_2\\
   &=& \inv^\G\Big( \B((\inv^\G(\B(\gamma\rightharpoonup \delta_1))\cdot_\G \gamma \cdot_\G\B(\delta_1))\rightharpoonup\delta_2)\Big)\cdot_\G\Big(\inv^\G(\B(\gamma\rightharpoonup \delta_1))\cdot_\G \gamma \cdot_\G\B(\delta_1)\Big)\cdot_\G \B(\delta_2)\\
   &=&\inv^\G\Big( \B(\gamma\rightharpoonup \delta_1)\cdot_\G\B( (\inv^\G(\B(\gamma\rightharpoonup \delta_1))\cdot_\G \gamma \cdot_\G\B(\delta_1) )\rightharpoonup\delta_2)\Big)\cdot_\G\Big(\gamma \cdot_\G\B(\delta_1)\Big)\cdot_\G \B(\delta_2)\\
   &=&\inv^\G\Big( \B\big((\gamma\rightharpoonup \delta_1)\cdot_\HH(\B(\gamma\rightharpoonup \delta_1)\rightharpoonup( (\inv^\G(\B(\gamma\rightharpoonup \delta_1))\cdot_\G \gamma \cdot_\G\B(\delta_1) )\rightharpoonup\delta_2))\big)\Big)\cdot_\G\gamma \cdot_\G\B(\delta_1)\cdot_\G \B(\delta_2)\\
   &=&\inv^\G\Big( \B\big((\gamma\rightharpoonup \delta_1)\cdot_\HH(( \gamma \cdot_\G\B(\delta_1) )\rightharpoonup\delta_2)\big)\Big)\cdot_\G\gamma \cdot_\G\B(\delta_1\star_\B \delta_2)\\
   &=&\inv^\G\Big( \B\big(\gamma\rightharpoonup (\delta_1\star_\B \delta_2)\big)\Big)\cdot_\G\gamma \cdot_\G\B(\delta_1\star_\B \delta_2)\\
   &=&\gamma\leftharpoonup (\delta_1\star_\B\delta_2),
 \end{eqnarray*}
which implies that $\leftharpoonup$ is a right action.

By \eqref{ra}, we have
  \begin{eqnarray}\label{eq33}
  \alpha_\G(\gamma\leftharpoonup\delta)=\alpha_\G(\inv^\G(\B(\gamma\rightharpoonup \delta))\cdot_\G \gamma \cdot_\G\B(\delta))
=\beta_\G (\B(\gamma\rightharpoonup\delta))
= \beta_\B (\gamma\rightharpoonup\delta).
  \end{eqnarray}
 By Lemma \ref{lem:f1}, \eqref{des2}, \eqref{ra}, we have
  \begin{eqnarray}
  \label{eq11}  (\gamma\rightharpoonup\delta_1)\star_\B ((\gamma\leftharpoonup\delta_1)\rightharpoonup\delta_2)
   &=&  (\gamma\rightharpoonup\delta_1)\cdot_\HH \Big(\B(\gamma\rightharpoonup   \delta_1)\rightharpoonup  ((\gamma\leftharpoonup\delta_1)\rightharpoonup\delta_2)\Big)\\
   \nonumber &=&(\gamma\rightharpoonup\delta_1)\cdot_\HH \Big((\B(\gamma\rightharpoonup   \delta_1)\cdot_\G  (\gamma\leftharpoonup\delta_1))\rightharpoonup\delta_2\Big)\\
    \nonumber&=&(\gamma\rightharpoonup\delta_1)\cdot_\HH \Big((\gamma\cdot_\G\B(\delta_1) )\rightharpoonup\delta_2\Big)\\
    \nonumber&=& \gamma\rightharpoonup (\delta_1\star_\B \delta_2).
  \end{eqnarray}

 Apply \eqref{ra} directly, we have
  \begin{eqnarray}
  \label{eq22}  &&(\gamma_1\leftharpoonup (\gamma_2\rightharpoonup\delta))\cdot_\G(\gamma_2 \leftharpoonup \delta)\\
  \nonumber&=&\inv^\G(\B(\gamma_1\rightharpoonup (\gamma_2\rightharpoonup\delta)))\cdot_\G \gamma_1 \cdot_\G\B(\gamma_2\rightharpoonup\delta)\cdot_\G\inv^\G(\B(\gamma_2\rightharpoonup \delta))\cdot_\G \gamma_2 \cdot_\G\B(\delta)\\
   \nonumber &=&\inv^\G(\B((\gamma_1\cdot_\G\gamma_2)\rightharpoonup \delta))\cdot_\G \gamma_1\cdot_\G\gamma_2\cdot_\G\B(\delta)\\
    \nonumber&=&(\gamma_1\cdot_\G\gamma_2)\leftharpoonup\delta.
  \end{eqnarray}

  By \eqref{eq33}, \eqref{eq11} and \eqref{eq22}, we deduce that $(\G,\HH_\B,\rightharpoonup,\leftharpoonup)$ is a matched pair of groupoids.
\end{proof}

\section{Post-groupoids and quiver-theoretical solutions of the Yang-Baxter equation}\label{sec:4}

In this section, we show that post-groupoids, as well as relative Rota-Baxter operators, can be used to construct  quiver-theoretical solutions of the Yang-Baxter equation.
\begin{defi}
Let $M$ be a non-empty set. A {\bf quiver} over $M$ is a set $\A$ equipped with two maps $\alpha,\beta:\A\lon M$, called the source map and the target map, respectively. We will denote a quiver by $\xymatrix{ \A \ar@<0.5ex>[r]^{\alpha} \ar[r]_{\beta} & M}$ or simply by $\A$.
\end{defi}
\begin{rmk}
Let $\xymatrix{ (\G \ar@<0.5ex>[r]^{\alpha} \ar[r]_{\beta} & M},\cdot,\iota,\inv)$ be a groupoid. Forgetting the multiplication, the inverse map and the identity map, we obtain a quiver $\xymatrix{ \G \ar@<0.5ex>[r]^{\alpha} \ar[r]_{\beta} & M}$.
\end{rmk}
\begin{defi} A {\bf homomorphism} of quivers  from $\xymatrix{ \A \ar@<0.5ex>[r]^{\alpha_\A} \ar[r]_{\beta_\A} & M}$ to $\xymatrix{ \BB \ar@<0.5ex>[r]^{\alpha_\BB} \ar[r]_{\beta_\BB} & M}$ is a map $f:\A\to \BB$ satisfying
$
\alpha_\A=\alpha_\BB\circ f, ~
\beta_\A=\beta_\BB\circ f.
$
 In particular, if $f$ is invertible,  then $f$ is called an  {\bf isomorphism}.
\end{defi}

Let $M$ be a non-empty set. It is obvious that quivers over $M$ and homomorphisms between quivers  form a category $\Quiv(M)$. For quivers  $\xymatrix{ \A \ar@<0.5ex>[r]^{\alpha_\A} \ar[r]_{\beta_\A} & M}$ and $\xymatrix{\BB \ar@<0.5ex>[r]^{\alpha_\BB} \ar[r]_{\beta_\BB} & M}$, then $\A_{\beta_\A}\times_{\alpha_\BB}\BB$ is a quiver over $M$ with
 $$\alpha(a,b)=\alpha_\A(a),\quad\beta(a,b)=\beta_\BB(b),\quad \forall (a,b)\in \A_{\beta_\A}\times_{\alpha_\BB}\BB.$$ Thus the category $\Quiv(M)$ is a monoidal category, with   $\one=\xymatrix{ M \ar@<0.5ex>[r]^{\Id} \ar[r]_{\Id} & M}$.

Matched pairs of groupoids can be characterized by morphisms of quivers. Now we give an equivalent definition of matched pairs of groupoids.
\begin{defi}\mcite{Mackenzie}
A {\bf matched pair of groupoids} is a triple $(\G,\KK,\sigma)$, where $\G$ and $\KK$ are groupoids over the same base $M$ and
$$\sigma:\G_{\beta_\G}\times_{\alpha_\KK} \KK\lon \KK_{\beta_\KK}\times_{\alpha_\G} \G,\,\,\,\,(\gamma,\delta)\mapsto(\gamma\rightharpoonup \delta,\gamma\leftharpoonup \delta),\,\,\,\forall(\gamma,\delta)\in\G_{\beta_\G}\times_{\alpha_\KK} \KK$$
is a homomorphism of quivers satisfying the following conditions:
\begin{eqnarray}
\mlabel{MG-1}\iota_{\alpha_\KK(\delta)}\rightharpoonup \delta&=&\delta,\\
\mlabel{MG-2}\gamma_1\rightharpoonup(\gamma_2\rightharpoonup \delta)&=&(\gamma_1\gamma_2)\rightharpoonup \delta,\\
\mlabel{MG-3}(\gamma_1\gamma_2)\leftharpoonup \delta&=&\big(\gamma_1\leftharpoonup(\gamma_2\rightharpoonup \delta)\big)(\gamma_2\leftharpoonup \delta),\\
\mlabel{MG-4}\gamma\leftharpoonup \iota_{\beta_\G(\gamma)}&=& \gamma,\\
\mlabel{MG-5}(\gamma\leftharpoonup \delta_1)\leftharpoonup \delta_2&=&\gamma\leftharpoonup(\delta_1\delta_2),\\
\mlabel{MG-6}\gamma\rightharpoonup(\delta_1\delta_2)&=&(\gamma\rightharpoonup \delta_1)\big((\gamma\leftharpoonup \delta_1)\rightharpoonup \delta_2\big),
\end{eqnarray}
for all $\gamma,\gamma_1,\gamma_2\in\G$ and $\delta,\delta_1,\delta_2\in\KK$ such that the compositions are possible.
\end{defi}
The following concept was introduced by Andruskiewitsch in \mcite{Andruskiewitsch-1}.

\begin{defi}\mcite{Andruskiewitsch-1}
Let $\xymatrix{\A \ar@<0.5ex>[r]^{\alpha} \ar[r]_{\beta} & M}$ be a quiver. A quiver-theoretical solution of the {\bf Yang-Baxter equation} on the quiver $\A$ is an isomorphism $R$ of  quivers from  $\A_{\beta}\times_{\alpha}\A$ to $\A_{\beta}\times_{\alpha}\A$ satisfying:
\begin{eqnarray}
R_{12}R_{23}R_{12}=R_{23}R_{12}R_{23}:\A_{\beta}\times_{\alpha}\A_{\beta}\times_{\alpha}\A\lon \A_{\beta}\times_{\alpha}\A_{\beta}\times_{\alpha}\A,
\end{eqnarray}
where $R_{12}=R\times\Id_\A,~R_{23}=\Id_\A\times R$.

A quiver $\A$ with a quiver-theoretical solution of
the Yang-Baxter equation on $\A$ is called a {\bf braided quiver} and
is denoted   by $(\A,R)$. Moreover, we denote by
$R(x,y)=(x\rightharpoonup y,x\leftharpoonup y)$
for all $(x,y)\in \A_{\beta}\times_{\alpha}\A$.   $R$  is called {\bf non-degenerate} if for all $x\in \A$, the maps $x\rightharpoonup\cdot:\alpha^{-1}(\beta(x))\to\alpha^{-1}(\alpha(x))$ and $\cdot\leftharpoonup x:\beta^{-1}(\alpha(x))\to\beta^{-1}(\beta(x))$ are bijective.
\end{defi}

\begin{defi}
  Let  $(\A,R)$ and $(\A',R')$ be braided quivers. A {\bf homomorphism of braided quivers}   from $(\A,R)$ to $(\A',R')$ is a homomorphism of quivers $f:\A\to \A'$ such that
$
(f\times f)\circ R=R'\circ(f\times f).
$
\end{defi}

Non-degenerate braided quivers and homomorphisms between them form a category $\BS$.

\begin{rmk}
A quiver-theoretical solution of the Yang-Baxter equation on a quiver over $M$ is a  solution of the Yang-Baxter equation in the monoidal category $\Quiv(M)$.
\end{rmk}

\begin{defi} \mcite{Andruskiewitsch-1,MM}
A {\bf braided groupoid} over $M$ is a pair $(\G,\sigma)$, where $\G$ is a groupoid and $\sigma:\G_\beta\times_\alpha \G\lon \G_\beta\times_\alpha \G$ is a homomorphism of quivers such that
\begin{enumerate}
\item[\rm(i)] the triple $(\G,\G,\sigma)$ is a matched pair of groupoids,
\item[\rm(ii)] $(\gamma\rightharpoonup \delta)(\gamma \leftharpoonup \delta)=\gamma\delta, \tforall (\gamma,\delta)\in \G_\beta\times_\alpha \G$.
\end{enumerate}

For braided groupoids $(\G,\sigma)$ and $(\G',\sigma')$, a homomorphism of braided groupoids from $(\G,\sigma)$ to $(\G',\sigma')$ is a homomorphism of groupoids $\Psi:\G\to \G'$ such that
$
(\Psi\times \Psi)\sigma=\sigma'(\Psi\times \Psi).
$
\end{defi}

Braided groupoids and homomorphisms between them form a category $\BG$.

\begin{thm}{\rm \mcite{Andruskiewitsch-1}}\mlabel{bgybe}
Let $(\G,\sigma)$ be a braided groupoid. Then $\sigma:\G_\beta\times_\alpha \G\lon \G_\beta\times_\alpha \G$ is a non-degenerate  quiver-theoretical solution of the  Yang-Baxter equation on the quiver $\xymatrix{\G \ar@<0.5ex>[r]^{\alpha} \ar[r]_{\beta} & M}$.
\end{thm}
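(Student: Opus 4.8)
The plan is to verify, for $R:=\sigma$, the three defining features of a quiver-theoretical solution: that $R$ is an automorphism of the quiver $\G_\beta\times_\alpha\G$, that it satisfies the braid relation, and that it is non-degenerate. Write $R(x,y)=(x\rightharpoonup y,\,x\leftharpoonup y)$ for composable $(x,y)\in\G_\beta\times_\alpha\G$. That $R$ is a homomorphism of quivers is part of the hypothesis; moreover $R_{12}=R\times\Id$ and $R_{23}=\Id\times R$ are well defined on $\G_\beta\times_\alpha\G_\beta\times_\alpha\G$ because $R$ preserves composability: $\alpha(x\rightharpoonup y)=\alpha(x)$ and $\beta(x\leftharpoonup y)=\beta(y)$ by the source/target conditions of the two actions, while $\beta(x\rightharpoonup y)=\alpha(x\leftharpoonup y)$ by the matched-pair compatibility. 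So the substantive points are bijectivity of $\sigma$, the braid relation, and non-degeneracy.

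For the braid relation I would fix a composable triple $(x,y,z)$ and unwind both sides directly. The first entry of $R_{12}R_{23}R_{12}(x,y,z)$ is $(x\rightharpoonup y)\rightharpoonup\big((x\leftharpoonup y)\rightharpoonup z\big)$, which by \eqref{MG-2} equals $\big((x\rightharpoonup y)(x\leftharpoonup y)\big)\rightharpoonup z=(xy)\rightharpoonup z$ after invoking condition (ii) in the definition of braided groupoid; its last entry is $(x\leftharpoonup y)\leftharpoonup z=x\leftharpoonup(yz)$ by \eqref{MG-5}. Likewise the first entry of $R_{23}R_{12}R_{23}(x,y,z)$ is $x\rightharpoonup(y\rightharpoonup z)=(xy)\rightharpoonup z$ by \eqref{MG-2}, and its last entry is $\big(x\leftharpoonup(y\rightharpoonup z)\big)\leftharpoonup(y\leftharpoonup z)=x\leftharpoonup\big((y\rightharpoonup z)(y\leftharpoonup z)\big)=x\leftharpoonup(yz)$ by \eqref{MG-5} and condition (ii). Thus the two output triples have equal first and equal last entries. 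Now condition (ii) says precisely that $R$ preserves the product of its two arguments, so each of $R_{12},R_{23}$ preserves the product of the three entries of a composable triple; hence both $R_{12}R_{23}R_{12}(x,y,z)$ and $R_{23}R_{12}R_{23}(x,y,z)$ are composable triples whose entries multiply to $xyz$. Writing them as $(a_1,a_2,a_3)$ and $(b_1,b_2,b_3)$ with $a_1=b_1$, $a_3=b_3$ and $a_1a_2a_3=b_1b_2b_3$, left-cancelling $a_1$ and then right-cancelling $a_3$ in the groupoid $\G$ forces $a_2=b_2$; this is the Yang--Baxter equation for $R$.

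For non-degeneracy, fix $x\in\G$. I would show $x^{-1}\rightharpoonup\cdot$ is a two-sided inverse of $x\rightharpoonup\cdot\colon\alpha^{-1}(\beta(x))\to\alpha^{-1}(\alpha(x))$: this is immediate from \eqref{MG-2} and \eqref{MG-1}, since $x^{-1}\rightharpoonup(x\rightharpoonup y)=(x^{-1}x)\rightharpoonup y=\iota_{\beta(x)}\rightharpoonup y=y$ and symmetrically in the other order; similarly $\cdot\leftharpoonup x^{-1}$ inverts $\cdot\leftharpoonup x\colon\beta^{-1}(\alpha(x))\to\beta^{-1}(\beta(x))$ via \eqref{MG-5} and \eqref{MG-4}. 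Finally, for bijectivity of $\sigma$ on $\G_\beta\times_\alpha\G$, I would first derive, by applying \eqref{MG-3} to $\gamma\gamma^{-1}=\iota_{\alpha(\gamma)}$ and using that $\iota_m\leftharpoonup\delta$ is always a unit, the identity
$$\gamma\leftharpoonup(\gamma^{-1}\rightharpoonup\delta)=(\gamma^{-1}\leftharpoonup\delta)^{-1},$$
and then check, using \eqref{MG-1}, \eqref{MG-4}, \eqref{MG-5} together with this identity, that the map $(u,v)\mapsto\big((v^{-1}\leftharpoonup u^{-1})^{-1},\,(v^{-1}\leftharpoonup u^{-1})\rightharpoonup u\big)$ is a two-sided inverse of $\sigma$.

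I expect the only genuinely delicate point to be the relentless bookkeeping of sources and targets: at every step one must check that the elements being multiplied or acted upon have matching endpoints, so that the relevant products and actions are defined and the outputs land in the claimed fibres. Conceptually nothing deeper is happening: the statement is an instance of the general principle that an exact factorization---here encoded by the matched pair $(\G,\G,\sigma)$ together with condition (ii)---produces a solution of the Yang--Baxter equation, and the cancellation argument in the second paragraph is the cleanest way to read off the braid relation from that structure.
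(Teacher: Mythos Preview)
Your argument is correct. The paper, however, does not prove this theorem at all: it is quoted from \mcite{Andruskiewitsch-1} and invoked as an input, so there is no ``paper's own proof'' to compare with. Your direct verification---matching first and last entries of the two sides of the braid relation via \eqref{MG-2}, \eqref{MG-5} and condition (ii), then recovering the middle entry by the product-preservation $xy=(x\rightharpoonup y)(x\leftharpoonup y)$ and cancellation in the groupoid---is the standard route and works exactly as you describe; the non-degeneracy and bijectivity checks are likewise fine (for the latter, the dual identity $(\gamma\leftharpoonup\delta)\rightharpoonup\delta^{-1}=(\gamma\rightharpoonup\delta)^{-1}$, obtained from \eqref{MG-6} applied to $\delta\delta^{-1}$, is what makes your proposed inverse two-sided).
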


Let $(\G\stackrel{\pi}{\longrightarrow}M,\Phi,\rhd)$ be a post-groupoid. Define $R_\G:\G_\Phi\times_\pi \G\lon \G_\Phi\times_\pi \G$ by
\begin{eqnarray}
  R_\G(\gamma,\delta)=\big(\gamma\rhd\delta,\inv_\rhd(\gamma\rhd\delta)\star \gamma\star \delta\big), \quad \forall (\gamma,\delta)\in \G_\Phi\times_\pi \G,
\end{eqnarray}
  where $\star$ is the Grossman-Larson groupoid structure given in Theorem \mref{thm:subadj}.

\begin{thm}\mlabel{pgybe}
Let $(\G\stackrel{\pi}{\longrightarrow}M,\cdot,\Phi,\rhd)$ be a post-groupoid. Then the Grossman-Larson groupoid $(\xymatrix{\G \ar@<0.5ex>[r]^{\alpha=\pi} \ar[r]_{\beta=\Phi} & M},\star,\iota,\inv_\rhd)$ together with  $R_\G$ defined above form a braided groupoid, and $R_\G$ is a non-degenerate quiver-theoretical solution of the Yang-Baxter equation on the quiver  $\xymatrix{\G \ar@<0.5ex>[r]^{\alpha=\pi} \ar[r]_{\beta=\Phi} & M}.$

Moreover, let $\Psi$ be a  homomorphism of post-groupoids from  $(\G,\cdot_\G,\Phi_\G,\rhd_\G)$ to $(\HH,\cdot_\HH,\Phi_\HH,\rhd_\HH)$. Then $\Psi$ is a homomorphism of the corresponding braided groupoids.
\end{thm}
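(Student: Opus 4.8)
The plan is to verify directly that the pair $(\G_\rhd, R_\G)$, where $\G_\rhd$ denotes the Grossman-Larson groupoid of Theorem~\ref{thm:subadj}, satisfies the two axioms of a braided groupoid, and then to invoke Theorem~\ref{bgybe} to conclude that $R_\G$ is a non-degenerate quiver-theoretical solution of the Yang-Baxter equation. The key observation that makes this efficient is Theorem~\ref{thm:post-RB}: the identity map $\Id:\G\to\G$ is a relative Rota-Baxter operator on $\G_\rhd$ with respect to the action $\rhd$, so by Theorem~\ref{thm:RB-mp} the triple $(\G_\rhd,\G_\rhd,\rightharpoonup,\leftharpoonup)$ is a matched pair of groupoids, where $\rightharpoonup=\rhd$ and, specializing \eqref{ra} to $\B=\Id$ (and noting that the descendent groupoid $\HH_\B$ in this case is again $\G_\rhd$), the right action is $\gamma\leftharpoonup\delta=\inv_\rhd(\gamma\rhd\delta)\star\gamma\star\delta$. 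Hence $\sigma:=R_\G$ is precisely the quiver map $(\gamma,\delta)\mapsto(\gamma\rightharpoonup\delta,\gamma\leftharpoonup\delta)$ attached to this matched pair, and axiom~(i) of a braided groupoid — that $(\G_\rhd,\G_\rhd,R_\G)$ be a matched pair — is immediate. Before this one should record the type-checking: $R_\G$ maps $\G_\Phi\times_\pi\G$ to itself because $\pi(\gamma\rhd\delta)=\pi(\gamma)=\alpha_{\G_\rhd}(\gamma)$ and $\Phi(\gamma\rhd\delta)$ together with the source/target bookkeeping of $\star$ place $\gamma\leftharpoonup\delta$ correctly; this is routine from Definition~\ref{defi:post-groupoid} and Theorem~\ref{thm:subadj}.

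The remaining content is axiom~(ii): $(\gamma\rightharpoonup\delta)\star(\gamma\leftharpoonup\delta)=\gamma\star\delta$ for all $(\gamma,\delta)\in\G_\Phi\times_\pi\G$. With the formulas above this reads
\[
(\gamma\rhd\delta)\star\bigl(\inv_\rhd(\gamma\rhd\delta)\star\gamma\star\delta\bigr)
=\bigl((\gamma\rhd\delta)\star\inv_\rhd(\gamma\rhd\delta)\bigr)\star\gamma\star\delta
=\iota_{\pi(\gamma)}\star\gamma\star\delta
=\gamma\star\delta,
\]
using associativity of $\star$, the fact that $\inv_\rhd(\gamma\rhd\delta)$ is the $\star$-inverse of $\gamma\rhd\delta$, and that $\iota$ is the identity of $\G_\rhd$ (all from Theorem~\ref{thm:subadj}). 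One must only check that the composites are all defined: $\gamma\rhd\delta$ has source $\pi(\gamma)$ and target $\Phi(\gamma\rhd\delta)$, its $\star$-inverse runs back, and $\gamma\star\delta$ has source $\pi(\gamma)$, so everything composes over the correct base points. Thus $(\G_\rhd,R_\G)$ is a braided groupoid, and Theorem~\ref{bgybe} yields that $R_\G$ is a non-degenerate quiver-theoretical solution of the Yang-Baxter equation on the quiver $\xymatrix{\G \ar@<0.5ex>[r]^{\alpha=\pi} \ar[r]_{\beta=\Phi} & M}$.

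For the functoriality statement, let $\Psi:\G\to\HH$ be a homomorphism of post-groupoids. By Proposition~\ref{pro:subadj-mor}, $\Psi$ is a homomorphism of the corresponding Grossman-Larson groupoids, so it intertwines $\star_\G$ with $\star_\HH$, the units, and (a short check from $\Psi\circ\inv_{\rhd_\G}=\inv_{\rhd_\HH}\circ\Psi$, which follows because $\Psi$ preserves $\cdot$, $\inv$, $\rhd$, hence $L^{\rhd}$ and its inverse) the new inverse maps. Then for $(\gamma,\delta)\in\G_{\Phi_\G}\times_{\pi_\G}\G$,
\[
(\Psi\times\Psi)R_\G(\gamma,\delta)
=\bigl(\Psi(\gamma\rhd_\G\delta),\,\Psi(\inv_{\rhd_\G}(\gamma\rhd_\G\delta)\star_\G\gamma\star_\G\delta)\bigr)
=\bigl(\Psi(\gamma)\rhd_\HH\Psi(\delta),\,\inv_{\rhd_\HH}(\Psi(\gamma)\rhd_\HH\Psi(\delta))\star_\HH\Psi(\gamma)\star_\HH\Psi(\delta)\bigr)
=R_\HH(\Psi(\gamma),\Psi(\delta)),
\]
so $\Psi$ is a homomorphism of braided groupoids. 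The main obstacle in all of this is essentially bookkeeping rather than substance: keeping the partially-defined compositions and their source/target constraints straight, and confirming that the right action extracted from Theorem~\ref{thm:RB-mp} specialized at $\B=\Id$ indeed matches the second component of $R_\G$ verbatim — once that identification is made, axiom~(ii) collapses to the three-line cancellation above. I would present the source/target checks once, carefully, and then let associativity and the inverse law do the rest.
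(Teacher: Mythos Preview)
Your proposal is correct and follows essentially the same route as the paper: invoke Theorem~\ref{thm:post-RB} to see that $\Id$ is a relative Rota-Baxter operator, apply Theorem~\ref{thm:RB-mp} to obtain the matched pair (identifying the right action from \eqref{ra} at $\B=\Id$ with the second component of $R_\G$), verify axiom~(ii) by the same cancellation $(\gamma\rhd\delta)\star\inv_\rhd(\gamma\rhd\delta)\star\gamma\star\delta=\gamma\star\delta$, and appeal to Theorem~\ref{bgybe}; the functoriality is handled identically via Proposition~\ref{pro:subadj-mor} and the same direct computation. Your additional source/target bookkeeping is more explicit than the paper's, but the argument is the same.
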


\begin{proof}
First by Theorem \ref{thm:post-RB},   the identity map $\Id:\G\to\G$ is a relative Rota-Baxter operator on the Grossman-Larson groupoid $(\xymatrix{\G \ar@<0.5ex>[r]^{\alpha=\pi} \ar[r]_{\beta=\Phi} & M},\star,\iota,\inv_\rhd)$ with respect to the action $\rhd.$ Then by Theorem \ref{thm:RB-mp}, the Grossman-Larson groupoid $(\xymatrix{\G \ar@<0.5ex>[r]^{\alpha=\pi} \ar[r]_{\beta=\Phi} & M},\star,\iota,\inv_\rhd)$ and itself form a matched pair, where the left action $ \rightharpoonup$ is exactly $\rhd$, and the right action $\leftharpoonup$ is given by \eqref{ra}. Since $\B=\Id$, so the right action $\leftharpoonup$ is given explicitly by
$
\gamma\leftharpoonup \delta=\inv_\rhd(\gamma\rhd\delta)\star \gamma\star \delta.
$
Moreover, for all $(\gamma,\delta)\in \G_\beta\times_\alpha \G$, we have
\begin{eqnarray*}
(\gamma\rightharpoonup \delta)\star(\gamma \leftharpoonup \delta)&=&(\gamma\rhd\delta)\star \inv_\rhd(\gamma\rhd\delta)\star \gamma\star \delta= \gamma\star \delta,
\end{eqnarray*}
which implies that the Grossman-Larson groupoid $(\xymatrix{ \G \ar@<0.5ex>[r]^{\alpha=\pi} \ar[r]_{\beta=\Phi} & M},\star,\iota,\inv_\rhd)$ together with  $R_\G$ defined above form a braided groupoid. By Theorem \ref{bgybe}, $R_\G$ is a non-degenerate quiver-theoretical solution of the Yang-Baxter equation on the quiver  $\xymatrix{\G \ar@<0.5ex>[r]^{\alpha=\pi} \ar[r]_{\beta=\Phi} & M}.$

Let $\Psi$ be a  homomorphism of post-groupoids from  $(\G,\cdot_\G,\Phi_\G,\rhd_\G)$ to $(\HH,\cdot_\HH,\Phi_\HH,\rhd_\HH)$. By Proposition \ref{pro:subadj-mor},   $\Psi$ is a groupoid homomorphism between the corresponding Grossman-Larson groupoids. Furthermore, we have
\begin{eqnarray*}
  \Psi\times \Psi (R_\G(\gamma,\delta))&=&\Psi\times \Psi\big(\gamma\rhd_\G\delta,\inv_{\rhd_\G}(\gamma\rhd_\G\delta)\star_\G \gamma\star_\G \delta\big)\\
  &=&\big(\Psi(\gamma\rhd_\G\delta),\Psi(\inv_{\rhd_\G}(\gamma\rhd_\G\delta)\star_\G \gamma\star_\G \delta)\big)\\
    &=&\big(\Psi(\gamma)\rhd_\HH\Psi(\delta),\inv_{\rhd_\HH}(\Psi(\gamma)\rhd_\HH\Psi(\delta))\star_\HH \Psi(\gamma)\star_\HH \Psi(\delta)\big)\\
    &=&R_\HH(\Psi(\gamma),\Psi(\delta)),
\end{eqnarray*}
which implies that $\Psi$ is a homomorphism between braided groupoids.
\end{proof}

Since a relative Rota-Baxter operator induces a post-groupoid, we have the following corollary.
\begin{cor}
Let $\B:\HH\to \G$ be a  relative Rota-Baxter operator on a groupoid $\G$ with respect to an action $\Phi$ on a group bundle $\HH$. Then $R_\B:\HH_{\beta\circ\B}\times_\pi \HH\to \HH_{\beta\circ\B}\times_\pi \HH$ defined by
\begin{eqnarray}
R_\B(h,k)=\Big(\Phi(\B(h))(k),\inv_\B(\Phi(\B(h))(k)) \star_\B h\star_\B k\Big),
\end{eqnarray}
is a non-degenerate solution of the  Yang-Baxter equation on the quiver $\xymatrix{\HH \ar@<0.5ex>[r]^{\pi} \ar[r]_{\beta\circ\B} & M}$, where $\star_\B$ is the descendent groupoid structure given by Corollary \ref{cor:descendent}.
\end{cor}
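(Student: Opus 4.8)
The plan is to deduce this corollary directly from Theorem~\ref{thm:RB-post}, Theorem~\ref{thm:post-RB}, and Theorem~\ref{pgybe}, exactly as the phrase ``since a relative Rota-Baxter operator induces a post-groupoid'' suggests. First I would invoke Theorem~\ref{thm:RB-post}: the relative Rota-Baxter operator $\B:\HH\to\G$ induces a post-groupoid $(\HH\stackrel{\pi}{\longrightarrow}M,\cdot_\HH,\Phi_\B,\rhd_\B)$, where $\Phi_\B=\beta\circ\B$ and $\delta_1\rhd_\B\delta_2=\B(\delta_1)\rightharpoonup\delta_2$ (written $\Phi(\B(h))(k)$ in the statement, identifying the action $\rightharpoonup$ with $\Phi$). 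Then I would apply Theorem~\ref{pgybe} to this post-groupoid: its Grossman-Larson groupoid, together with the map
\[
R(h,k)=\big(h\rhd_\B k,\ \inv_{\rhd_\B}(h\rhd_\B k)\star h\star k\big),
\]
form a braided groupoid, and $R$ is a non-degenerate quiver-theoretical solution of the Yang-Baxter equation on the quiver underlying the Grossman-Larson groupoid of $(\HH,\cdot_\HH,\Phi_\B,\rhd_\B)$.

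The remaining task is purely a matter of unwinding notation: I must check that the Grossman-Larson groupoid of the post-groupoid $(\HH,\cdot_\HH,\Phi_\B,\rhd_\B)$ coincides with the descendent groupoid $\HH_\B$ of Corollary~\ref{cor:descendent}, and that $R$ becomes exactly $R_\B$. For the first point, by Theorem~\ref{thm:subadj} the Grossman-Larson groupoid of $(\HH,\cdot_\HH,\Phi_\B,\rhd_\B)$ has source $\pi$, target $\Phi_\B=\beta\circ\B$, multiplication $\delta_1\star\delta_2=\delta_1\cdot_\HH(\delta_1\rhd_\B\delta_2)=\delta_1\cdot_\HH(\B(\delta_1)\rightharpoonup\delta_2)$, and inverse $(L^{\rhd_\B}_\delta)^{-1}\inv_\HH(\delta)=(\B(\delta)\rightharpoonup\cdot)^{-1}\inv_\HH(\delta)$; comparing with \eqref{des1}, \eqref{des2}, \eqref{des3} shows this is precisely $\HH_\B$ with $\star_\B$. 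For the second point, substituting $h\rhd_\B k=\B(h)\rightharpoonup k=\Phi(\B(h))(k)$ and $\star=\star_\B$ into the formula for $R$ from Theorem~\ref{pgybe} yields
\[
R(h,k)=\Big(\Phi(\B(h))(k),\ \inv_\B(\Phi(\B(h))(k))\star_\B h\star_\B k\Big),
\]
which is exactly $R_\B(h,k)$. Hence $R_\B$ is a non-degenerate solution of the Yang-Baxter equation on the quiver $\xymatrix{\HH \ar@<0.5ex>[r]^{\pi} \ar[r]_{\beta\circ\B} & M}$.

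There is essentially no obstacle here; the only thing that requires a moment of care is the bookkeeping identifying $\inv_{\rhd_\B}$ with $\inv_\B$ and confirming that the action used in Theorem~\ref{pgybe} (namely $\rhd_\B$, the Grossman-Larson groupoid acting on the group bundle $\HH$) matches the data fed into $R_\B$. Since all of this was already established in Theorem~\ref{thm:subadj} and Corollary~\ref{cor:descendent}, the proof is just a citation of Theorem~\ref{thm:RB-post} followed by Theorem~\ref{pgybe}, together with the observation that the two groupoid structures and the two braiding maps literally coincide under the substitutions above.
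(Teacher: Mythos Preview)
Your proposal is correct and follows exactly the approach the paper intends: the corollary is stated immediately after Theorem~\ref{pgybe} with the one-line justification ``since a relative Rota-Baxter operator induces a post-groupoid,'' i.e., apply Theorem~\ref{thm:RB-post} and then Theorem~\ref{pgybe}. Your additional unwinding of the identification between the Grossman-Larson groupoid of $(\HH,\cdot_\HH,\Phi_\B,\rhd_\B)$ and the descendent groupoid $\HH_\B$ is correct and makes explicit what the paper leaves implicit.
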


\begin{ex}
  Consider the post-groupoid $(M\times G\stackrel{\pi}{\longrightarrow}M,\cdot,\Phi,\rhd)$ given in Proposition  \ref{ex:post-g} associated to an action $\Phi:M\times G\to M$. In this case, $$(M\times G)_\Phi\times_\pi (M\times G)=\{((m,g),(n,h))\in (M\times G)\times (M\times G) ~\mbox{such that}~ n=\Phi(m,g)\}.$$
 Then $R_{M\times G}:(M\times G)_\Phi\times_\pi (M\times G)\lon (M\times G)_\Phi\times_\pi (M\times G)$ defined by
\begin{eqnarray}
  R_{M\times G}((m,g),(n,h))=\big((m,h),(\Phi(m,h),h^{-1}gh)\big),
\end{eqnarray}
for all $((m,g),(n,h))\in (M\times G)_\Phi\times_\pi (M\times G)$, is a non-degenerate quiver-theoretical solution of the Yang-Baxter equation on the quiver $\xymatrix{M\times G \ar@<0.5ex>[r]^{~~\pi} \ar[r]_{~~\Phi} & M}$.
\end{ex}

Let $\big((\xymatrix{ \G \ar@<0.5ex>[r]^{\alpha} \ar[r]_{\beta} & M},\star,\iota,\inv),\sigma\big)$ be a braided groupoid. We write
$\sigma(\gamma,\delta)=(\gamma\rightharpoonup \delta,\gamma\leftharpoonup \delta)$ for all $(\gamma,\delta)\in
\G_\beta\times_\alpha\G$. Define $\rhd:\G_\beta\times_\alpha\G\lon \G$ by
\begin{eqnarray}\label{eq:c1}
\gamma\rhd\delta=\gamma\rightharpoonup \delta\tforall (\gamma,\delta)\in
\G_\beta\times_\alpha\G.
\end{eqnarray}
Define  $\cdot:\G_\alpha\times_\alpha \G\lon
\G$ by
\begin{eqnarray}\label{eq:c2}
\gamma_1\cdot \gamma_2=\gamma_1\star(\inv(\gamma_1)\rightharpoonup \gamma_2) \tforall (\gamma_1,\gamma_2)\in \G_\alpha\times_\alpha \G.
\end{eqnarray}

\begin{thm}\mlabel{ybepg}
Let $\big((\xymatrix{ \G \ar@<0.5ex>[r]^{\alpha} \ar[r]_{\beta} & M},\star,\iota,\inv),\sigma\big)$ be a braided groupoid. Then $(\G\stackrel{\alpha}{\longrightarrow}M,\cdot,\beta,\rhd)$ is a post-groupoid, where $\rhd$ and $\cdot$ are given by \eqref{eq:c1} and \eqref{eq:c2} respectively.
Moreover, let $\phi$ be a  homomorphism
  from the braided groupoid $\big((\xymatrix{ \G \ar@<0.5ex>[r]^{\alpha_\G} \ar[r]_{\beta_\G} & M},\star_\G,\iota_\G,\inv_\G),\sigma_\G\big)$ to the braided groupoid
$\big((\xymatrix{ \HH \ar@<0.5ex>[r]^{\alpha_\HH} \ar[r]_{\beta_\HH} & M},\star_\HH,\iota_\HH,\inv_\HH),\sigma_\HH\big)$. Then $\phi$ is a homomorphism of
post-groupoids from $(\G\stackrel{\alpha_\G}{\longrightarrow}M,\cdot_\G,\beta_\G,\rhd_\G)$ to $(\HH\stackrel{\alpha_\HH}{\longrightarrow}M,\cdot_\HH,\beta_\HH,\rhd_\HH)$.
\end{thm}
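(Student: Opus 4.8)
The plan is to transport the braided-groupoid data to a post-groupoid on the same underlying set $\G$ and then check each item of Definition \ref{defi:post-groupoid}. The underlying group bundle is $\G\stackrel{\alpha}{\longrightarrow}M$ with the fibrewise multiplication $\cdot$ from \eqref{eq:c2} (well defined since $\alpha(\gamma_1\cdot\gamma_2)=\alpha(\gamma_1)$), unit section $m\mapsto\iota_m$, and fibrewise inverse $\gamma\mapsto\gamma\rightharpoonup\inv(\gamma)$; the map playing the role of $\Phi$ is $\beta$ (note $\beta(\iota_m)=m$); and the post-product is $\rhd=\rightharpoonup$ from \eqref{eq:c1}. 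As a sanity check, the Grossman--Larson groupoid of the post-groupoid produced this way will be exactly $(\G,\star)$, matching Theorem \ref{pgybe}.

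First I would record the easy preliminaries. Because $\sigma$ is a homomorphism of quivers, $\alpha(\gamma\rightharpoonup\delta)=\alpha(\gamma)$, so $\pi(\gamma\rhd\delta)=\pi(\gamma)$; and since \eqref{MG-1}--\eqref{MG-2} say that $\rightharpoonup$ is a left action of the groupoid $(\G,\star)$, the left multiplication $L^\rhd_\gamma=\gamma\rightharpoonup\cdot$ is a bijection with inverse $\inv(\gamma)\rightharpoonup\cdot$. Applying \eqref{MG-6} to the idempotent $\iota_{\beta(\gamma')}=\iota_{\beta(\gamma')}\star\iota_{\beta(\gamma')}$ and using \eqref{MG-4} shows that $x:=\gamma'\rightharpoonup\iota_{\beta(\gamma')}$ satisfies $x=x\star x$, hence $x=\iota_{\alpha(\gamma')}$; together with \eqref{MG-1} this gives the unitality $\iota_m\cdot\gamma=\gamma=\gamma\cdot\iota_m$ for $\gamma\in\alpha^{-1}(m)$, and a short computation with \eqref{MG-2} and \eqref{MG-1} shows $\gamma\cdot(\gamma\rightharpoonup\inv(\gamma))=\gamma\star\inv(\gamma)=\iota_m$, so $\gamma\rightharpoonup\inv(\gamma)$ is a right $\cdot$-inverse of $\gamma$ inside $\alpha^{-1}(m)$. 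The single identity driving everything else is
\[
\gamma_1\cdot(\gamma_1\rhd\gamma_2)=\gamma_1\star\gamma_2 \qquad\text{whenever }\beta(\gamma_1)=\alpha(\gamma_2),
\]
obtained by expanding the left-hand side via \eqref{eq:c2}, \eqref{eq:c1} and collapsing $\inv(\gamma_1)\rightharpoonup(\gamma_1\rightharpoonup\gamma_2)$ through \eqref{MG-2} and \eqref{MG-1}. This immediately yields axioms (i) and (iii) of Definition \ref{defi:post-groupoid}: (i) becomes $\beta(\gamma_1\star\gamma_2)=\beta(\gamma_2)$, and (iii) becomes $\gamma_1\rightharpoonup(\gamma_2\rightharpoonup\gamma_3)=(\gamma_1\star\gamma_2)\rightharpoonup\gamma_3$, which is \eqref{MG-2}.

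The one genuinely computational step is associativity of $\cdot$ on each fibre $\alpha^{-1}(m)$, and axiom (ii); both go the same way. Expanding $(\gamma_1\cdot\gamma_2)\cdot\gamma_3$ and $\gamma_1\cdot(\gamma_2\cdot\gamma_3)$ by \eqref{eq:c2}, rewriting each occurrence of ``$\rightharpoonup$ of a $\star$-product'' by \eqref{MG-2} and each ``$\rightharpoonup$ distributed over a $\star$-product'' by \eqref{MG-6}, and cancelling the common $\star$-prefix, the equality of the two sides reduces to an identity in $\G$ which, once inverted, reads $\gamma\star(\inv(\gamma)\rightharpoonup\delta)\star(\inv(\gamma)\leftharpoonup\delta)=\delta$; this is exactly the braided-groupoid compatibility $(\zeta\rightharpoonup\delta)\star(\zeta\leftharpoonup\delta)=\zeta\star\delta$ at $\zeta=\inv(\gamma)$ combined with $\gamma\star\inv(\gamma)=\iota$. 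Axiom (ii), $\gamma_1\rhd(\gamma_2\cdot\gamma_2')=(\gamma_1\rhd\gamma_2)\cdot(\gamma_1\rhd\gamma_2')$, unwinds by the same moves to the same compatibility, in the form $(\gamma_1\rightharpoonup\gamma_2)\star(\gamma_1\leftharpoonup\gamma_2)=\gamma_1\star\gamma_2$. Since each fibre $(\alpha^{-1}(m),\cdot)$ is then a unital associative magma in which every element has a right inverse, it is a group, so $(\G\stackrel{\alpha}{\longrightarrow}M,\cdot)$ is a group bundle and $(\G\stackrel{\alpha}{\longrightarrow}M,\cdot,\beta,\rhd)$ is a post-groupoid. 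I expect the only real nuisance to be the bookkeeping of which composites are legal (the source/target constraints on each use of \eqref{MG-2} and \eqref{MG-6}); nothing is conceptually hard once the braided compatibility (ii) is brought in.

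For the functoriality claim I would unwind Definition \ref{defi:post-groupoid-mor}. A homomorphism $\phi$ of braided groupoids is in particular a groupoid homomorphism, so $\alpha_\HH\circ\phi=\alpha_\G$ and $\beta_\HH\circ\phi=\beta_\G$ (condition (i)), and $\phi$ intertwines $\inv$; the first-component part of $(\phi\times\phi)\circ\sigma_\G=\sigma_\HH\circ(\phi\times\phi)$ gives $\phi(\gamma\rightharpoonup_\G\delta)=\phi(\gamma)\rightharpoonup_\HH\phi(\delta)$, i.e.\ $\phi(\gamma\rhd_\G\delta)=\phi(\gamma)\rhd_\HH\phi(\delta)$ (condition (iii)); and applying $\phi$ to \eqref{eq:c2} and using that $\phi$ preserves $\star$, $\inv$ and $\rightharpoonup$ gives $\phi(\gamma\cdot_\G\gamma')=\phi(\gamma)\cdot_\HH\phi(\gamma')$ (condition (ii)). This finishes the proof.
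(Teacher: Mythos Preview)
Your proof is correct and follows essentially the same route as the paper: both hinge on the key identity $\gamma_1\cdot(\gamma_1\rhd\gamma_2)=\gamma_1\star\gamma_2$ and on reducing axiom~(ii) and associativity of $\cdot$ to the braided compatibility $(\gamma\rightharpoonup\delta)\star(\gamma\leftharpoonup\delta)=\gamma\star\delta$. The only minor organizational difference is that the paper first proves axiom~(ii) and then uses it to deduce associativity of $\cdot$, whereas you reduce both independently to the same compatibility; the content is the same.
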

\begin{proof}
For $(\gamma_1,\gamma_2)\in \G_\alpha\times_\alpha \G$, we have
\begin{eqnarray*}
\beta(\inv(\gamma_1))&=&\alpha(\gamma_1)=\alpha(\gamma_2),\\
\alpha(\inv(\gamma_1)\rightharpoonup \gamma_2)&=&\alpha(\inv(\gamma_1))=\beta(\gamma_1),\\
\alpha(\gamma_1\star(\inv(\gamma_1)\rightharpoonup \gamma_2))&=&\alpha(\gamma_1),
\end{eqnarray*}
thus the definition of $\gamma_1\cdot \gamma_2$ is well-defined. For all $\gamma\in \G$, we have
\begin{eqnarray*}
\gamma\cdot \iota_{\alpha(\gamma)}&=&\gamma\star(\inv(\gamma)\rightharpoonup \iota_{\alpha(\gamma)})\stackrel{\meqref{MG-4},\meqref{MG-6}}{=}\gamma,\\
\gamma\cdot (\gamma\rightharpoonup \inv(\gamma))&=&\gamma\star \big(\inv(\gamma)\rightharpoonup (\gamma\rightharpoonup \inv(\gamma))\big)\stackrel{\meqref{MG-1},\meqref{MG-2}}{=}\iota_{\alpha(\gamma)}.
\end{eqnarray*}
Thus, $\iota$ is  a right unit of the multiplication $\cdot$ and
$\gamma\rightharpoonup \inv(\gamma)$ is  a  right inverse of $\gamma\in \G.$ For all $\gamma_1,\gamma_2,\gamma_3\in \G$  satisfying $\beta(\gamma_1)=\alpha(\gamma_2)=\alpha(\gamma_3)$, we have
\begin{eqnarray*}
\gamma_1\rhd (\gamma_2\cdot\gamma_3)&=&\gamma_1\rightharpoonup \big(\gamma_2\star (\inv(\gamma_2)\rightharpoonup \gamma_3)\big) \stackrel{\meqref{MG-6}}{=}(\gamma_1\rightharpoonup \gamma_2)\star \Big((\gamma_1\leftharpoonup \gamma_2)\rightharpoonup(\inv(\gamma_2)\rightharpoonup \gamma_3)\Big),\\
(\gamma_1\rhd \gamma_2)\cdot (\gamma_1\rhd \gamma_3)&=&(\gamma_1\rightharpoonup \gamma_2)\cdot (\gamma_1\rightharpoonup \gamma_3)=(\gamma_1\rightharpoonup \gamma_2)\star\Big(\inv(\gamma_1\rightharpoonup \gamma_2)\rightharpoonup (\gamma_1\rightharpoonup \gamma_3)\Big).
\end{eqnarray*}
By $(\gamma_1\rightharpoonup \gamma_2)\star(\gamma_1\leftharpoonup \gamma_2)=\gamma_1\star\gamma_2$, we have
\begin{eqnarray*}
(\gamma_1\rightharpoonup \gamma_2)\rightharpoonup\Big((\gamma_1\leftharpoonup \gamma_2)\rightharpoonup(\inv(\gamma_2)\rightharpoonup \gamma_3)\Big)&\stackrel{\meqref{MG-2}}{=}&(\gamma_1\star\gamma_2)\rightharpoonup(\inv(\gamma_2)\rightharpoonup \gamma_3)\stackrel{\meqref{MG-2}}{=}\gamma_1\rightharpoonup \gamma_3\\
&=&(\gamma_1\rightharpoonup \gamma_2)\rightharpoonup\Big(\inv(\gamma_1\rightharpoonup \gamma_2)\rightharpoonup (\gamma_1\rightharpoonup \gamma_3)\Big).
\end{eqnarray*}
By \eqref{MG-1}, we obtain $(\gamma_1\leftharpoonup \gamma_2)\rightharpoonup(\inv(\gamma_2)\rightharpoonup \gamma_3)=\inv(\gamma_1\rightharpoonup \gamma_2)\rightharpoonup (\gamma_1\rightharpoonup \gamma_3)$.
Moreover, we have
\begin{eqnarray}\mlabel{rhd-action}
\gamma_1\rhd (\gamma_2\cdot\gamma_3)=(\gamma_1\rhd \gamma_2)\cdot (\gamma_1\rhd \gamma_3).
\end{eqnarray}
For all $\gamma_1,\gamma_2,\gamma_3\in \G$  satisfying $\alpha(\gamma_1)=\alpha(\gamma_2)=\alpha(\gamma_3)$, we have
\begin{eqnarray*}
\gamma_1\cdot(\gamma_2\cdot\gamma_3)&=&\gamma_1\star\Big(\inv(\gamma_1)\rightharpoonup(\gamma_2\cdot\gamma_3)\Big)\\
&\stackrel{\meqref{rhd-action}}{=}& \gamma_1\star\Big((\inv(\gamma_1)\rightharpoonup \gamma_2)\cdot (\inv(\gamma_1)\rightharpoonup \gamma_3)\Big)\\
&=&\gamma_1\star\Big((\inv(\gamma_1)\rightharpoonup \gamma_2)\star \big(\inv(\inv(\gamma_1)\rightharpoonup \gamma_2)\rightharpoonup(\inv(\gamma_1)\rightharpoonup \gamma_3)\big)\Big)\\
&=&\big(\gamma_1\star(\inv(\gamma_1)\rightharpoonup \gamma_2)\big) \star\Big(\inv\big(\inv(\gamma_1)\rightharpoonup \gamma_2\big)\rightharpoonup\big(\inv(\gamma_1)\rightharpoonup \gamma_3\big)\Big)\\
&\stackrel{\meqref{MG-2}}{=}&\big(\gamma_1\star(\inv(\gamma_1)\rightharpoonup \gamma_2)\big) \star\Big(\inv\big(\gamma_1\star(\inv(\gamma_1)\rightharpoonup \gamma_2)\big)\rightharpoonup \gamma_3)\Big)\\
&=&(\gamma_1\cdot \gamma_2)\cdot \gamma_3.
\end{eqnarray*}
Thus, we deduce that $(\G\stackrel{\alpha}{\longrightarrow}M,\cdot,\iota)$ is a group bundle.

Since $\iota$ is the identity map of the groupoid $(\xymatrix{ \G \ar@<0.5ex>[r]^{\alpha} \ar[r]_{\beta} & M},\star,\iota,\inv)$, we obtain that $\beta\circ \iota=\Id$. Due to the fact that $\sigma:\G_\beta\times_\alpha\G\to \G_\beta\times_\alpha\G$ is a homomorphism of quivers, we obtain that
$$
\alpha(\gamma\rhd\delta)=\alpha(\gamma),\tforall (\gamma,\delta)\in
\G_\beta\times_\alpha\G.
$$

By \eqref{MG-1} and \eqref{MG-2}, we deduce that $$L^\rhd_\gamma:\alpha^{-1}(\beta(\gamma))\to \alpha^{-1}(\alpha(\gamma)), \quad L^\rhd_\gamma \delta= \gamma\rhd \delta \tforall \delta\in \alpha^{-1}(\beta(\gamma)),$$
is invertible for any $\gamma\in\G$.  For all $(\gamma_1,\gamma_2)\in \G_\beta\times_\alpha \G$, we  have
\begin{eqnarray}\mlabel{GL}
\gamma_1\cdot(\gamma_1\rhd \gamma_2)&=&\gamma_1\star\big(\inv(\gamma_1)\rightharpoonup(\gamma_1\rightharpoonup \gamma_2)\big)\stackrel{\meqref{MG-1},\meqref{MG-2}}{=}\gamma_1\star\gamma_2.
\end{eqnarray}
Thus, we have $\beta(\gamma_1\cdot(\gamma_1\rhd \gamma_2))=\beta(\gamma_1\star\gamma_2)=\beta(\gamma_2)$. Moreover, for all $(\gamma_1,\gamma_2,\gamma_3)\in \G_\beta\times_\alpha \G_\beta\times_\alpha \G$, we  deduce that
\begin{eqnarray*}
\gamma_1\rhd (\gamma_2\rhd \gamma_3)\stackrel{\meqref{MG-2}}{=}(\gamma_1\star\gamma_2)\rhd \gamma_3\stackrel{\meqref{GL}}{=}(\gamma_1\cdot(\gamma_1\rhd \gamma_2))\rhd \gamma_3.
\end{eqnarray*}
Thus, we obtain that $(\G\stackrel{\alpha}{\longrightarrow}M,\cdot,\beta,\rhd)$ is a post-groupoid.

Let $\phi$ be a  homomorphism
 from the braided groupoid $\big((\xymatrix{ \G \ar@<0.5ex>[r]^{\alpha_\G} \ar[r]_{\beta_\G} & M},\star_\G,\iota_\G,\inv_\G),\sigma_\G\big)$ to the braided groupoid
$\big((\xymatrix{ \HH \ar@<0.5ex>[r]^{\alpha_\HH} \ar[r]_{\beta_\HH} & M},\star_\HH,\iota_\HH,\inv_\HH),\sigma_\HH\big)$.  Then $\phi:\G\lon \HH$ is a homomorphism of groupoids  such that
\begin{eqnarray*}
\phi(\gamma \rightharpoonup \delta)=\phi(\gamma)\rightharpoonup \phi(\delta),\quad
\phi(\gamma \leftharpoonup \delta)=\phi(\gamma)\leftharpoonup \phi(\delta) \tforall (\gamma,\delta)\in
\G_{\beta_{\G}}\times_{\alpha_{\G}}\G.
\end{eqnarray*}
Thus, we have $\alpha_{\G}=\alpha_{\HH}\circ \phi$ and $\beta_{\G}=\beta_{\HH}\circ \phi$.
Furthermore, for all $(\gamma,\delta)\in
\G_{\beta_{\G}}\times_{\alpha_{\G}}\G$ and $(\gamma_1,\gamma_2)\in \G_{\alpha_{\G}}\times_{\alpha_{\G}} \G$, we have
\begin{eqnarray*}
\phi(\gamma\rhd_{\G} \delta)&=&\phi(\gamma\rightharpoonup \delta)=\phi(\gamma)\rightharpoonup \phi(\delta)=\phi(\gamma)\rhd_{\HH} \phi(\delta),\\
\phi(\gamma_1\cdot_\G \gamma_2)&=&\phi\big(\gamma_1\star_\G (\inv(\gamma_1)\rightharpoonup \gamma_2)\big)=\phi(\gamma_1)\star_{\HH} \phi(\inv(\gamma_1)\rightharpoonup \gamma_2)\\
&=&\phi(\gamma_1)\star_{\HH} \big(\inv(\phi(\gamma_1))\rightharpoonup \phi(\gamma_2)\big)=\phi(\gamma_1)\cdot_{\HH} \phi(\gamma_2),
\end{eqnarray*}
which implies that  $\phi$ is a homomorphism of post-groupoids from $(\G\stackrel{\alpha_\G}{\longrightarrow}M,\cdot_\G,\beta_\G,\rhd_\G)$ to $(\HH\stackrel{\alpha_\HH}{\longrightarrow}M,\cdot_\HH,\beta_\HH,\rhd_\HH)$. The proof is finished.
\end{proof}

Moreover, we have
\begin{thm}\mlabel{pgybef}
Theorem \mref{pgybe} defines a functor $\huaB:\PG\lon
\BG$, and Theorem \mref{ybepg} defines a functor $\huaP:\BG\lon
  \PG$ that is the inverse of $\huaB$, giving an isomorphism between the categories $\PG$ and $\BG$.
\end{thm}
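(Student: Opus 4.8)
The plan is to check directly that $\huaB$ and $\huaP$ are well-defined functors and that the composites $\huaP\circ\huaB$ and $\huaB\circ\huaP$ equal the respective identity functors, both on objects and on morphisms. Functoriality on objects is precisely the content of Theorems \ref{pgybe} and \ref{ybepg}; on morphisms, the second halves of those theorems say that a post-groupoid homomorphism $\Psi$ (resp. a braided-groupoid homomorphism $\phi$) is again, \emph{with the same underlying map of sets}, a braided-groupoid homomorphism (resp. a post-groupoid homomorphism). Since both assignments leave the underlying map of a morphism untouched, they automatically preserve identities and composition, so $\huaB$ and $\huaP$ are functors. It then remains to prove the two composites are identity functors, and here too it suffices to argue on objects, since on morphisms both composites act as the identity on underlying maps.

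For $\huaP\circ\huaB=\Id_{\PG}$, start from a post-groupoid $(\G\stackrel{\pi}{\longrightarrow}M,\cdot,\Phi,\rhd)$. Applying $\huaB$ gives the Grossman-Larson groupoid with source $\alpha=\pi$, target $\beta=\Phi$, multiplication $\star$ and inverse $\inv_\rhd$, equipped with $\sigma=R_\G$, so that $\gamma\rightharpoonup\delta=\gamma\rhd\delta$ and $\gamma\leftharpoonup\delta=\inv_\rhd(\gamma\rhd\delta)\star\gamma\star\delta$. Applying $\huaP$ returns the group bundle with projection $\alpha=\pi$, second structure map $\beta=\Phi$, the same post-product $\rhd$ (since $\rhd$ is literally $\rightharpoonup$), and new group-bundle multiplication $\gamma_1\cdot_{\mathrm{new}}\gamma_2=\gamma_1\star(\inv_\rhd(\gamma_1)\rhd\gamma_2)$. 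The one nontrivial point is to show $\cdot_{\mathrm{new}}=\cdot$: unwinding $\star$ we get $\gamma_1\star(\inv_\rhd(\gamma_1)\rhd\gamma_2)=\gamma_1\cdot\big(\gamma_1\rhd(\inv_\rhd(\gamma_1)\rhd\gamma_2)\big)$, Axiom (iii) of Definition \ref{defi:post-groupoid} rewrites the inner term as $\big(\gamma_1\cdot(\gamma_1\rhd\inv_\rhd(\gamma_1))\big)\rhd\gamma_2$, and since $\gamma_1\rhd\inv_\rhd(\gamma_1)=L^\rhd_{\gamma_1}(L^\rhd_{\gamma_1})^{-1}\inv(\gamma_1)=\inv(\gamma_1)$ this becomes $\iota_{\pi(\gamma_1)}\rhd\gamma_2=\gamma_2$ by Lemma \ref{lem:unit}. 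Hence $\gamma_1\cdot_{\mathrm{new}}\gamma_2=\gamma_1\cdot\gamma_2$, so $\huaP(\huaB(\G))=\G$ on objects and $\huaP\circ\huaB=\Id_{\PG}$.

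For $\huaB\circ\huaP=\Id_{\BG}$, start from a braided groupoid $\big((\G,\alpha,\beta,\star,\iota,\inv),\sigma\big)$ with $\sigma(\gamma,\delta)=(\gamma\rightharpoonup\delta,\gamma\leftharpoonup\delta)$. Applying $\huaP$ gives the post-groupoid $(\G\stackrel{\alpha}{\longrightarrow}M,\cdot,\beta,\rhd)$ with $\gamma\rhd\delta=\gamma\rightharpoonup\delta$ and $\gamma_1\cdot\gamma_2=\gamma_1\star(\inv(\gamma_1)\rightharpoonup\gamma_2)$, whose group-bundle inverse sends $\gamma$ to $\gamma\rhd\inv(\gamma)$. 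Applying $\huaB$: the source and target come back as $\pi=\alpha$ and $\Phi=\beta$, the identity section is unchanged, the Grossman-Larson multiplication is $\gamma_1\star'\gamma_2=\gamma_1\cdot(\gamma_1\rhd\gamma_2)$, which equals $\gamma_1\star\gamma_2$ by \eqref{GL} in the proof of Theorem \ref{ybepg}, and the new inverse is $\inv_\rhd(\gamma)=(L^\rhd_\gamma)^{-1}\big(\gamma\rhd\inv(\gamma)\big)=(L^\rhd_\gamma)^{-1}L^\rhd_\gamma\inv(\gamma)=\inv(\gamma)$, so the groupoid is recovered exactly. For the braiding, $R_\G(\gamma,\delta)=\big(\gamma\rhd\delta,\ \inv_\rhd(\gamma\rhd\delta)\star\gamma\star\delta\big)=\big(\gamma\rightharpoonup\delta,\ \inv(\gamma\rightharpoonup\delta)\star\gamma\star\delta\big)$, and the braided-groupoid identity $(\gamma\rightharpoonup\delta)\star(\gamma\leftharpoonup\delta)=\gamma\star\delta$ shows the second slot equals $\gamma\leftharpoonup\delta$, so $R_\G=\sigma$. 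Thus $\huaB(\huaP(\G,\sigma))=(\G,\sigma)$ on objects and $\huaB\circ\huaP=\Id_{\BG}$, completing the isomorphism.

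The routine part (well-definedness on morphisms and functoriality) is already supplied by Theorems \ref{pgybe} and \ref{ybepg}, and the genuinely computational identities $\cdot_{\mathrm{new}}=\cdot$ and $\star'=\star$, together with matching the inverse maps and the braidings, are each a one-line manipulation using Axioms (i)--(iii) of Definition \ref{defi:post-groupoid}, Lemma \ref{lem:unit}, and the braided-groupoid axioms. I do not expect a real obstacle; the only thing requiring care is the bookkeeping of source and target maps: under $\huaB$ the pair $(\pi,\Phi)$ becomes $(\alpha,\beta)$ and under $\huaP$ the pair $(\alpha,\beta)$ becomes $(\pi,\Phi)$, so one must verify that these relabelings compose back to the identity rather than swapping the two maps.
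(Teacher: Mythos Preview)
Your proposal is correct and follows the same approach as the paper, but with considerably more detail: the paper's proof simply asserts that $(\huaP\huaB)(\G)=\G$ and $(\huaB\huaP)(\G,\sigma)=(\G,\sigma)$ ``by Theorems \ref{pgybe} and \ref{ybepg}'' without carrying out the verifications, whereas you actually check the identities $\cdot_{\mathrm{new}}=\cdot$, $\star'=\star$, $\inv_\rhd=\inv$, and $R_\G=\sigma$ explicitly. Your computations are all correct, and in particular the use of Axiom~(iii), Lemma~\ref{lem:unit}, and equation~\eqref{GL} is exactly what is needed to make the paper's terse argument rigorous.
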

\begin{proof}
Let $(\G\stackrel{\pi}{\longrightarrow}M,\cdot,\Phi,\rhd)$ be a post-groupoid. By Theorems \mref{pgybe} and \mref{ybepg}, we have
$$(\huaP\huaB)(\G\stackrel{\pi}{\longrightarrow}M,\cdot,\Phi,\rhd)=(\G\stackrel{\pi}{\longrightarrow}M,\cdot,\Phi,\rhd).$$ Let $\Psi$ be a  homomorphism of post-groupoids from  $(\G,\cdot_\G,\Phi_\G,\rhd_\G)$ to $(\HH,\cdot_\HH,\Phi_\HH,\rhd_\HH)$. Theorems \mref{pgybe} and
\mref{ybepg} yield $(\huaP\huaB)(\Psi)=\Psi$. Thus, $\huaP\huaB=\Id$.
Similarly, $\huaB\huaP=\Id$. Therefore, the functor $\huaB$ is the
inverse of the functor $\huaP$.
\end{proof}

By Theorem \mref{bgybe},   a braided groupoid $(\G,\sigma)$ is a non-degenerate braiding quiver by forgetting the groupoid structure. So there is a forgetful functor
$$\huaU:\BG\lon \BS$$
from the category of braided groupoids to the category of non-degenerate braided quivers.

Let $(\A,R)$ be a non-degenerate braided quiver. The structure groupoid $\G_\A$ is defined to be the groupoid generated by the quiver $\xymatrix{ \A \ar@<0.5ex>[r]^{\alpha} \ar[r]_{\beta} & M}$  with the defining relations
\begin{eqnarray}
xy=(x\rightharpoonup y)(x\leftharpoonup y)\quad  \mbox{when} \quad R(x,y)=(x\rightharpoonup y,x\leftharpoonup y)\tforall (x,y)\in \A_{\beta}\times_{\alpha}\A.
\end{eqnarray}
The construction of the structure groupoid $\G_\A$ was given by Andruskiewitsch in \cite[Definition 2.3]{Andruskiewitsch-1}. Further, Andruskiewitsch extended the bijective map $R$ to a braided groupoid structure $\bar{R}$ on the structure groupoid $\G_\A$ \cite[Theorem 3.8]{Andruskiewitsch-1}. Moreover, Andruskiewitsch's construction of the braided groupoids is functorial, i.e. there is a structure groupoid functor $\huaF:\BS\lon\BG$   given by
\begin{eqnarray*}
\huaF(\A,R)=(\G_\A,\bar{R}) \tforall (\A,R)\in  \BS.
\end{eqnarray*}
By the universal property of the structure groupoid $\G_\A$,  the functor $\huaF$ is the left adjoint functor of the forgetful functor $\huaU$.

Moreover,  we have the functor  $\huaU\circ \huaB:\PG\lon
\BS$ from the category of post-groupoids to the category of non-degenerate braided quivers, and the functor $\huaP\circ \huaF:\BS\lon
\PG$ from the category of non-degenerate braided quivers to the category of post-groupoids.

\begin{thm}\mlabel{bspg}
With the above notations,  the functor $\huaP\circ \huaF$ is left adjoint to $\huaU\circ \huaB$.
\end{thm}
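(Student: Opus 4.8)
The plan is to deduce this purely formally from two facts already in hand: the adjunction $\huaF\dashv\huaU$ between $\BS$ and $\BG$, and the fact, established in Theorem \ref{pgybef}, that $\huaB$ and $\huaP$ are mutually inverse isomorphisms of categories, so that $\huaB\circ\huaP=\Id_{\BG}$ and $\huaP\circ\huaB=\Id_{\PG}$ hold strictly. Composing an adjoint pair with an isomorphism of categories on either side again yields an adjoint pair, and that is exactly the situation here: $\huaP\circ\huaF$ is obtained from $\huaF$ by post-composing with the isomorphism $\huaP$, and $\huaU\circ\huaB$ is obtained from $\huaU$ by pre-composing with the isomorphism $\huaB$.

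Concretely, I would produce the required natural isomorphism of $\Hom$-functors via the chain
\begin{eqnarray*}
\Hom_{\PG}\big(\huaP\huaF(\A,R),\,\G\big)
&\cong& \Hom_{\BG}\big(\huaB\huaP\huaF(\A,R),\,\huaB(\G)\big)\\
&=& \Hom_{\BG}\big(\huaF(\A,R),\,\huaB(\G)\big)\\
&\cong& \Hom_{\BS}\big((\A,R),\,\huaU\huaB(\G)\big),
\end{eqnarray*}
where $(\A,R)$ is an arbitrary non-degenerate braided quiver and $\G$ an arbitrary post-groupoid. The first bijection is the action of the fully faithful functor $\huaB$ on morphisms; the middle equality is $\huaB\circ\huaP=\Id_{\BG}$; and the last bijection is the adjunction isomorphism of $\huaF\dashv\huaU$ applied to $\huaF(\A,R)$ and $\huaB(\G)$. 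All three are natural in each variable — the outer two by functoriality of $\huaB,\huaP$ and naturality of the adjunction, the middle one trivially — so the composite exhibits $\huaP\circ\huaF$ as left adjoint to $\huaU\circ\huaB$. Equivalently, writing $\eta:\Id_{\BS}\Rightarrow\huaU\huaF$ and $\varepsilon:\huaF\huaU\Rightarrow\Id_{\BG}$ for the unit and counit of $\huaF\dashv\huaU$, I would take $\eta$ itself as the unit $\Id_{\BS}\Rightarrow\huaU\huaB\circ\huaP\huaF$ (legitimate because $\huaB\huaP=\Id$) and the family $\huaP(\varepsilon_{\huaB(\G)}):\huaP\huaF\huaU\huaB(\G)\to\G$ as the counit; the two triangle identities for this pair collapse, after cancelling the strict identities $\huaB\huaP$ and $\huaP\huaB$, to the triangle identities of $\huaF\dashv\huaU$.

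There is essentially no obstacle here: the assertion is pure category theory once Theorem \ref{pgybef} and the adjunction $\huaF\dashv\huaU$ are available. The only point needing a word of care is that one must invoke the genuine \emph{isomorphism} $\PG\cong\BG$ of Theorem \ref{pgybef}, not merely an equivalence, so that $\huaB\huaP=\Id$ and $\huaP\huaB=\Id$ can be used on the nose and no coherence isomorphisms intervene in the triangle identities; Theorem \ref{pgybef} delivers precisely this. For completeness I would also record what the adjunction says on objects: the ``free'' post-groupoid on a non-degenerate braided quiver $(\A,R)$ is the post-groupoid $\huaP(\G_\A,\bar R)$, that is, Andruskiewitsch's structure groupoid $\G_\A$ equipped with the operations \eqref{eq:c1} and \eqref{eq:c2}, with the unit of the adjunction given by the canonical quiver map $(\A,R)\to\huaU(\G_\A,\bar R)$.
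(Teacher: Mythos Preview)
Your proposal is correct and follows essentially the same approach as the paper: the paper's proof simply states that the result follows directly from Theorem \ref{pgybef} (the isomorphism $\PG\cong\BG$) together with the adjunction $\huaF\dashv\huaU$, which is exactly the formal argument you have spelled out in detail. Your version is more explicit about the $\Hom$-set bijections and the unit/counit, but the underlying idea is identical.
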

\begin{proof}
It follows directly from Theorem  \mref{pgybef} and $\huaF$ is the left adjoint functor of the forgetful functor $\huaU$.
\end{proof}
\section{Post-groupoids and skew-left bracoids}\label{sec:5}

In this section, we introduce the notion of skew-left bracoids, which are generalizations of skew-left braces. We show that there is a one-to-one correspondence between skew-left bracoids and post-groupoids.

\begin{defi}{\rm \mcite{GV,Vendramin}}
A skew-left brace  $(G,\star,\cdot)$ consists of a group $(G,\cdot)$ and a group $(G,\star)$ such that
 \begin{eqnarray}
a\star (b\cdot c)=(a\star b)\cdot a^{-1}\cdot (a\star c)\tforall a,b,c\in G.
\end{eqnarray}
Here $a^{-1}$ is the inverse of $a$ in the group $(G,\cdot)$.

\end{defi}

\begin{defi}
A {\bf skew-left bracoid} consists of a  group bundle $(\G\stackrel{\pi}{\longrightarrow}M,\cdot,\iota,\inv)$ and a   groupoid $(\xymatrix{ \G \ar@<0.5ex>[r]^{\alpha=\pi }\ar[r]_{\beta} & M},\star,\iota,\overline{\inv})$  such that
 \begin{eqnarray}\label{eq:bracoid}
\gamma\star (\delta\cdot \delta')=(\gamma\star \delta)\cdot \inv(\gamma)\cdot (\gamma\star \delta'),
\end{eqnarray}
for all $\gamma, \delta,\delta'\in \G$ satisfying $\beta(\gamma)=\pi(\delta)=\pi(\delta')$.

A {\bf bracoid} is a skew-left bracoid in which the group bundle is abelian.
\end{defi}

\begin{rmk}
Since there is a close relationship between skew-left braces and  Hopf-Galois extensions of Hopf algebras \cite{Bachiller}, we hope that skew-left bracoids will play an important role in the study of Hopf Galois extensions of Hopf algebroids \cite{Han-Schauenburg} in the future.
\end{rmk}

Note that in the middle term of the right hand side of \eqref{eq:bracoid}, $\inv(\gamma)$ is the inverse of $\gamma$ in the group bundle $(\G\stackrel{\pi}{\longrightarrow}M,\cdot,\iota,\inv)$. Therefore, $\gamma\star \delta$, $ \inv(\gamma)$ and $\gamma\star \delta'$ are all in the fiber of $\pi^{-1}(\pi(\gamma))$, so that the right hand side of \eqref{eq:bracoid} is well defined.


\begin{rmk}
Obviously, if $M$ reduces to a point, then a skew-left bracoid reduces to a skew-left brace \cite{CSV,GV,Vendramin}, and a bracoid reduces to a brace \cite{CJO,Ru,Sm18}. Recently, Martin-Lyons and Truman introduced a new algebraic object which is also called a skew-left bracoid \cite[Definition 2.1]{MT}. It is not the same as the one introduced here. 
\end{rmk}

\begin{ex}
 Let $\Phi:M\times G\to M$ be a right action of a group $G$   on a set $M$. Then the group bundle $\G=M\times G\stackrel{\pi}{\longrightarrow}M$, where $\pi$ is the projection to $M$, and the action groupoid $M\times G\rightrightarrows M$ given in Example \ref{ex:action-g} form a skew-left bracoid. In fact, for all $(m,g),~(n,h),~(n,k)\in M\times G$ satisfying $n=\Phi(m,g)$, we have
\begin{eqnarray*}
 (m,g)\star\big((n,h)\cdot(n,k)\big)&=&(m,g)\star(n,h\cdot k)=(m,g\cdot h\cdot k),\\
 \big((m,g)\star(n,h)\big)  \cdot\inv(m,g)\cdot\big((m,g)\star(n,k)\big)  &=& (m,g\cdot h)\cdot(m,g^{-1})\cdot (m,g\cdot k) \\
 &=&(m,(g\cdot h)\cdot g^{-1}\cdot (g\cdot k))=(m,g\cdot h\cdot k),
 \end{eqnarray*}
 which implies that the group bundle $\G=M\times G\stackrel{\pi}{\longrightarrow}M$  and the action groupoid $M\times G\rightrightarrows M$   form a skew-left bracoid.

\end{ex}

Note that an action naturally induces a post-groupoid, and the above example can be generalized to arbitrary post-groupoid as the following theorem shows.

\begin{thm}\label{thm:post-brace}
 Let $(\G\stackrel{\pi}{\longrightarrow}M,\cdot,\Phi,\rhd)$ be a post-groupoid. Then the  group bundle $(\G\stackrel{\pi}{\longrightarrow}M,\cdot,\iota,\inv)$ and the Grossman-Larson groupoid $(\xymatrix{ \G \ar@<0.5ex>[r]^{\alpha=\pi} \ar[r]_{\beta=\Phi} & M},\star,\iota,\inv_\rhd)$ form a skew-left bracoid.
\end{thm}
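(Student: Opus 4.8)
The plan is to verify directly the single compatibility axiom \eqref{eq:bracoid}, since the two structures involved are already in place: the group bundle $(\G\stackrel{\pi}{\longrightarrow}M,\cdot,\iota,\inv)$ is part of the data of the post-groupoid, and Theorem \ref{thm:subadj} gives us that $(\xymatrix{ \G \ar@<0.5ex>[r]^{\alpha=\pi} \ar[r]_{\beta=\Phi} & M},\star,\iota,\inv_\rhd)$ is a groupoid whose source map coincides with $\pi$. Thus the group bundle and the Grossman-Larson groupoid sit over the same base $M$ with $\alpha=\pi$, exactly as the definition of a skew-left bracoid demands; it only remains to check \eqref{eq:bracoid}.

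First I would confirm that both sides of \eqref{eq:bracoid} are well-defined. Let $\gamma,\delta,\delta'\in\G$ with $\Phi(\gamma)=\pi(\delta)=\pi(\delta')$. Then $\delta\cdot\delta'\in\G_{\pi(\delta)}$, and since $\Phi(\gamma)=\pi(\delta\cdot\delta')$ the element $\gamma\star(\delta\cdot\delta')=\gamma\cdot\big(\gamma\rhd(\delta\cdot\delta')\big)$ is defined. On the other side, using $\pi(\gamma\rhd\delta)=\pi(\gamma)$ together with $\Phi(\gamma)=\pi(\delta')$, the four elements $\gamma$, $\gamma\rhd\delta$, $\inv(\gamma)$, $\gamma\rhd\delta'$ all lie in the single fiber $\G_{\pi(\gamma)}$, so all products on the right-hand side are taken inside that one group.

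Then comes the computation, which is short. Using $\gamma\star\delta=\gamma\cdot(\gamma\rhd\delta)$ and Axiom (ii) of Definition \ref{defi:post-groupoid},
\[
\gamma\star(\delta\cdot\delta')=\gamma\cdot\big(\gamma\rhd(\delta\cdot\delta')\big)=\gamma\cdot(\gamma\rhd\delta)\cdot(\gamma\rhd\delta'),
\]
while, using $\inv(\gamma)\cdot\gamma=\iota_{\pi(\gamma)}$ and that $\gamma\rhd\delta\in\G_{\pi(\gamma)}$,
\[
(\gamma\star\delta)\cdot\inv(\gamma)\cdot(\gamma\star\delta')=\gamma\cdot(\gamma\rhd\delta)\cdot\inv(\gamma)\cdot\gamma\cdot(\gamma\rhd\delta')=\gamma\cdot(\gamma\rhd\delta)\cdot(\gamma\rhd\delta').
\]
Comparing the two displays yields \eqref{eq:bracoid}, and hence the group bundle together with the Grossman-Larson groupoid form a skew-left bracoid. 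Since the argument is just one application of Axiom (ii) plus elementary manipulation inside a single fiber group, there is no real obstacle; the only point requiring care is the fiber bookkeeping in the well-definedness step, which ensures every product written down is legitimate. (When $M$ is a point, this recovers the passage from a post-group to a skew-left brace.)
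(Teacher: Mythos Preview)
Your proof is correct and follows essentially the same route as the paper: both arguments reduce the bracoid identity to a single application of Axiom~(ii) of Definition~\ref{defi:post-groupoid} together with the insertion of $\inv(\gamma)\cdot\gamma=\iota_{\pi(\gamma)}$ inside the fiber $\G_{\pi(\gamma)}$. Your additional well-definedness check is more explicit than the paper's, but the substance is the same.
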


\begin{proof}
For all $\gamma, \delta,\delta'\in \G$ satisfying $\beta(\gamma)=\pi(\delta)=\pi(\delta')$, by Axiom (ii)   in Definition \ref{defi:post-groupoid}, we have
  \begin{eqnarray*}
   \gamma\star (\delta\cdot \delta')&=& \gamma\cdot (\gamma\rhd  (\delta\cdot \delta'))=\gamma\cdot\Big(  ( \gamma\rhd\delta) \cdot (\gamma\rhd\delta')  \Big)\\
   &=&\Big(\gamma\cdot  ( \gamma\rhd\delta)\Big) \cdot \inv(\gamma)\cdot\Big(\gamma \cdot (\gamma\rhd\delta')  \Big)=(\gamma\star \delta)\cdot \inv(\gamma)\cdot (\gamma\star \delta').
  \end{eqnarray*}
  Therefore, the  group bundle $(\G\stackrel{\pi}{\longrightarrow}M,\cdot,\iota,\inv)$ and the Grossman-Larson groupoid form a skew-left bracoid.
\end{proof}

\begin{cor}
  Let $\Psi:\G_{\beta}\times_{\pi}\HH\to \HH$ be an action of a groupoid $\xymatrix{ (\G \ar@<0.5ex>[r]^{\alpha} \ar[r]_{\beta} & M},\cdot_\G,\iota^\G,\inv_\G)$ on a group bundle $\HH\stackrel{\pi}{\longrightarrow}M$.   Let $\B:\HH\to\G$  be a relative Rota-Baxter operator. Then the  group bundle $\HH\stackrel{\pi}{\longrightarrow}M$ and the descendent groupoid  $\xymatrix{(\HH \ar@<0.5ex>[r]^{\alpha=\pi} \ar[r]_{\beta_\B} & M},\star_\B,\iota^\HH,\inv_\B)$ given in Corollary \ref{cor:descendent} form a skew-left bracoid.
\end{cor}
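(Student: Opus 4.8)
The plan is to deduce the skew-left bracoid structure by composing results already proved, rather than checking the bracoid identity \eqref{eq:bracoid} by hand. First I would invoke Theorem~\ref{thm:RB-post}: the relative Rota-Baxter operator $\B:\HH\to\G$ yields a post-groupoid $(\HH\stackrel{\pi}{\longrightarrow}M,\cdot_\HH,\Phi_\B,\rhd_\B)$, with $\Phi_\B$ and $\rhd_\B$ given by \eqref{eq:RB-Phi} and \eqref{eq:RB-rhd}. Then I would apply Theorem~\ref{thm:subadj} to this post-groupoid to form its Grossman-Larson groupoid, which has source $\alpha=\pi$, target $\beta=\Phi_\B$, multiplication $\gamma_1\star\gamma_2=\gamma_1\cdot_\HH(\gamma_1\rhd_\B\gamma_2)$, unit section $\iota^\HH$, and inverse $\inv_\rhd(\gamma)=(L^{\rhd_\B}_\gamma)^{-1}\inv_\HH(\gamma)$.

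The one step needing attention is a bookkeeping identification: I must verify that this Grossman-Larson groupoid is literally the descendent groupoid $\HH_\B$ of Corollary~\ref{cor:descendent}. This is routine: $\Phi_\B=\beta\circ\B=\beta_\B$ by \eqref{eq:RB-Phi} and \eqref{des1}; using \eqref{eq:RB-rhd}, the Grossman-Larson multiplication $\gamma_1\cdot_\HH(\gamma_1\rhd_\B\gamma_2)=\gamma_1\cdot_\HH(\B(\gamma_1)\rightharpoonup\gamma_2)$ coincides with $\gamma_1\star_\B\gamma_2$ of \eqref{des2}; and since $L^{\rhd_\B}_\gamma=\B(\gamma)\rightharpoonup\cdot$, the inverse $\inv_\rhd$ coincides with $\inv_\B$ of \eqref{des3}. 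Thus $\HH_\B$ is precisely the Grossman-Larson groupoid of the post-groupoid built from $\B$.

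Finally, I would quote Theorem~\ref{thm:post-brace}, which says that for any post-groupoid the underlying group bundle together with its Grossman-Larson groupoid form a skew-left bracoid. Applied to $(\HH\stackrel{\pi}{\longrightarrow}M,\cdot_\HH,\Phi_\B,\rhd_\B)$, and combined with the identification above, this gives exactly that the group bundle $\HH\stackrel{\pi}{\longrightarrow}M$ and the descendent groupoid $\HH_\B$ form a skew-left bracoid. I do not expect any genuine obstacle here: the corollary is simply the concatenation of Theorems~\ref{thm:RB-post}, \ref{thm:subadj}, and~\ref{thm:post-brace}, and the only thing to be careful about is matching the ad hoc structure maps of $\HH_\B$ recorded in Corollary~\ref{cor:descendent} with those produced by the general Grossman-Larson construction.
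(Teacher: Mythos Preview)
Your proposal is correct and matches the paper's own proof, which simply cites Theorem~\ref{thm:post-brace} and Theorem~\ref{thm:RB-post}. The extra bookkeeping you spell out (identifying the Grossman-Larson groupoid with the descendent groupoid via Theorem~\ref{thm:subadj} and the formulas in Corollary~\ref{cor:descendent}) is already implicit in the proof of Corollary~\ref{cor:descendent}, so your argument is just a slightly more detailed version of the same route.
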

\begin{proof}
  It follows from Theorem \ref{thm:post-brace} and Theorem \ref{thm:RB-post} directly.
\end{proof}

\begin{thm}\label{thm:bracoid-post}
  Let a group bundle $(\G\stackrel{\pi}{\longrightarrow}M,\cdot,\iota,\inv)$ and a groupoid $(\xymatrix{ \G \ar@<0.5ex>[r]^{\alpha=\pi} \ar[r]_{\beta} & M},\star,\iota,\overline{\inv})$ be a skew-left bracoid. Define $\Phi=\beta:\G\to M$ and define $\rhd:\G_\beta\times_\pi\G\to \G$ by
  \begin{equation}
    \gamma\rhd\delta=\inv(\gamma)\cdot(\gamma\star\delta).
  \end{equation}
  Then $(\G\stackrel{\pi}{\longrightarrow}M,\cdot,\Phi,\rhd)$ is a post-groupoid.
\end{thm}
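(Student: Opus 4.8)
The plan is to verify the three axioms of Definition \ref{defi:post-groupoid} one at a time, after recording some preliminary bookkeeping. First I would note the structural points: $\Phi=\beta$ is surjective and $\Phi(\iota_m)=\beta(\iota_m)=m$ since $\iota$ is the unit section of the $\star$-groupoid; and $\rhd$ is well defined, i.e. $\pi(\gamma\rhd\delta)=\pi(\gamma)$, because $\alpha=\pi$ forces $\pi(\gamma\star\delta)=\alpha(\gamma\star\delta)=\alpha(\gamma)=\pi(\gamma)$, while $\inv(\gamma)$ also lies in the fibre $\pi^{-1}(\pi(\gamma))$, so $\inv(\gamma)\cdot(\gamma\star\delta)$ is defined and lands in that fibre. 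Invertibility of $L^\rhd_\gamma\colon\G_{\Phi(\gamma)}\to\G_{\pi(\gamma)}$ then follows because, identifying $\G_{\Phi(\gamma)}=\pi^{-1}(\beta(\gamma))=\alpha^{-1}(\beta(\gamma))$, this map is the composite of left $\star$-translation $\delta\mapsto\gamma\star\delta$ (a bijection $\alpha^{-1}(\beta(\gamma))\to\alpha^{-1}(\alpha(\gamma))$ in the groupoid) with left $\cdot$-multiplication by $\inv(\gamma)$ on the fibre group $\pi^{-1}(\pi(\gamma))$.

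The key lemma, which I would prove next and use throughout, is the identity $\gamma\cdot(\gamma\rhd\delta)=\gamma\star\delta$: indeed $\gamma\cdot(\gamma\rhd\delta)=\gamma\cdot\bigl(\inv(\gamma)\cdot(\gamma\star\delta)\bigr)=\bigl(\gamma\cdot\inv(\gamma)\bigr)\cdot(\gamma\star\delta)=\iota_{\pi(\gamma)}\cdot(\gamma\star\delta)=\gamma\star\delta$, using associativity in the fibre group and that $\iota_{\pi(\gamma)}$ --- simultaneously the fibre unit and the $\star$-identity at $\pi(\gamma)$ --- is a left unit for $\gamma\star\delta\in\pi^{-1}(\pi(\gamma))$. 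Granting this, Axiom (i) is immediate: $\Phi\bigl(\gamma_1\cdot(\gamma_1\rhd\gamma_2)\bigr)=\beta(\gamma_1\star\gamma_2)=\beta(\gamma_2)=\Phi(\gamma_2)$. Axiom (ii), $\gamma_1\rhd(\gamma_2\cdot\gamma_2')=(\gamma_1\rhd\gamma_2)\cdot(\gamma_1\rhd\gamma_2')$, unwinds to $\inv(\gamma_1)\cdot\bigl(\gamma_1\star(\gamma_2\cdot\gamma_2')\bigr)$, and applying the skew-left bracoid identity \eqref{eq:bracoid} to $\gamma_1\star(\gamma_2\cdot\gamma_2')$ turns this into $\bigl(\inv(\gamma_1)\cdot(\gamma_1\star\gamma_2)\bigr)\cdot\bigl(\inv(\gamma_1)\cdot(\gamma_1\star\gamma_2')\bigr)$, which is exactly the right-hand side.

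Axiom (iii), $\gamma_1\rhd(\gamma_2\rhd\gamma_3)=\bigl(\gamma_1\cdot(\gamma_1\rhd\gamma_2)\bigr)\rhd\gamma_3$, is the main obstacle and needs one auxiliary identity. By the key lemma the right-hand side is $(\gamma_1\star\gamma_2)\rhd\gamma_3=\inv(\gamma_1\star\gamma_2)\cdot\bigl((\gamma_1\star\gamma_2)\star\gamma_3\bigr)=\inv(\gamma_1\star\gamma_2)\cdot\bigl(\gamma_1\star(\gamma_2\star\gamma_3)\bigr)$ by associativity of $\star$. For the left-hand side, I would first derive, by applying \eqref{eq:bracoid} to $\gamma_1\star\bigl(\inv(\gamma_2)\cdot\gamma_2\bigr)=\gamma_1\star\iota_{\beta(\gamma_1)}=\gamma_1$, the relation $\gamma_1\star\inv(\gamma_2)=\gamma_1\cdot\inv(\gamma_1\star\gamma_2)\cdot\gamma_1$ inside the fibre $\pi^{-1}(\pi(\gamma_1))$. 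Then, expanding $\gamma_1\rhd(\gamma_2\rhd\gamma_3)=\inv(\gamma_1)\cdot\bigl(\gamma_1\star(\inv(\gamma_2)\cdot(\gamma_2\star\gamma_3))\bigr)$, applying \eqref{eq:bracoid} once more and substituting the auxiliary identity, the fibre-group factors telescope to $\inv(\gamma_1\star\gamma_2)\cdot\bigl(\gamma_1\star(\gamma_2\star\gamma_3)\bigr)$, matching the right-hand side. Throughout I would keep careful track of the source/target and $\pi$-constraints to confirm that every $\star$- and $\cdot$-product that appears is legitimately defined; the only genuinely delicate step is the telescoping in Axiom (iii), and the auxiliary identity $\gamma_1\star\inv(\gamma_2)=\gamma_1\cdot\inv(\gamma_1\star\gamma_2)\cdot\gamma_1$ is precisely what makes it go through.
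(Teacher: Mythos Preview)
Your proof is correct and follows essentially the same route as the paper's: establish $\gamma\cdot(\gamma\rhd\delta)=\gamma\star\delta$, use it for Axiom~(i), apply the bracoid identity \eqref{eq:bracoid} for Axiom~(ii), and for Axiom~(iii) expand $\gamma_1\rhd(\gamma_2\rhd\gamma_3)$ via \eqref{eq:bracoid} and telescope. The only cosmetic differences are that your invertibility argument for $L^\rhd_\gamma$ (as a composite of two bijections) is cleaner than the paper's, and the paper packages your auxiliary identity as $\gamma_1\rhd\inv(\gamma_2)=\inv(\gamma_1\rhd\gamma_2)$, invoking the already-proven Axiom~(ii), which upon unwinding is exactly your formula $\gamma_1\star\inv(\gamma_2)=\gamma_1\cdot\inv(\gamma_1\star\gamma_2)\cdot\gamma_1$.
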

\begin{proof}
  First it is obvious that
  $
  \Phi(\iota_m)=\beta(\iota_m)=m,$ for all $m\in M,
  $
  and $\pi(\gamma\rhd \delta)=\pi(\inv(\gamma)\cdot(\gamma\star\delta))=\pi(\gamma)$. It is also straightforward to see that $\gamma\rhd\delta=\iota_{\pi(\gamma)}$ if and only if $\delta=\iota_{\beta(\gamma)}$, which implies that $L^\rhd_\gamma:\G_{\Phi(\gamma)}\to\G_{\pi(\gamma)}$ is invertible.

  For all $\gamma,\delta\in\G$ satisfying $\Phi(\gamma)=\pi(\delta)$, we have
  \begin{eqnarray*}
    \Phi(\gamma\cdot(\gamma\rhd\delta))&=&\beta(\gamma\star\delta)=\beta(\delta)=\Phi(\delta),
  \end{eqnarray*}
  which implies that Axiom (i)   in Definition \ref{defi:post-groupoid} holds.

  For all $\gamma, \delta,\delta'\in \G$ satisfying $\Phi(\gamma)=\pi(\delta)=\pi(\delta')$, by \eqref{eq:bracoid}, we have
  \begin{eqnarray*}
   \gamma\rhd(\delta\cdot \delta')&=& \inv(\gamma)\cdot (\gamma\star  (\delta\cdot \delta'))\\
   &=&\inv(\gamma)\cdot\Big((\gamma\star \delta)\cdot \inv(\gamma)\cdot (\gamma\star \delta')\Big)\\
   &=&\Big(\inv(\gamma)\cdot(\gamma\star \delta)\Big)\cdot \Big(\inv(\gamma)\cdot (\gamma\star \delta')\Big)\\
   &=&  ( \gamma\rhd\delta) \cdot (\gamma\rhd\delta'),
  \end{eqnarray*}
  which implies that Axiom (ii)   in Definition \ref{defi:post-groupoid} holds.

  Finally, for all $(\gamma_1,\gamma_2,\gamma_3)\in \G_\Phi\times_\pi \G_\Phi\times_\pi \G$, we have
  \begin{eqnarray*}
\gamma_1\rhd(\gamma_2\rhd\gamma_3)&=&\inv(\gamma_1)\cdot(\gamma_1\star(\gamma_2\rhd\gamma_3))\\
&=&\inv(\gamma_1)\cdot(\gamma_1\star(\inv(\gamma_2)\cdot(\gamma_2\star\gamma_3)))\\
&=&\inv(\gamma_1)\cdot\Big((\gamma_1\star\inv(\gamma_2))\cdot\inv(\gamma_1)\cdot(\gamma_1\star(\gamma_2\star\gamma_3))\Big)\\
&=&( \gamma_1\rhd\inv(\gamma_2))\cdot\inv(\gamma_1)\cdot\Big((\gamma_1\star\gamma_2)\star\gamma_3\Big)\\
&=& \inv(\gamma_1\rhd\gamma_2)\cdot\inv(\gamma_1)\cdot\Big((\gamma_1\star\gamma_2)\star\gamma_3\Big)\\
&=& \inv(\gamma_1\cdot(\gamma_1\rhd\gamma_2)) \cdot\Big((\gamma_1\cdot(\gamma_1\rhd\gamma_2))\star\gamma_3\Big)\\
&=&(\gamma_1\cdot(\gamma_1\rhd\gamma_2))\rhd\gamma_3,
  \end{eqnarray*}
   which implies that Axiom (iii)   in Definition \ref{defi:post-groupoid} holds.

   Therefore, $(\G\stackrel{\pi}{\longrightarrow}M,\Phi,\rhd)$ is a post-groupoid.
\end{proof}

\emptycomment{
\begin{thm}\mlabel{functor} Proposition   gives a functor $\FF:\PG\lon
\SLB$, and Proposition
gives  a functor $\GG:\SLB\lon
  \PG$. Moreover, they give an isomorphism between the categories $\PG$  and $\SLB$.
\end{thm}
}

Due to the equivalence of post-groupoids and skew-left bracoids, it is obvious that skew-left bracoids also give rise to quiver-theoretical solutions of the Yang-Baxter equation.

\begin{cor}
 Let a group bundle $(\G\stackrel{\pi}{\longrightarrow}M,\cdot,\iota,\inv)$ and a groupoid $(\xymatrix{ \G \ar@<0.5ex>[r]^{\alpha=\pi} \ar[r]_{\beta} & M},\star,\iota,\overline{\inv})$ be a skew-left bracoid. Then $R_\G:\G_\beta\times_\pi \G\lon \G_\beta\times_\pi \G$ defined by
\begin{eqnarray}
  R_\G(\gamma,\delta)=\big(\inv(\gamma)\cdot(\gamma\star\delta),\overline{\inv}\big(\inv(\gamma)\cdot(\gamma\star\delta)\big)\star \gamma\star \delta\big), \quad \forall (\gamma,\delta)\in \G_\beta\times_\pi \G,
\end{eqnarray}
is a quiver-theoretical solution of the Yang-Baxter equation on the quiver $\xymatrix{ \G \ar@<0.5ex>[r]^{\alpha=\pi} \ar[r]_{\beta} & M}$.
\end{cor}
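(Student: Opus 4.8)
The plan is to reduce the statement to the combination of Theorem~\ref{thm:bracoid-post} and Theorem~\ref{pgybe}, which between them already contain all the substantive work. First I would apply Theorem~\ref{thm:bracoid-post} to the given skew-left bracoid: this yields a post-groupoid $(\G\stackrel{\pi}{\longrightarrow}M,\cdot,\Phi,\rhd)$ with $\Phi=\beta$ and $\gamma\rhd\delta=\inv(\gamma)\cdot(\gamma\star\delta)$ for all composable $\gamma,\delta$.

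Next I would identify the Grossman-Larson groupoid of this post-groupoid with the groupoid $(\xymatrix{\G \ar@<0.5ex>[r]^{\alpha=\pi} \ar[r]_{\beta} & M},\star,\iota,\overline{\inv})$ we started from. Indeed, the source and target maps of the Grossman-Larson groupoid are $\alpha=\pi$ and $\beta=\Phi=\beta$, and its multiplication, computed from Theorem~\ref{thm:subadj}, is $\gamma_1\cdot(\gamma_1\rhd\gamma_2)=\gamma_1\cdot\big(\inv(\gamma_1)\cdot(\gamma_1\star\gamma_2)\big)=\gamma_1\star\gamma_2$. Since the two multiplications coincide, the new inverse $\inv_\rhd$ produced by the post-groupoid structure must be the original groupoid inverse $\overline{\inv}$; if one wants this explicitly, one checks $\inv_\rhd(\gamma)=(L^\rhd_\gamma)^{-1}\inv(\gamma)=\overline{\inv}(\gamma)$ using that $\gamma\star\overline{\inv}(\gamma)=\iota_{\pi(\gamma)}$ and the invertibility of $L^\rhd_\gamma$.

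Then I would observe that the map $R_\G$ displayed in the statement is exactly the map produced by Theorem~\ref{pgybe} applied to this post-groupoid: substituting $\gamma\rhd\delta=\inv(\gamma)\cdot(\gamma\star\delta)$ and $\inv_\rhd=\overline{\inv}$ into the formula $R_\G(\gamma,\delta)=\big(\gamma\rhd\delta,\ \inv_\rhd(\gamma\rhd\delta)\star\gamma\star\delta\big)$ gives precisely the asserted expression. Hence Theorem~\ref{pgybe} applies verbatim, and $R_\G$ is a (non-degenerate) quiver-theoretical solution of the Yang-Baxter equation on the quiver $\xymatrix{\G \ar@<0.5ex>[r]^{\alpha=\pi} \ar[r]_{\beta} & M}$.

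The hard part is essentially only bookkeeping: there is no new computation, since the Yang-Baxter identity, the braided-groupoid structure, and non-degeneracy all come for free from Theorem~\ref{pgybe} (ultimately from Theorem~\ref{thm:RB-mp} and Theorem~\ref{bgybe}). The single point that genuinely needs a line of justification is the coincidence $\inv_\rhd=\overline{\inv}$ of the two inverse maps, which follows from uniqueness of inverses in a groupoid once one knows the Grossman-Larson multiplication agrees with $\star$. Alternatively, one could invoke Theorem~\ref{pgybef} and the equivalence of post-groupoids with skew-left bracoids, but the direct route above is shorter.
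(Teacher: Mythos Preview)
Your proposal is correct and follows exactly the paper's own proof, which simply says the result follows directly from Theorem~\ref{thm:bracoid-post} and Theorem~\ref{pgybe}. Your additional bookkeeping (identifying the Grossman-Larson groupoid with the original groupoid and checking $\inv_\rhd=\overline{\inv}$) makes explicit what the paper leaves implicit, but the argument is the same.
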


\begin{proof}
  It follows from Theorem \ref{thm:bracoid-post} and Theorem \ref{pgybe} directly.
\end{proof}

\section{Post-Lie groupoids and post-Lie algebroids} \label{sec:6}
In this section, first we recall post-Lie algebroids, and establish the relation between post-Lie algebroids and relative Rota-Baxter operators on Lie algebroids. Then we show that a post-Lie groupoid naturally gives rise to a post-Lie algebroid via differentiation.
\subsection{Post-Lie algebroids and relative Rota-Baxter operators on Lie algebroids}

\begin{defi} \mcite{Mkz:GTGA}
  A {\bf Lie algebroid} structure on a vector bundle $A\longrightarrow M$ is
a pair consisting of a Lie algebra structure $[\cdot,\cdot]_A$ on
the section space $\Gamma(A)$ and a vector bundle morphism
$a_A:A\longrightarrow TM$
from $A$ to the tangent bundle $TM$, called the {\bf anchor}, satisfying the relation
\begin{equation}\mlabel{eq:LAf}~[x,fy]_A=f[x,y]_A+a_A(x)(f)y,\quad \forall x,y\in\Gamma(A), f\in
\CWM.\end{equation}
\end{defi}
We  denote a Lie algebroid by $(A,[\cdot,\cdot]_A,a_A)$ or simply $A$
if there is no danger of confusion.

\begin{ex}
 Let $\phi: \mathfrak{g}\longrightarrow \mathfrak{X}(M)$ be a left action of a Lie algebra  $\mathfrak{g}$   on a manifold $M$, that is, a Lie algebra homomorphism from $(\g,[\cdot,\cdot]_\g)$ to the Lie algebra $(\frkX(M),[\cdot,\cdot]_{TM})$ of vector fields. Consider the trivial vector bundle $A=M\times\g$. Then the section space $\Gamma(M\times\g)$ is isomorphic to $C^\infty(M)\otimes \g$. An element $f\otimes u$ will be simply wrote as $fu$. We have a Lie algebroid structure on  $A$,
whose anchor $a_A:M\times\g\lon TM$ and  Lie bracket $[\cdot,\cdot]_A:\wedge^2(C^\infty(M)\otimes \g)\lon C^\infty(M)\otimes \g$ are given by
\begin{eqnarray}\mlabel{action1}
a_A(m,u)&=&\phi(u)_m,\quad \forall m\in M, u\in\g,\\
  \mlabel{action2}{[fu,gv]}_A&=&fg[u,v]_{\mathfrak{g}}+f\phi(u)(g)v-g\phi(v)(f)u,\quad \forall u,v\in \mathfrak{g},f,g\in C^\infty(M).
\end{eqnarray}
This Lie algebroid is called the {\bf action Lie algebroid} \mcite{Mkz:GTGA}.
\end{ex}

\begin{defi} \mcite{Val}
A {\bf post-Lie algebra} $(\g,[\cdot,\cdot]_\g,\triangleright)$ consists of a Lie algebra $(\g,[\cdot,\cdot]_\g)$ and a binary product $\triangleright:\g\otimes\g\lon\g$ such that
\begin{eqnarray}
\mlabel{Post-1}u\triangleright[v,w]_\g&=&[u\triangleright v,w]_\g+[v,u\triangleright w]_\g,\\
 \mlabel{Post-22}[u,v]_\g\triangleright w&=&a_\triangleright(u,v,w)-a_\triangleright(v,u,w),
\end{eqnarray}
here $a_{\triangleright}(u,v,w):=u\triangleright(v\triangleright w)-(u\triangleright v)\triangleright w $ and $u,v,w\in \g.$
\end{defi}

Define $L_\triangleright:\g\lon \gl(\g)$ by $L_\triangleright(u)(v)=u\triangleright v$. Then by \eqref{Post-1},    $L_\triangleright$ is a linear map from $\g$ to $\Der(\g)$.

As the geometrization of post-Lie algebras, the notion of  post-Lie algebroids was given in \cite[Definition 2.22]{ML} in the study of geometric numerical analysis.   See \cite{MSV} for more applications.

\begin{defi}\mlabel{defi:postLA}
A {\bf post-Lie algebroid} structure on a vector bundle
$A\longrightarrow M$ is a triple that consists of a $C^\infty(M)$-linear Lie
algebra structure $[\cdot,\cdot]_A$ on $\Gamma(A)$, a  bilinear operation   $\triangleright_A:\Gamma(A)\otimes \Gamma(A)\longrightarrow \Gamma(A)$   and a
vector bundle morphism $a_A:A\longrightarrow TM$, called the {\bf anchor},
such that $(\Gamma(A),[\cdot,\cdot]_A,\triangleright_A)$ is a post-Lie algebra, and
for all $f\in\CWM$ and $u,v\in\Gamma(A)$, the following
relations are satisfied:
\begin{itemize}
\item[\rm(i)]$u\triangleright_A(fv)=f(u\triangleright_A v)+a_A(u)(f)v,$
\item[\rm(ii)] $(fu)\triangleright_A v=f(u\triangleright_A v).$
\end{itemize}
\end{defi}

We usually denote a post-Lie algebroid by $(A,[\cdot,\cdot]_A,\triangleright_A, a_A)$.  If the Lie algebra structure $[\cdot,\cdot]_A$ in a post-Lie algebroid $(A, [\cdot,\cdot]_A, \triangleright_A, a_A)$ is abelian, then it becomes a left-symmetric algebroid, which is also called a Koszul-Vinberg algebroid. See \mcite{Boyom2,LSBC,ML} for more details.

 The following result is well known, e.g. see \cite[Proposition 2.23]{ML}.
  \begin{pro}
  Let  $(A,[\cdot,\cdot]_A,\triangleright_A, a_A)$ be a post-Lie algebroid. Then $(A,[\cdot,\cdot]_{\triangleright_A},  a_A)$ is a Lie algebroid, called the {\bf Grossman-Larson Lie algebroid}, where the   Lie bracket $[\cdot,\cdot]_{\triangleright_A}$ is given by
  \begin{equation}\label{eq:subadj-algebroid}
    [x,y]_{\triangleright_A}=[x,y]_A+x\triangleright_A y-y\triangleright_A x,\quad\forall x,y\in\Gamma(A).
  \end{equation}
  \end{pro}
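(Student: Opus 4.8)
The plan is to verify directly the two clauses in the definition of a Lie algebroid for the pair $\big([\cdot,\cdot]_{\triangleright_A},\,a_A\big)$ on $A\longrightarrow M$: that $[\cdot,\cdot]_{\triangleright_A}$ is a Lie bracket on $\Gamma(A)$, and that it satisfies the Leibniz relation \eqref{eq:LAf} with anchor $a_A$. Antisymmetry of $[\cdot,\cdot]_{\triangleright_A}$ is immediate, since $[\cdot,\cdot]_A$ is antisymmetric and the correction $x\triangleright_A y-y\triangleright_A x$ is manifestly antisymmetric in $x,y$. The Jacobi identity is precisely the statement that the Grossman--Larson bracket of a post-Lie algebra is again a Lie bracket; applied to the post-Lie algebra $(\Gamma(A),[\cdot,\cdot]_A,\triangleright_A)$ it uses only the two post-Lie axioms \eqref{Post-1} and \eqref{Post-22}, which hold by hypothesis, so it can be quoted from \cite{Val} (it is the Lie-algebra counterpart of Theorem \ref{pro:subad}). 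If a self-contained argument is wanted, one expands $\sum_{\mathrm{cyclic}}\big[[x,y]_{\triangleright_A},z\big]_{\triangleright_A}$, groups the resulting terms by type, uses \eqref{Post-1} (that $y\mapsto x\triangleright_A y$ is a derivation of $[\cdot,\cdot]_A$) to cancel the terms mixing a bracket with a $\triangleright_A$, and uses \eqref{Post-22} to absorb the associator terms $a_{\triangleright_A}(x,y,z)-a_{\triangleright_A}(y,x,z)$ against $[x,y]_A\triangleright_A z$.

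Next I would check the Leibniz relation by a one-line unwinding: for $x,y\in\Gamma(A)$ and $f\in\CWM$,
\[
[x,fy]_{\triangleright_A}=[x,fy]_A+x\triangleright_A(fy)-(fy)\triangleright_A x .
\]
Since $[\cdot,\cdot]_A$ is $C^\infty(M)$-linear, $[x,fy]_A=f[x,y]_A$; by item (i) of Definition \ref{defi:postLA}, $x\triangleright_A(fy)=f(x\triangleright_A y)+a_A(x)(f)y$; and by item (ii), $(fy)\triangleright_A x=f(y\triangleright_A x)$. Adding these three identities gives $f[x,y]_{\triangleright_A}+a_A(x)(f)y$, which is exactly \eqref{eq:LAf}; in particular the anchor of the Grossman--Larson Lie algebroid is again $a_A$.

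If one works with a formulation of the Lie-algebroid axioms that in addition demands $a_A\big([\cdot,\cdot]_{\triangleright_A}\big)=[a_A(\cdot),a_A(\cdot)]_{TM}$, this follows from the standard fact that the Leibniz rule together with the Jacobi identity forces the anchor of an almost-Lie algebroid to be bracket-preserving (alternatively it can be read off from \eqref{Post-1} after comparing with the known fact that $a_A$ is compatible with both $[\cdot,\cdot]_A$ and $\triangleright_A$). Overall I expect no genuine obstacle here: the only non-routine ingredient is the post-Lie-to-Lie Jacobi computation, which is classical, and the main care needed is simply to organize the cyclic sum so that \eqref{Post-1} and \eqref{Post-22} can be applied cleanly.
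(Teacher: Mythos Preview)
Your proposal is correct. Note that the paper does not actually prove this proposition: it states it as well known and refers to \cite[Proposition 2.23]{ML}. Your direct verification---checking antisymmetry, Jacobi via the post-Lie axioms \eqref{Post-1} and \eqref{Post-22}, and the Leibniz rule via items (i) and (ii) of Definition~\ref{defi:postLA} together with the $C^\infty(M)$-linearity of $[\cdot,\cdot]_A$---is exactly the natural argument, and supplies what the paper omits.
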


\begin{ex}\label{ex:action-post-algebroid}

Let $(\g,[\cdot,\cdot]_\g)$ be a Lie algebra and $M$ a manifold. Then the section space $\Gamma(M\times \g)$ of the trivial bundle $A=M\times \g$  enjoys a  $C^\infty(M)$-linear Lie algebra structure $[\cdot,\cdot]_\g$ given by
$$
[fu,gv]_\g=fg[u,v]_\g,\quad \forall f,g\in C^\infty(M), ~u,v\in\g.
$$
Let $\phi:\g\to\frkX(M)$ be a left action of $\g$ on $M$. Then one can define $a_A:M\times\g\to TM$ by \eqref{action1}, and define $\triangleright_A:\Gamma(A)\times \Gamma(A)\longrightarrow \Gamma(A)$ by
\begin{equation}\label{eq:act-post-Lie-product}
  (fu)\triangleright_A (gv)=f\phi(u)(g)v,\quad \forall f,g\in C^\infty(M), ~u,v\in\g.
\end{equation}
Then $(A=M\times \g,[\cdot,\cdot]_\g,\triangleright_A,a_A)$ is a post-Lie algebroid.  See \cite[Proposition 4.3]{MSV} for more details.
 \end{ex}

 \begin{rmk}
 We call $(A=M\times \g,[\cdot,\cdot]_\g,\triangleright_A,a_A)$  the {\bf MKW post-Lie algebroid} of a Lie algebra action on a smooth manifold. The corresponding  post-Lie algebraic structure on the section space $\Gamma(A)=C^{\infty}(M,\g)$ is called the {\bf MKW post-Lie algebra}.
\end{rmk}

 \begin{defi}
    An {\bf action of a post-Lie algebra} $(\g,[\cdot,\cdot]_\g,\triangleright_\g)$ on a manifold $M$ is a Lie algebra homomorphism $\phi:(\g,[\cdot,\cdot]_{\triangleright_\g})\lon \frkX(M).$
 \end{defi}

 Let $\phi:(\g,[\cdot,\cdot]_{\triangleright_\g})\lon \frkX(M)$ be an {action of the post-Lie algebra} $(\g,[\cdot,\cdot]_\g,\triangleright_\g)$ on a manifold $M$. On the trivial bundle $A=M\times \g$, define the anchor $a_A:M\times\g\lon TM$ by \eqref{action1} and define the bilinear operation $\triangleright_A:\otimes^2(C^\infty(M)\otimes \g)\lon C^\infty(M)\otimes \g$   by
\begin{eqnarray}
 \mlabel{action21}  (fu)\triangleright_A gv  &:=&fgu\triangleright_\g v+f\phi(u)(g)v,\quad \forall u,v\in \mathfrak{g},f,g\in C^\infty(M).
\end{eqnarray}

\begin{pro}\cite{GLS}\label{pro:actionPL}
  With the above notations, $(M\times\g, [\cdot,\cdot]_\g,\triangleright_A,a_A)$ is a post-Lie algebroid, called the {\bf action post-Lie algebroid} of the post-Lie algebra $(\g,[\cdot,\cdot]_\g,\rhd)$.
\end{pro}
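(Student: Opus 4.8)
The plan is to check the three requirements of Definition \ref{defi:postLA} for the data $(A=M\times\g,[\cdot,\cdot]_\g,\triangleright_A,a_A)$, where $\triangleright_A$ and $a_A$ are given by \eqref{action21} and \eqref{action1} and $\Gamma(A)\cong\CWM\otimes\g$. Two of them are immediate: $[fu,gv]_\g=fg[u,v]_\g$ inherits antisymmetry and the Jacobi identity from $\g$ and is manifestly $\CWM$-bilinear, so $[\cdot,\cdot]_\g$ is a $\CWM$-linear Lie algebra structure on $\Gamma(A)$; and $a_A(m,u)=\phi(u)_m$ is fiberwise linear and smooth, hence a vector bundle morphism covering $\mathrm{Id}_M$. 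The anchor relations (i) and (ii) of Definition \ref{defi:postLA} also follow directly from \eqref{action21}: (ii) holds because \eqref{action21} is $\CWM$-linear in its first slot, and (i) reduces, after using (ii) to pull a function out of the first argument, to the Leibniz rule $u\triangleright_A(fv)=f\,(u\triangleright_A v)+\phi(u)(f)v$, which is exactly the content of \eqref{action21} with the first argument unscaled.

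The substantive point is that $(\Gamma(A),[\cdot,\cdot]_\g,\triangleright_A)$ is a post-Lie algebra, i.e. \eqref{Post-1} and \eqref{Post-22}. I would prove this by reduction to constant sections $u,v,w\in\g\subset\Gamma(A)$, which generate $\Gamma(A)$ as a $\CWM$-module. Let $D_1$ and $D_2$ denote the defects of \eqref{Post-1} and \eqref{Post-22}. Using only (i), (ii) and $\CWM$-bilinearity of $[\cdot,\cdot]_\g$, one computes that $D_1$ is $\CWM$-linear in its first and second arguments (hence, since $D_1$ is antisymmetric in the last two, in all three). Thus $D_1\equiv 0$ follows from its vanishing on constant sections, where $D_1(u,v,w)=u\triangleright_\g[v,w]_\g-[u\triangleright_\g v,w]_\g-[v,u\triangleright_\g w]_\g=0$ by \eqref{Post-1} for the post-Lie algebra $\g$. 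Similarly $D_2$ is antisymmetric in its first two arguments, and a short expansion (in which the anchor terms generated by (i) inside $a_{\triangleright_A}(v,fu,w)$ cancel in pairs) shows $D_2$ is $\CWM$-linear in the first argument, hence in the second.

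The one place where a genuine obstruction shows up is the third argument of \eqref{Post-22}. Expanding $D_2(u,v,fw)$ with (i) and (ii) produces, beyond $f\,D_2(u,v,w)$, the correction term $\big([a_A(u),a_A(v)]-a_A([u,v]_{\triangleright_A})\big)(f)\,w$, where $[\cdot,\cdot]_{\triangleright_A}$ is the Grossman--Larson bracket $[x,y]_\g+x\triangleright_A y-y\triangleright_A x$. So $D_2$ is $\CWM$-linear in the third slot precisely when $a_A$ is a Lie algebra homomorphism into $\frkX(M)$ for $[\cdot,\cdot]_{\triangleright_A}$ — and this is exactly the hypothesis of the proposition: on constant sections $[u,v]_{\triangleright_A}=[u,v]_{\triangleright_{\g}}$ and $a_A|_\g=\phi$, so the assumed homomorphism property $\phi\colon(\g,[\cdot,\cdot]_{\triangleright_{\g}})\to\frkX(M)$ gives $[a_A(u),a_A(v)]=a_A([u,v]_{\triangleright_{\g}})$, and this spreads over $\CWM$ via the identity $[fu,gv]_{\triangleright_A}=fg[u,v]_{\triangleright_{\g}}+f\phi(u)(g)v-g\phi(v)(f)u$. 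Once this correction term vanishes, $D_2$ is $\CWM$-multilinear, so $D_2\equiv 0$ reduces to $D_2(u,v,w)=[u,v]_\g\triangleright_\g w-a_{\triangleright_{\g}}(u,v,w)+a_{\triangleright_{\g}}(v,u,w)=0$, which is \eqref{Post-22} for $\g$. Collecting everything establishes Proposition \ref{pro:actionPL}. The main obstacle is conceptual rather than computational: realizing that the relevant compatibility of $\phi$ is with the \emph{Grossman--Larson} bracket (not with $[\cdot,\cdot]_\g$), and isolating it as the unique non-tensorial ingredient of the axioms. One could instead verify \eqref{Post-1} and \eqref{Post-22} by a direct computation on general sections $fu,gv,hw$, but the bookkeeping is heavier and the hypothesis enters at exactly the same step.
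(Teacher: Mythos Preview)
Your argument is correct, but note that the paper does not actually prove this proposition: it is quoted from \mcite{GLS} with no proof given. So there is nothing to compare your approach against within this paper. Your direct verification is the natural one, and the key observation --- that the only non-tensorial term in the defect of \eqref{Post-22} is controlled precisely by the hypothesis that $\phi$ is a Lie algebra homomorphism for the Grossman--Larson bracket $[\cdot,\cdot]_{\triangleright_\g}$ --- is exactly the point of the definition of an action of a post-Lie algebra preceding the proposition. One small nitpick: in your computation of $D_2(u,v,fw)$ the correction term comes out with the opposite sign, namely $\big(a_A([u,v]_{\triangleright_A})-[a_A(u),a_A(v)]\big)(f)\,w$, but of course this vanishes under the same hypothesis, so the conclusion is unaffected.
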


 Now we introduce the notion of a relative Rota-Baxter operator on a Lie algebroid with respect to an action on a bundle of Lie algebras. Recall that a bundle of Lie algebras is a vector bundle such that on each fibre there is a Lie algebra structure which are not necessarily isomorphic. While for a Lie algebra bundle, the Lie algebra structure on each fibre are isomorphic \cite[Definition 3.3.8]{Mkz:GTGA}. In the definition of a post-Lie algebroid, the requirement that there is a $C^\infty(M)$-linear Lie
algebra structure $[\cdot,\cdot]_A$ on $\Gamma(A)$ is equivalent to the fact that $(A,[\cdot,\cdot]_A)$ is a bundle of Lie algebras.

 \begin{defi}\label{defi:action-algebroid}
   An {\bf action of a Lie algebroid} $(A\stackrel{}{\longrightarrow}M,[\cdot,\cdot]_A,a_A)$ on a bundle of Lie algebras $(E\stackrel{ }{\longrightarrow}M,[\cdot,\cdot]_E)$ is a bilinear map $\psi:\Gamma(A)\times \Gamma(E)\to \Gamma(E)$ such that
   \begin{itemize}
     \item[(i)]$ \psi(fx,u)=f\psi(x,u),$
     \item[(ii)]$\psi(x,fu)=f\psi(x,u)+a_A(x)(f)u,$
      \item[(iii)]$\psi(x,[u,v]_E)=[\psi(x,u),v]_E+[u,\psi(x,v)]_E,$
       \item[(iv)]$\psi([x,y]_A,u)=\psi(x,\psi(y,u))-\psi(y,\psi(x,u)),$
   \end{itemize}
   for all $x,y\in\Gamma(A), ~u,v\in\Gamma(E)$ and $f\in C^\infty(M)$.
 \end{defi}

  Let
$E{\longrightarrow}M$ be a vector bundle and $\dev E$ the associated covariant
differential operator bundle. Any section $\frkd\in\Gamma(\dev E)$ satisfies
$$
\frkd(fu)=f\frkd(u)+\frka(\frkd)(f)u,\quad\forall u\in\Gamma(E),~f\in C^\infty(M),
$$
for a bundle map $\frka:\dev E\to TM$, which is called the anchor.
It is well known that $(\dev E,[\cdot,\cdot]_\dev,\frka)$ is a transitive Lie algebroid, where $[\cdot,\cdot]_\dev$ is the commutator bracket (see \cite[Example
3.3.4]{Mkz:GTGA}).   Let $(E{\longrightarrow}M,[\cdot,\cdot]_E)$ be a  Lie algebra bundle. Denote by $\Der(E) $ the subalgebroid of $\dev E$, whose section is also a derivation on the Lie algebra $(\Gamma(E),[\cdot,\cdot]_E)$.

The following result is obvious.

\begin{pro}
 An   action of a Lie algebroid  $(A\stackrel{}{\longrightarrow}M,[\cdot,\cdot]_A,a_A)$ on a Lie algebra bundle $(E{\longrightarrow}M,[\cdot,\cdot]_E)$ is equivalent to a homomorphism from the Lie algebroid $A$ to $\Der(E)$.
\end{pro}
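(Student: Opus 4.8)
The plan is to set up an explicit bijection between actions $\psi\colon\Gamma(A)\times\Gamma(E)\to\Gamma(E)$ as in Definition~\ref{defi:action-algebroid} and base-preserving Lie algebroid homomorphisms from $A$ to $\Der(E)$, by sending $\psi$ to the map $\Psi$ on sections defined by $\Psi(x)\defbe\psi(x,\cdot)$ and, conversely, a homomorphism $\Psi$ to the bilinear map $\psi(x,u)\defbe\Psi(x)(u)$. The key observation is that each of the four axioms (i)--(iv) of Definition~\ref{defi:action-algebroid} translates directly into one of the defining properties of such a homomorphism, so the proof is essentially a dictionary between the two lists.

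First I would treat the forward direction. Suppose $\psi$ is an action. Fixing $x\in\Gamma(A)$, axiom (ii) says exactly that $u\mapsto\psi(x,u)$ is a covariant differential operator on $E$ whose anchor is $a_A(x)$, that is, $\Psi(x)\in\Gamma(\dev E)$ with $\frka(\Psi(x))=a_A(x)$; axiom (iii) says moreover that $\Psi(x)$ is a derivation of the fibrewise bracket $[\cdot,\cdot]_E$, so in fact $\Psi(x)\in\Gamma(\Der(E))$. Axiom (i) is the $C^\infty(M)$-linearity of $x\mapsto\Psi(x)$, so by tensoriality $\Psi$ is induced by a vector bundle morphism $A\to\Der(E)$, which is anchor-compatible by the previous remark. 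Finally axiom (iv) is precisely the identity $\Psi([x,y]_A)=[\Psi(x),\Psi(y)]_\dev$, where $[\cdot,\cdot]_\dev$ is the commutator bracket; hence $\Psi$ preserves brackets, and it is a base-preserving Lie algebroid homomorphism.

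Conversely, given such a homomorphism $\Psi\colon A\to\Der(E)$, the bilinear map $\psi(x,u)\defbe\Psi(x)(u)$ satisfies (i) by $C^\infty(M)$-linearity of the bundle map $\Psi$, satisfies (ii) because $\Psi(x)\in\Gamma(\dev E)$ with anchor $\frka(\Psi(x))=a_A(x)$ (anchor-compatibility of $\Psi$), satisfies (iii) because $\Psi(x)$ is a derivation of $[\cdot,\cdot]_E$, and satisfies (iv) because $\Psi$ preserves brackets. These two assignments are visibly mutually inverse, which gives the claimed equivalence. The only step that is not a verbatim transcription between the two axiom lists is the identification of axiom (ii) with the statement ``$\psi(x,\cdot)\in\Gamma(\dev E)$ with anchor $a_A(x)$'', together with the standard fact that a $C^\infty(M)$-linear map $\Gamma(A)\to\Gamma(\Der(E))$ is the same datum as a vector bundle map $A\to\Der(E)$; this (entirely routine) observation is the content behind the word ``obvious''.
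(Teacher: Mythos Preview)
Your proposal is correct and is precisely the routine verification the paper has in mind: the paper does not give a proof at all, stating only that ``the following result is obvious,'' and your dictionary matching axioms (i)--(iv) against the defining properties of a base-preserving Lie algebroid morphism into $\Der(E)$ is exactly the argument any reader would supply.
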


 \begin{defi}
  Let $\psi:\Gamma(A)\times \Gamma(E)\to \Gamma(E)$ be an action of a Lie algebroid  $(A\stackrel{}{\longrightarrow}M,[\cdot,\cdot]_A,a_A)$ on a bundle of Lie algebras $E\stackrel{ }{\longrightarrow}M$. A base-preserving bundle map $B:E\to A$ is called a {\bf relative Rota-Baxter operator} if
  $$
  [B(u),B(v)]_A=B(\psi(B(u),v)-\psi(B(v),u)+[u,v]_E),\quad\forall u,v\in \Gamma(E).
  $$
 \end{defi}

  \begin{thm}\label{thm:postalg-RB}
  Let  $(A,[\cdot,\cdot]_A,\triangleright_A, a_A)$ be a post-Lie algebroid. Then $\triangleright_A$ gives rise to an action of the Grossman-Larson Lie algebroid $(A,[\cdot,\cdot]_{\triangleright_A},  a_A)$ on the bundle of Lie algebras $(A,[\cdot,\cdot]_A)$, and the identity map $\Id:A\to A$ is a relative Rota-Baxter operator on the Grossman-Larson Lie algebroid $(A,[\cdot,\cdot]_{\triangleright_A},  a_A)$ with respect to the action $\triangleright_A$   on the bundle of Lie algebras  $(A,[\cdot,\cdot]_A)$.
  \end{thm}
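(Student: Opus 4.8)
The plan is to unwind the definitions. Set $\psi:=\triangleright_A$ and regard its first argument as a section of the Grossman-Larson Lie algebroid $(A,[\cdot,\cdot]_{\triangleright_A},a_A)$ and its second as a section of the bundle of Lie algebras $(A,[\cdot,\cdot]_A)$; this second structure makes sense because the requirement in Definition \ref{defi:postLA} that $[\cdot,\cdot]_A$ be $C^\infty(M)$-linear is precisely the statement that $(A,[\cdot,\cdot]_A)$ is a bundle of Lie algebras. Then I would check the four axioms of Definition \ref{defi:action-algebroid} in turn. Axiom (i), $\psi(fx,u)=f\psi(x,u)$, is exactly condition (ii) of Definition \ref{defi:postLA}; axiom (ii), $\psi(x,fu)=f\psi(x,u)+a_A(x)(f)u$, is condition (i); and axiom (iii), $\psi(x,[u,v]_A)=[\psi(x,u),v]_A+[u,\psi(x,v)]_A$, is the post-Lie relation \eqref{Post-1} applied to the post-Lie algebra $(\Gamma(A),[\cdot,\cdot]_A,\triangleright_A)$. (Together, axioms (i)--(iii) say that $\triangleright_A$ takes values in the derivation subalgebroid $\Der(A)\subset\dev A$; note also that the anchor of the Grossman-Larson Lie algebroid is the same $a_A$, so axiom (ii) is genuinely its anchor compatibility.)

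The only axiom calling for a computation is (iv), namely
\[
[x,y]_{\triangleright_A}\triangleright_A u=x\triangleright_A(y\triangleright_A u)-y\triangleright_A(x\triangleright_A u),\qquad x,y,u\in\Gamma(A).
\]
I would expand the left-hand side with \eqref{eq:subadj-algebroid}, $[x,y]_{\triangleright_A}=[x,y]_A+x\triangleright_A y-y\triangleright_A x$, and distribute $\triangleright_A$ over this sum (using only additivity of $\triangleright_A$ in its first slot). Then I would rewrite $[x,y]_A\triangleright_A u$ by the second post-Lie axiom \eqref{Post-22} as $\big(x\triangleright_A(y\triangleright_A u)-(x\triangleright_A y)\triangleright_A u\big)-\big(y\triangleright_A(x\triangleright_A u)-(y\triangleright_A x)\triangleright_A u\big)$. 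The terms $(x\triangleright_A y)\triangleright_A u$ and $(y\triangleright_A x)\triangleright_A u$ then cancel against the contributions of $x\triangleright_A y$ and $y\triangleright_A x$ coming from the first slot, leaving exactly the right-hand side. This completes the verification that $\triangleright_A$ is an action of $(A,[\cdot,\cdot]_{\triangleright_A},a_A)$ on $(A,[\cdot,\cdot]_A)$.

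For the second assertion, $\Id:A\to A$ is visibly a base-preserving bundle map, so it only remains to verify the relative Rota-Baxter identity, which for $B=\Id$ reads
\[
[u,v]_{\triangleright_A}=\big(u\triangleright_A v-v\triangleright_A u\big)+[u,v]_A,\qquad u,v\in\Gamma(A),
\]
and this is precisely the definition \eqref{eq:subadj-algebroid} of the Grossman-Larson bracket. Hence $\Id$ is a relative Rota-Baxter operator on the Grossman-Larson Lie algebroid with respect to the action $\triangleright_A$. I do not anticipate any real obstacle; the proof is a matter of matching up axioms, and the only things to keep in mind are that $(A,[\cdot,\cdot]_A)$ must be treated as a bundle of Lie algebras and that the anchor shared by the post-Lie algebroid and its Grossman-Larson Lie algebroid makes the anchor-compatibility axiom automatic.
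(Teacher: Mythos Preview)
Your proposal is correct and follows essentially the same approach as the paper: match the four axioms of Definition~\ref{defi:action-algebroid} against the two $C^\infty(M)$-compatibility conditions of Definition~\ref{defi:postLA} together with the two post-Lie identities \eqref{Post-1} and \eqref{Post-22}, and then read off the Rota-Baxter identity for $\Id$ directly from \eqref{eq:subadj-algebroid}. Your write-up is in fact more explicit than the paper's, which only lists the correspondences without spelling out the cancellation you perform for axiom~(iv).
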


  \begin{proof}

    Then Axioms (i)-(iv) in Definition \ref{defi:action-algebroid} for $\triangleright_A$ follows from  Axioms (i)-(ii) in Definition \ref{defi:postLA}, \eqref{Post-1} and \eqref{Post-2} respectively. Thus $\triangleright_A$ gives rise to an action.

    Finally, by \eqref{eq:subadj-algebroid}, the identity map $\Id:A\to A$ is a relative Rota-Baxter operator on the Grossman-Larson Lie algebroid $(A,[\cdot,\cdot]_{\triangleright_A},  a_A)$ with respect to the action $\triangleright_A$.
  \end{proof}

  Let $B:E\to A$ be a  relative Rota-Baxter operator on a Lie algebroid     $(A\stackrel{}{\longrightarrow}M,[\cdot,\cdot]_A,a_A)$ with respect to an action $\psi:\Gamma(A)\times \Gamma(E)\to \Gamma(E)$ on a bundle of Lie algebras $E\stackrel{ }{\longrightarrow}M$. Define $\triangleright_E:\Gamma(E)\otimes \Gamma(E)\to \Gamma(E)$ by
  \begin{eqnarray}
   u \triangleright_Ev=\psi(B(u),v),\quad\forall u,v\in\Gamma(E).
  \end{eqnarray}

  \begin{thm}\label{thm:RB-postalg}
    With the above notations, $(E,[\cdot,\cdot]_E,\triangleright_E,a_A\circ B)$ is a post-Lie algebroid.
  \end{thm}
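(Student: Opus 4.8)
The plan is to verify directly the four axioms of Definition \ref{defi:postLA} for the triple $(E,[\cdot,\cdot]_E,\triangleright_E)$ with anchor $a_E:=a_A\circ B$; this is the infinitesimal analogue of the group-level argument in Theorem \ref{thm:RB-post}. Since $[\cdot,\cdot]_E$ is already a $C^\infty(M)$-linear Lie algebra bracket on $\Gamma(E)$ by hypothesis, and since $a_E=a_A\circ B$ is a composition of vector bundle morphisms, hence again a vector bundle morphism, only the post-Lie identities \eqref{Post-1} and \eqref{Post-22} and the two Leibniz-type compatibilities (i), (ii) of Definition \ref{defi:postLA} remain to be checked. Throughout I will use that a base-preserving bundle map $B:E\to A$ induces a $C^\infty(M)$-module map $B:\Gamma(E)\to\Gamma(A)$, in particular an additive one.

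The Leibniz compatibilities and \eqref{Post-1} are immediate from the axioms of the action $\psi$. By Definition \ref{defi:action-algebroid}(ii), $u\triangleright_E(fv)=\psi(B(u),fv)=f\,\psi(B(u),v)+a_A(B(u))(f)\,v=f(u\triangleright_E v)+a_E(u)(f)\,v$. By $B(fu)=fB(u)$ and Definition \ref{defi:action-algebroid}(i), $(fu)\triangleright_E v=\psi(fB(u),v)=f\,\psi(B(u),v)=f(u\triangleright_E v)$. And by Definition \ref{defi:action-algebroid}(iii), $u\triangleright_E[v,w]_E=\psi(B(u),[v,w]_E)=[\psi(B(u),v),w]_E+[v,\psi(B(u),w)]_E=[u\triangleright_E v,w]_E+[v,u\triangleright_E w]_E$, which is \eqref{Post-1}.

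The substantive step, exactly as in the group case, is \eqref{Post-22}. Writing out the associator, $a_{\triangleright_E}(u,v,w)=\psi(B(u),\psi(B(v),w))-\psi(B(\psi(B(u),v)),w)$, and antisymmetrizing in $u$ and $v$: the ``curvature'' terms $\psi(B(u),\psi(B(v),w))-\psi(B(v),\psi(B(u),w))$ collapse to $\psi([B(u),B(v)]_A,w)$ by Definition \ref{defi:action-algebroid}(iv), while the remaining terms combine, using additivity of $B$ and of $\psi(-,w)$, into $-\psi\big(B(\psi(B(u),v)-\psi(B(v),u)),w\big)$. The defining relation of the relative Rota-Baxter operator, $[B(u),B(v)]_A=B\big(\psi(B(u),v)-\psi(B(v),u)+[u,v]_E\big)$, then rewrites $B(\psi(B(u),v)-\psi(B(v),u))=[B(u),B(v)]_A-B([u,v]_E)$, so the two $\psi([B(u),B(v)]_A,w)$ contributions cancel and one is left with $a_{\triangleright_E}(u,v,w)-a_{\triangleright_E}(v,u,w)=\psi(B([u,v]_E),w)=[u,v]_E\triangleright_E w$, which is \eqref{Post-22}. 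Hence all axioms of Definition \ref{defi:postLA} hold and $(E,[\cdot,\cdot]_E,\triangleright_E,a_A\circ B)$ is a post-Lie algebroid.

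The computation for \eqref{Post-22} is the only place requiring real attention, though it is not a deep obstacle: the point is simply to carry it out at the level of sections and to use the full $C^\infty(M)$-bilinearity of $\psi$ (Definition \ref{defi:action-algebroid}(i)) and the $C^\infty(M)$-linearity of $B$, rather than mere $\RR$-linearity, so that the regrouping of terms is legitimate. With that understood, the argument is a direct transcription of the proof of Theorem \ref{thm:RB-post}.
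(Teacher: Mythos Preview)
Your proof is correct and follows essentially the same route as the paper's: both verify the two function-linearity conditions and \eqref{Post-1} directly from the action axioms (i)--(iii), then obtain \eqref{Post-22} by combining Definition \ref{defi:action-algebroid}(iv) with the relative Rota-Baxter identity. One small quibble with your closing remark: $\psi$ is not $C^\infty(M)$-bilinear (axiom (ii) is a Leibniz rule, not linearity), but only additivity in each slot is actually needed for the regrouping in the \eqref{Post-22} computation, so this does not affect the argument.
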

  \begin{proof}
   For all $u,v\in\Gamma(E),~f\in C^\infty(M)$, by Axiom (i) in Definition \ref{defi:action-algebroid} and the fact that $B$ is a bundle map, we have
    $$
    (fu) \triangleright_Ev=\psi(B(fu),v)=\psi(fB(u),v)=f\psi(B(u),v)=f(u \triangleright_Ev).
    $$

    By Axiom (ii) in Definition \ref{defi:action-algebroid}, we have
    $$
    u \triangleright_Efv=\psi(B(u),fv)=f\psi(B(u),v)+a_A(B(u))(f)v=f(u\triangleright_Ev)+a_A(B(u))(f)v.
    $$

    By Axiom (iii) in Definition \ref{defi:action-algebroid}, we have
    \begin{eqnarray*}
   u \triangleright_E[v,w]_E=\psi(B(u),[v,w]_E)=[\psi(B(u),v),w]_E+[v,\psi(B(u),w)]_E=[u\triangleright_Ev,w]_E+[v, u\triangleright_Ew]_E.
    \end{eqnarray*}

    Finally,  by Axiom (iv) in Definition \ref{defi:action-algebroid}, we have
  \begin{eqnarray*}
   a_{\triangleright_E}(u,v,w)-a_{\triangleright_E}(v,u,w)&=&u\triangleright_{E}(v\triangleright_{E} w)-(u\triangleright_{E} v)\triangleright_{E} w  \\
   &&-v\triangleright_{E}(u\triangleright_{E} w)+(v\triangleright_{E} u)\triangleright_{E} w\\
   &=&\psi(B(u),\psi(B(v),w))-\psi(B(v),\psi(B(u),w))\\
   &&-\psi(B(\psi(B(u),v)),w)+\psi(B(\psi(B(v),u)),w)\\
   &=&\psi([B(u),B(v)]_A-B(\psi(B(u),v))+B(\psi(B(v),u)),w)\\
   &=&\psi(B[u,v]_E,w)=[u,v]_E\triangleright_Ew.
   \end{eqnarray*}

   Therefore, $(E,[\cdot,\cdot]_E,\triangleright_E,a_A\circ B)$ is a post-Lie algebroid.
  \end{proof}

  \begin{cor}
    Let $B:E\to A$ be a  relative Rota-Baxter operator on a Lie algebroid     $(A,[\cdot,\cdot]_A,a_A)$ with respect to an action $\psi:\Gamma(A)\otimes \Gamma(E)\to \Gamma(E)$ on a bundle of Lie algebras $E$. Then $(E,[\cdot,\cdot]_B,a_A\circ B)$ is a Lie algebroid, called the {\bf descendent Lie algebroid}, where the Lie bracket $[\cdot,\cdot]_B$ is given by
  \begin{equation}
    [u,v]_B=[u,v]_E+\psi(B(u),v)-\psi(B(v),u),\quad\forall u,v\in\Gamma(E).
  \end{equation}

  Moreover, $B:E\to A$ is a homomorphism from the descendent Lie algebroid $(E,[\cdot,\cdot]_B,a_A\circ B)$ to the Lie algebroid     $(A,[\cdot,\cdot]_A,a_A)$.
  \end{cor}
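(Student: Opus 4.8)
The plan is to deduce both assertions from Theorem \ref{thm:RB-postalg} together with the Grossman-Larson Lie algebroid construction, paralleling exactly how Corollary \ref{cor:descendent} was obtained from Theorem \ref{thm:RB-post} and Theorem \ref{thm:subadj} in the groupoid setting.

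First I would invoke Theorem \ref{thm:RB-postalg}: the relative Rota-Baxter operator $B$ equips $E$ with the post-Lie algebroid structure $(E,[\cdot,\cdot]_E,\triangleright_E,a_A\circ B)$, where $u\triangleright_E v=\psi(B(u),v)$. Applying the proposition on Grossman-Larson Lie algebroids (the result recalled above, following \cite[Proposition 2.23]{ML}) to this post-Lie algebroid yields a Lie algebroid $(E,[\cdot,\cdot]_{\triangleright_E},a_A\circ B)$ whose bracket, by \eqref{eq:subadj-algebroid}, is $[u,v]_{\triangleright_E}=[u,v]_E+u\triangleright_E v-v\triangleright_E u=[u,v]_E+\psi(B(u),v)-\psi(B(v),u)=[u,v]_B$. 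This is precisely the claimed bracket, so $(E,[\cdot,\cdot]_B,a_A\circ B)$ is a Lie algebroid; in particular the Jacobi identity for $[\cdot,\cdot]_B$ and the Leibniz rule with anchor $a_A\circ B$ require no separate verification, being part of the conclusion of the Grossman-Larson construction applied to the post-Lie algebroid above.

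For the ``moreover'' part I would check that $B$ is a base-preserving morphism of Lie algebroids. The anchor compatibility $a_A\circ B=a_A\circ B$ is immediate from the fact that the anchor of the descendent Lie algebroid is $a_A\circ B$ by construction, so it only remains to verify that $B$ intertwines the brackets. Using the definition of $[\cdot,\cdot]_B$ and then the defining identity of a relative Rota-Baxter operator, one gets $B([u,v]_B)=B\big([u,v]_E+\psi(B(u),v)-\psi(B(v),u)\big)=[B(u),B(v)]_A$ for all $u,v\in\Gamma(E)$, so $B$ is a Lie algebroid homomorphism.

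I do not anticipate a genuine obstacle: the statement is a direct corollary, and the only point requiring (minor) care is recognizing that the Grossman-Larson bracket \eqref{eq:subadj-algebroid} of the post-Lie algebroid produced by Theorem \ref{thm:RB-postalg} coincides on the nose with $[\cdot,\cdot]_B$, after which the relative Rota-Baxter relation gives the homomorphism property immediately.
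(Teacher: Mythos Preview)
Your proposal is correct and follows essentially the same route as the paper, which simply states that the corollary follows from Theorem~\ref{thm:RB-postalg} and Theorem~\ref{thm:postalg-RB} (the latter encoding the Grossman--Larson Lie algebroid construction). Your argument is in fact slightly more explicit: you spell out the identification $[\cdot,\cdot]_{\triangleright_E}=[\cdot,\cdot]_B$ and verify directly from the Rota--Baxter relation that $B$ is a Lie algebroid homomorphism, points the paper leaves to the reader.
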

  \begin{proof}
    It follows from Theorem \ref{thm:RB-postalg} and Theorem \ref{thm:postalg-RB} directly.
  \end{proof}

\subsection{Differentiation of post-Lie groupoids}

 The tangent space of a Lie group at the identity has a Lie algebra structure. As its geometrization, on the vector bundle $A:=\ker(\alpha_*)|_M\longrightarrow M$ from a Lie groupoid $\xymatrix{ (\G \ar@<0.5ex>[r]^{\alpha} \ar[r]_{\beta} & M},\cdot,\iota,\inv)$,    there is a Lie algebroid structure defined as follows \cite{Mkz:GTGA}:
the anchor map $a_A:A\longrightarrow TM$ is simply $\beta_*$ and  the Lie bracket $[u,v]_A$ is determined by
\[\overleftarrow{[u,v]_A}=[\overleftarrow{u},\overleftarrow{v}]_{T\G},\qquad \forall u,v\in \Gamma(A),\]
where $\overleftarrow{u}$ denotes the left-invariant vector field on $\G$ given by $\overleftarrow{u}_\gamma=L_{\gamma_*}u_{\beta(\gamma)}$.

The differentiation of a Lie group bundle  $\G\stackrel{\pi}{\longrightarrow}M$ is a Lie algebra bundle $A\longrightarrow M$, where the fiber $A_m$ is the Lie algebra of the Lie group $\G_m$, see \cite[Example 3.5.12]{Mkz:GTGA}. Let $\exp$ be the usual exponential map from Lie algebras to Lie groups. Then there is a well defined map, also denoted by $\exp$ from $\Gamma(A)$ to $\Gamma(\G)$ defined by
 $
 \exp(u)(m)=\exp(u_m),$ for all $u\in \Gamma(A), m\in M.
 $
 Obviously there holds:
 $
 u=\frac{d}{dt}\eval{t=0}\exp(tu).
 $

 Denote by $\Aut(A)$ the Lie groupoid whose space of objects is $M$, and the space of morphisms from $m$ to $n$ is Lie algebra isomorphisms from $A_n$ to $A_m$. The Lie algebroid of the Lie groupoid $\Aut(A)$ is exactly $\Der(A)$.

Let $(\G\stackrel{\pi}{\longrightarrow}M,\cdot,\Phi,\rhd)$ be a post-Lie groupoid. Then $L^\rhd_\gamma:\G_{\Phi(\gamma)}\to \G_{\pi(\gamma)}$ is an isomorphism of Lie groups. Taking the differentiation, for which we use the same notation, we obtain an isomorphism $L^\rhd_\gamma:A_{\Phi(\gamma)}\to A_{\pi(\gamma)}$. Consequently, by Theorem \ref{thm:subadj}, we obtain a Lie groupoid homomorphism  $L^\rhd$ from the  Lie groupoid $(\xymatrix{ \G \ar@<0.5ex>[r]^{\alpha=\pi} \ar[r]_{\beta=\Phi} & M},\star,\iota,\inv_\rhd)$ to $\Aut(A)$. 
 Denote the Lie algebroid of the Grossman-Larson Lie groupoid $(\xymatrix{ \G \ar@<0.5ex>[r]^{\alpha=\pi} \ar[r]_{\beta=\Phi} & M},\star,\iota,\inv_\rhd)$   by $(A\longrightarrow M,\Courant{\cdot,\cdot},\Phi_*)$. The Lie groupoid homomorphism $L^\rhd$ induces a Lie algebroid homomorphism $L^\rhd_*$ from $(A\longrightarrow M,\Courant{\cdot,\cdot},\Phi_*)$ to $\Der(A)$. In particular, there holds:
\begin{equation}\label{eq:anc}
  \Phi_*(u)=\frka(L^\rhd_*(u)),\quad\forall u\in \Gamma(A).
\end{equation}

 Define $\triangleright_A:\Gamma(A)\otimes \Gamma(A) \to \Gamma(A) $ by
\begin{equation}\label{eq:diffrhd}
u\triangleright_A v= L^\rhd_*(u)v,\quad\forall u,v\in\Gamma(A).
\end{equation}

\begin{thm}\label{thm:diff-post}
 Let $(\G\stackrel{\pi}{\longrightarrow}M,\cdot,\Phi,\rhd)$ be a post-Lie groupoid. Then $(A\longrightarrow M,[\cdot,\cdot]_A,\Phi_*,\triangleright_A)$ is a post-Lie algebroid, where $(A\longrightarrow M,[\cdot,\cdot]_A)$ is the Lie algebra bundle associated to the Lie group bundle $\G\stackrel{\pi}{\longrightarrow}M$, and $\triangleright_A$ is given by \eqref{eq:diffrhd}.
\end{thm}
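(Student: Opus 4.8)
The plan is to differentiate the three structures of the post-Lie groupoid and verify the four axioms of Definition \ref{defi:postLA} one at a time, using the already established differentiation machinery. Concretely, the Lie algebra bundle $(A\longrightarrow M,[\cdot,\cdot]_A)$ comes from the Lie group bundle $\G\stackrel{\pi}{\longrightarrow}M$; the operation $\triangleright_A$ is defined by $u\triangleright_A v=L^\rhd_*(u)v$ via \eqref{eq:diffrhd}; and the anchor is $\Phi_*$, which by \eqref{eq:anc} equals $\frka\circ L^\rhd_*$. The key external inputs are: (a) that $L^\rhd_*$ is a Lie algebroid homomorphism from the Grossman-Larson Lie algebroid $(A,[\cdot,\cdot]_A\ \text{replaced by}\ \Courant{\cdot,\cdot},\Phi_*)$ to $\Der(A)$, so in particular each $L^\rhd_*(u)$ is a derivation of $(\Gamma(A),[\cdot,\cdot]_A)$ with anchor $\Phi_*(u)$; and (b) the groupoid-level identities from Theorem \ref{thm:subadj}, chiefly that $\star$ on $\G$ has $\gamma\star\delta=\gamma\cdot(\gamma\rhd\delta)$ and that $L^\rhd$ is a groupoid homomorphism into $\Aut(A)$.

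First I would dispatch the two $C^\infty(M)$-linearity conditions of Definition \ref{defi:postLA}. Condition (ii), $(fu)\triangleright_A v=f(u\triangleright_A v)$, is immediate from the fact that $L^\rhd_*:\Gamma(A)\to\Der(A)$ is $C^\infty(M)$-linear in its argument (it is the anchor-free part of a Lie algebroid morphism, hence a bundle map on sections). Condition (i), $u\triangleright_A(fv)=f(u\triangleright_A v)+\Phi_*(u)(f)v$, is precisely the Leibniz rule for the covariant differential operator $L^\rhd_*(u)\in\Gamma(\dev A)$ together with the identification of its anchor $\frka(L^\rhd_*(u))$ with $\Phi_*(u)$ from \eqref{eq:anc}. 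Next, axiom \eqref{Post-1}, $u\triangleright_A[v,w]_A=[u\triangleright_A v,w]_A+[v,u\triangleright_A w]_A$, is exactly the statement that $L^\rhd_*(u)$ lands in $\Der(A)$ rather than merely $\dev A$, i.e. it respects the fibrewise Lie bracket $[\cdot,\cdot]_A$; this is built into the target $\Der(A)$ of the homomorphism $L^\rhd_*$.

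The remaining and genuinely substantive step is axiom \eqref{Post-22}, the "modified" relation $[u,v]_A\triangleright_A w=a_{\triangleright_A}(u,v,w)-a_{\triangleright_A}(v,u,w)$. Here I would exploit that $L^\rhd_*$ is a Lie algebroid homomorphism, so it intertwines the Grossman-Larson bracket $\Courant{\cdot,\cdot}$ on $A$ with the commutator bracket on $\Der(A)$: $L^\rhd_*(\Courant{u,v})=[L^\rhd_*(u),L^\rhd_*(v)]_{\dev}$. Now the Grossman-Larson Lie algebroid bracket is, by differentiating Theorem \ref{thm:subadj} (or by the post-Lie-algebroid formula \eqref{eq:subadj-algebroid} applied in reverse), $\Courant{u,v}=[u,v]_A+u\triangleright_A v-v\triangleright_A w$ — wait, $=[u,v]_A+u\triangleright_A v-v\triangleright_A u$. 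Applying $L^\rhd_*$ and evaluating on $w$ gives
\begin{eqnarray*}
[u,v]_A\triangleright_A w + (u\triangleright_A v)\triangleright_A w - (v\triangleright_A u)\triangleright_A w
&=& L^\rhd_*(u)\big(L^\rhd_*(v)w\big) - L^\rhd_*(v)\big(L^\rhd_*(u)w\big)\\
&=& u\triangleright_A(v\triangleright_A w) - v\triangleright_A(u\triangleright_A w),
\end{eqnarray*}
and rearranging yields precisely $[u,v]_A\triangleright_A w=a_{\triangleright_A}(u,v,w)-a_{\triangleright_A}(v,u,w)$. The main obstacle is making the identification of the Grossman-Larson Lie algebroid bracket $\Courant{\cdot,\cdot}$ rigorous: one must check that differentiating the groupoid multiplication $\gamma\star\delta=\gamma\cdot(\gamma\rhd\delta)$ of Theorem \ref{thm:subadj} produces exactly $[u,v]_A+u\triangleright_A v-v\triangleright_A u$ on left-invariant vector fields. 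This is a computation with left-invariant vector fields on $\G$: the $\cdot$-part contributes $[u,v]_A$ by definition of the Lie group bundle bracket, and the cross terms coming from $\gamma\rhd\delta$ contribute the $\triangleright_A$ terms through \eqref{eq:diffrhd}; I would carry this out via the exponential map, using $u=\frac{d}{dt}\big|_{t=0}\exp(tu)$ and bilinearizing $\star$. Once this bracket identity is in hand, the modified axiom follows formally as above, completing the proof.
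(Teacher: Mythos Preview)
Your proposal is correct, and the overall logic is sound. The route you take, however, is organized somewhat differently from the paper's. The paper proceeds by writing out the exponential formula $(u\triangleright_A v)_m=\frac{d}{dt}\big|_{t=0}\frac{d}{ds}\big|_{s=0}L^\rhd_{\exp(tu_m)}\exp(sv_{\Phi(\exp(tu_m))})$ and then invokes the computation of \cite[Theorem~4.3]{PostG} (differentiation of a post-group to a post-Lie algebra) to obtain the post-Lie identity $u\triangleright_A(v\triangleright_A w)-v\triangleright_A(u\triangleright_A w)=[u,v]_{\triangleright_A}\triangleright_A w$ on sections directly; the $C^\infty(M)$-linearity conditions are then checked separately exactly as you do. By contrast, you work structurally through the Lie algebroid homomorphism $L^\rhd_*:(A,\Courant{\cdot,\cdot},\Phi_*)\to\Der(A)$: this immediately delivers \eqref{Post-1} and both $C^\infty(M)$-conditions, and reduces \eqref{Post-22} to the single bracket identification $\Courant{u,v}=[u,v]_A+u\triangleright_A v-v\triangleright_A u$, which you correctly flag as the remaining computation. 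Your packaging is cleaner and makes the dependence on each hypothesis more transparent; the paper's version is more hands-on but outsources the heart of the computation to \cite{PostG}. Both approaches ultimately rest on the same differentiation of $\gamma\star\delta=\gamma\cdot(\gamma\rhd\delta)$, so the content is the same.
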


\begin{proof}
Using the weak post-Lie group structure associated to a post-Lie groupoid given in Theorem \ref{thm:bisection}, we have
$$
u\triangleright_A v= \frac{d}{dt}\eval{t=0} \frac{d}{ds}\eval{s=0} L^\rhd_{\exp(tu)}\exp(sv),\quad\forall u,v\in\Gamma(A).
$$
More precisely,
$$
(u\triangleright_A v)_m= \frac{d}{dt}\eval{t=0} L^\rhd_{\exp(tu_m)}v_{\Phi(\exp(tu_m))}=\frac{d}{dt}\eval{t=0}  \frac{d}{ds}\eval{s=0} L^\rhd_{\exp(tu_m)}\exp(sv_{\Phi(\exp(tu_m))}),\quad\forall u,v\in\Gamma(A).
$$
Then similar as the computation in \cite[Theorem 4.3]{PostG}, we have
$$
u\triangleright_A (v\triangleright_A w)- v\triangleright_A (u\triangleright_A w)=[u,v]_{\triangleright_A}\triangleright_A w,
$$
which implies that $(\Gamma(A),[\cdot,\cdot]_A,\triangleright_A)$ is a post-Lie algebra.

For all $u,v\in\Gamma(A)$ and $f\in C^\infty(M)$, by \eqref{eq:anc}, we have
\begin{eqnarray*}
  (fu)\triangleright_A v&=&L^\rhd_*(fu)v=fL^\rhd_*(u)v=f(u\triangleright_Av),\\
  u\triangleright_A(fv)&=&L^\rhd_*(u)(fv)=fL^\rhd_*(u)v+\frka(L^\rhd_*(u))(f)v=f(u\triangleright_A v)+\Phi_*(u)(f)v,
\end{eqnarray*}
which implies that  $(A\longrightarrow M,[\cdot,\cdot]_A,\Phi_*,\triangleright_A)$ is a post-Lie algebroid.
\end{proof}

Let $\Phi:M\times G\to M$ be a right action of a Lie group $G$ on $M$. Then it induces a left action $\phi:\g\to\frkX(M)$ of the corresponding Lie algebra $\g=T_eG$ on $M$ via\footnote{In general, a right action of a Lie group induces a right action of the corresponding Lie algebra. Here we consider the Lie algebra to be obtained by right invariant vector fields, so that a right action of a Lie group induce a left action of the corresponding Lie algebra.}
$$
\phi(u)_m=\Phi_{*m}(u)=\frac{d}{dt}\eval{t=0}\Phi(m,\exp(tu)),\quad\forall u\in\g,~m\in M.
$$
\begin{pro}
  The differentiation of the MKW post-Lie groupoid obtained from a Lie group action given in Proposition \ref{ex:post-g} is the MKW post-Lie algebroid given in Example \ref{ex:action-post-algebroid}.
  \end{pro}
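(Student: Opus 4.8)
The plan is to verify that the three ingredients of the post-Lie algebroid obtained from the MKW post-Lie groupoid $(M\times G\stackrel{\pi}{\longrightarrow}M,\cdot,\Phi,\rhd)$ via Theorem~\ref{thm:diff-post} agree, one at a time, with those of the MKW post-Lie algebroid $(M\times\g,[\cdot,\cdot]_\g,\triangleright_A,a_A)$ of Example~\ref{ex:action-post-algebroid}: namely the underlying Lie algebra bundle, the anchor, and the post-product. The first match is essentially definitional: the Lie group bundle underlying the post-groupoid of Proposition~\ref{ex:post-g} is the trivial bundle $M\times G\stackrel{\pi}{\longrightarrow}M$, whose associated Lie algebra bundle is $M\times\g$ with the $C^\infty(M)$-linear bracket $[fu,gv]_\g=fg[u,v]_\g$ (see \cite[Example 3.5.12]{Mkz:GTGA}), which is exactly the bracket appearing in Example~\ref{ex:action-post-algebroid}.

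For the anchor, I would recall from the example following Theorem~\ref{thm:subadj} that the Grossman-Larson groupoid of this post-groupoid is the action groupoid $M\times G\rightrightarrows M$ with target $\beta=\Phi$, so that the anchor $\Phi_*$ of the Grossman-Larson Lie algebroid equals $\beta_*$. Then, writing $u\in\Gamma(A)=C^\infty(M,\g)$ and $\exp(tu_m)=(m,\exp(tu(m)))$, a direct computation gives
\[
\Phi_*(u)_m=\frac{d}{dt}\Big|_{t=0}\Phi\big(m,\exp(tu(m))\big)=\phi\big(u(m)\big)_m=a_A\big(u(m)\big),
\]
by the very definition of the Lie algebra action $\phi:\g\to\frkX(M)$ induced by $\Phi$; hence $\Phi_*=a_A$ as in \eqref{action1}.

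The only substantial point is the post-product. Using the differentiated formula of Theorem~\ref{thm:diff-post}, the extension \eqref{formula:weak2} of $\rhd$ to sections, and the explicit rule \eqref{fomula:post-action}, one sees that $L^\rhd_{\exp(tu_m)}$ is literally the identity on the $G$-fibre, so that
\[
L^\rhd_{\exp(tu_m)}\exp\big(sv_{\Phi(\exp(tu_m))}\big)=\Big(m,\exp\big(s\,v(\Phi(m,\exp(tu(m))))\big)\Big).
\]
Differentiating in $s$ at $0$ and then in $t$ at $0$, and invoking once more $\frac{d}{dt}\big|_{t=0}\Phi(m,\exp(tu(m)))=\phi(u(m))_m$ together with the chain rule applied to the $\g$-valued map $v$, one obtains that $u\triangleright_A v$ is the derivative of the section $v$ along the vector field $a_A(u)$; that is, $(u\triangleright_A v)_m$ is the directional derivative of $v$ at $m$ in the direction $\phi(u(m))_m$. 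Specializing to $u=fu_0$ and $v=gv_0$ with $u_0,v_0\in\g$ and $f,g\in C^\infty(M)$, and using that $v_0$ is a constant section, this collapses to $(fu_0)\triangleright_A(gv_0)=f\,\phi(u_0)(g)\,v_0$, which is exactly \eqref{eq:act-post-Lie-product}. Having matched all three structure maps, the two post-Lie algebroids coincide.

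The one place that requires care is the bookkeeping in the iterated $\frac{d}{dt}\frac{d}{ds}$ derivative, together with the convention recorded in the footnote above that $\g$ is realized via right-invariant vector fields, so that the right $G$-action $\Phi$ induces a left $\g$-action $\phi$; once this sign/side convention is fixed, all the computations above are routine.
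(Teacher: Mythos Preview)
Your proof is correct and follows essentially the same approach as the paper: both verify the Lie algebra bundle, the anchor, and the post-product separately, with the key step being the double derivative $\frac{d}{dt}\big|_{t=0}\frac{d}{ds}\big|_{s=0}L^\rhd_{\exp(tu_m)}\exp(sv_{\Phi(\exp(tu_m))})$ computed via the explicit formula \eqref{fomula:post-action}. The only minor difference is that the paper evaluates this derivative on \emph{constant} sections $u,v\in\g$ (obtaining $0$) and then invokes Axioms (i) and (ii) of Definition~\ref{defi:postLA} to recover \eqref{eq:act-post-Lie-product}, whereas you compute directly with general sections and then specialize to $fu_0,gv_0$; both routes are equally short.
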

  \begin{proof}
    First it is obvious that the differentiation of the trivial Lie group bundle $M\times G\to M$ is the trivial Lie algebra bundle $A=M\times \g\to M.$

    It is also straightforward to see that the anchor map $a_A(u)_m=\phi(u)_m=\Phi_{*m}(u)$.

    For all $u,v\in\g$, which are viewed as constant sections of $\Gamma(A)$, we have
    \begin{eqnarray*}
  (u\triangleright_A v)_m  &=&\frac{d}{dt}\eval{t=0}  \frac{d}{ds}\eval{s=0} L^\rhd_{\exp(tu_m)}\exp(sv_{\Phi(\exp(tu_m))})\\
  &=& \frac{d}{dt}\eval{t=0} \frac{d}{ds}\eval{s=0}(m,\exp(tu_m))\rhd (\Phi(\exp(tu_m)),\exp(sv_{\Phi(\exp(tu_m))}))\\
   &=& \frac{d}{dt}\eval{t=0} \frac{d}{ds}\eval{s=0}(m,\exp(sv_{\Phi(\exp(tu_m))}))\\
   &=&\frac{d}{dt}\eval{t=0}  (m, v_{\Phi(\exp(tu_m))})=0.
       \end{eqnarray*}
       Therefore, by Axioms (i) and (ii) in Definition \ref{defi:postLA}, the induced post-Lie product must be given by \eqref{eq:act-post-Lie-product}, which implies that the differentiation of the post-Lie groupoid given in Example \ref{ex:post-g} is the post-Lie algebroid given in Example \ref{ex:action-post-algebroid}.
  \end{proof}

  \begin{pro}
    The differentiation of the post-Lie groupoid given in Proposition \ref{pro:act-post-groupoid} is the post-Lie algebroid given in  Proposition \ref{pro:actionPL}.
  \end{pro}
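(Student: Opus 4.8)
The plan is to mirror the proof of the preceding proposition (the differentiation of the MKW post-Lie groupoid of a Lie group action), computing each ingredient of the post-Lie algebroid produced by Theorem \ref{thm:diff-post} and matching it with the data of Proposition \ref{pro:actionPL}.

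First I would pin down the underlying bundle and its bracket: the differentiation of the trivial Lie group bundle $M\times G\to M$ is the trivial Lie algebra bundle $A=M\times\g\to M$ with fibrewise bracket $[\cdot,\cdot]_\g$, i.e. the $C^\infty(M)$-linear bracket $[fu,gv]_\g=fg[u,v]_\g$ used in Proposition \ref{pro:actionPL}. For the anchor, recall from Proposition \ref{pro:act-post-groupoid} that the Grossman-Larson groupoid of $(M\times G\stackrel{\pi}{\longrightarrow}M,\cdot,\Phi,\bar\rhd)$ is exactly the action groupoid of the Grossman-Larson group $(G,\star)$ on $M$; hence its Lie algebroid is the action Lie algebroid of the differentiated action. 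With the right-invariant convention of the footnote, the right $\star$-action of $(G,\star)$ differentiates to a left Lie algebra action $\phi\colon(\g,[\cdot,\cdot]_{\triangleright_\g})\to\frkX(M)$, which is precisely the action of the post-Lie algebra $(\g,[\cdot,\cdot]_\g,\triangleright_\g)$ featuring in Proposition \ref{pro:actionPL}. Thus $\Phi_*(u)_m=\phi(u)_m$, matching \eqref{action1}.

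Next I would compute $\triangleright_A$ on constant sections. Using the formula from the proof of Theorem \ref{thm:diff-post} together with the explicit post-product $(m,g)\bar\rhd(n,h)=(m,g\rhd h)$, for $u,v\in\g$ regarded as constant sections of $A$ one gets
$$(u\triangleright_A v)_m=\frac{d}{dt}\eval{t=0}\frac{d}{ds}\eval{s=0}L^{\bar\rhd}_{\exp(tu_m)}\exp(sv_{\Phi(\exp(tu_m))})=\frac{d}{dt}\eval{t=0}\frac{d}{ds}\eval{s=0}\big(m,\,\exp(tu)\rhd\exp(sv)\big),$$
so that $(u\triangleright_A v)_m=u\triangleright_\g v$, the last step being precisely the differentiation of the post-Lie group $(G,\cdot,\rhd)$ as in \cite[Theorem 4.3]{PostG} (here one uses that $L^\rhd_{\exp(tu)}$ is a group endomorphism of $(G,\cdot)$, so the $s$-curve passes through the unit at $s=0$).

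Finally I would extend to arbitrary sections using that, by Theorem \ref{thm:diff-post}, $(A,[\cdot,\cdot]_\g,\Phi_*,\triangleright_A)$ is already known to be a post-Lie algebroid. Axioms (i)--(ii) of Definition \ref{defi:postLA} then force
$$(fu)\triangleright_A(gv)=f\big(u\triangleright_A(gv)\big)=f\big(g(u\triangleright_A v)+\Phi_*(u)(g)v\big)=fg\,(u\triangleright_\g v)+f\,\phi(u)(g)\,v,$$
which is exactly \eqref{action21}. Together with the identifications of the bracket and the anchor, this shows that the differentiation of the post-Lie groupoid of Proposition \ref{pro:act-post-groupoid} is the action post-Lie algebroid of Proposition \ref{pro:actionPL}. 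The only genuinely delicate point is the bookkeeping of the various exponential maps and of the left/right-invariance conventions, needed so that the double derivative really reproduces $u\triangleright_\g v$ and the anchor really reproduces the left $\g$-action $\phi$; once these are fixed coherently (as in the footnote) the remaining verifications are routine.
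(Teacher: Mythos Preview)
Your proposal is correct and follows essentially the same approach as the paper: compute $\triangleright_A$ on constant sections via the double derivative formula to obtain $(u\triangleright_A v)_m=(m,u\triangleright_\g v)$, and then invoke Axioms (i)--(ii) of Definition \ref{defi:postLA} to recover \eqref{action21}. The paper is terser (it omits the explicit discussion of the anchor and bracket, referring back to the preceding proposition), but the logic and the key computation are identical.
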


  \begin{proof}
  Similar as the proof of the above proposition,  for all $u,v\in\g$, which are viewed as constant sections of $\Gamma(A)$, we have
    \begin{eqnarray*}
  (u\triangleright_A v)_m  &=&\frac{d}{dt}\eval{t=0}  \frac{d}{ds}\eval{s=0} L^{\bar{\rhd}}_{\exp(tu_m)}\exp(sv_{\Phi(\exp(tu_m))})\\
  &=& \frac{d}{dt}\eval{t=0} \frac{d}{ds}\eval{s=0}(m,\exp(tu_m))\bar{\rhd} (\Phi(\exp(tu_m)),\exp(sv_{\Phi(\exp(tu_m))}))\\
   &=& \frac{d}{dt}\eval{t=0} \frac{d}{ds}\eval{s=0}(m,\exp(tu)\rhd \exp(sv))=(m, u\triangleright_\g v).
       \end{eqnarray*}
        Therefore, by Axioms (i) and (ii) in Definition \ref{defi:postLA}, the induced post-Lie product must be given by \eqref{action21}, which implies that the differentiation of the post-Lie groupoid given in Proposition \ref{pro:act-post-groupoid} is the post-Lie algebroid given in  Proposition \ref{pro:actionPL}.
  \end{proof}

\noindent
{\bf Acknowledgements. } This research is supported by NSFC (12371029,12471060).

\end{document}